\newlang{\StateHSP}{StateHSP}
\newlang{\HSP}{HSP}
\newlang{\LWE}{LWE}
\newlang{\LPN}{LPN}
\newlang{\SkewLPN}{SkewLPN}
\newlang{\negl}{negl}
\newlang{\HiddenCut}{HiddenCut}
\def\C{{\mathbb{C}}} 
\def\Z{{\mathbb{Z}}} 
\newcommand{\bbS}{\mathbb{S}} 
\renewcommand{\Pr}{\mathop{\mathbb{P}\/}}
\renewcommand{\E}{\mathop{\mathbb{E}\/}}
\newcommand{\Var}{\mathop{\bf Var\/}}
\newcommand{\Cov}{\mathop{\bf Cov\/}}
\newcommand{\Ber}{\mathrm{Ber}} 
\NewCommandCopy{\dashl}{\l} 
\renewcommand{\l}{\ell} 
\newcommand{\B}{\{0,1\}}
\newcommand{\calF}{\mathcal{F}}
\newcommand{\calH}{\mathcal{H}}
\newcommand{\calP}{\mathcal{P}}
\newcommand{\U}{\mathrm{U}}             
\DeclareMathOperator{\cyc}{\mathrm{cyc}}
\DeclareMathOperator{\Irrop}{Irr}
\newcommand{\Haar}{\mathrm{Haar}}     
\newcommand{\Irr}[1]{\Irrop\left[#1\right]}       
\newcommand{\dtr}{\mathrm{D}_{\mathrm{tr}}}   
\newcommand{\swapjohn}{\mathsf{SWAP}}  
\newcommand{\anc}{\mathrm{Anc}}
\newlength{\gridboxsize}
\newtheoremstyle{thrmstyle}
  {0pt} 
  {0pt} 
  {\em} 
  {} 
  {\bfseries} 
  {.} 
  {.5em} 
  {} 
\theoremstyle{thrmstyle}
\newtheorem{theorem}{Theorem}
\newtheorem*{theorem*}{Theorem}
\newtheorem{corollary}{Corollary}
\newtheorem{lemma}{Lemma}
\newtheorem{proposition}{Proposition}
\newtheorem{notation}{Notation}
\newtheorem{fact}{Fact}[section]
\newtheorem{definition}{Definition}
\newtheorem*{definition*}{Definition}
\newenvironment{customdefinition}[1]
  {\customdefinitionaux}
  {\endcustomdefinitionaux}
\newenvironment{customtheorem}[1]
  {\customtheoremaux}
  {\endcustomtheoremaux}
\newenvironment{customcorollary}[1]
  {\customcorollaryaux}
  {\endcustomcorollaryaux}
\newtheoremstyle{definition_new}
  {2pt} 
  {2pt} 
  {} 
  {} 
  {\bfseries} 
  {.} 
  {.5em} 
  {} 
\theoremstyle{definition_new}
\newtheorem*{construction*}{Construction}
\newtheorem*{observation*}{Observation}
\newtheorem*{remark}{Remark}
\renewenvironment{proof}{\emph{\bfseries Proof.}}{\qed\vspace{2pt}}
\titleformat*{\section}{\large\bfseries}
\titlespacing*{\section}{0pt}{2\baselineskip}{1\baselineskip}
\titleformat*{\subsection}{\normalsize\bfseries}
\titleformat*{\subsubsection}{\normalsize\bfseries}
\titleformat*{\paragraph}{\normalsize\bfseries}
\titleformat*{\subparagraph}{\normalsize\bfseries}
\definecolor{colorone}{rgb}{0.92, 0.45, 0.4}
\definecolor{colortwo}{rgb}{0.15, 0.85, 0.1}
\newcommand{\abelianactionmini}{\tikz[scale=0.5]{
    \def\colsep{0.4}
    \foreach \col in {1,5,6,8} {
        \node (Z\col) at (\col * \colsep + 0.02, 0) {$1$};
        \node[blue, font=\tiny] (I\col) at (\col * \colsep, -1.3) {$\col$};
        \draw[-{Stealth[length=1mm]}, blue, thin] (I\col) -- (Z\col);
    }
    \foreach \col in {2,3,4,7} {
        \node (Z\col) at (\col * \colsep, 0) {$0$};
    }
}
}
\newcommand{\abelianactionzero}{\tikz[scale=0.5]{
        \def\nRows{4}
        \def\nCols{8}
        \def\rdelta{0.8}
        \def\cdelta{0.8}
        \def\lwidth{6}
        \foreach \row in {1,...,\nRows} {
            \draw[line width=\lwidth pt, line cap=round, colorone!35] (1*\cdelta, -\row*\rdelta) -- (\cdelta*\nCols/2, -\row*\rdelta);
            \draw[line width=\lwidth pt, line cap=round, colortwo!35] (\cdelta*\nCols/2 + \cdelta*1, -\row*\rdelta) -- (\cdelta*\nCols, -\row*\rdelta);
        }

        \foreach \row in {1,...,\nRows} {
            \foreach \col in {1,...,\nCols} {
                \node[circle, draw, fill=black, inner sep=0.9pt] (B\row\col) at (\cdelta*\col, -\row*\rdelta) {};
            }
        }
}}
\newcommand{\abelianactionone}{\tikz[scale=0.5]{
        \def\nRows{4}
        \def\nCols{8}
        \def\rdelta{0.8}
        \def\cdelta{0.8}
        \def\lwidth{6}
        \foreach \row in {1,...,\nRows} {
            \draw[line width=\lwidth pt, line cap=round, colorone!35] (\cdelta, -\row*\rdelta) -- (\cdelta*\nCols/2, -\row*\rdelta);
            \draw[line width=\lwidth pt, line cap=round, colortwo!35] (\cdelta*\nCols/2 + \cdelta*1, -\row*\rdelta) -- (\cdelta*\nCols, -\row*\rdelta);
        }
        \foreach \row in {1,...,\nRows} {
            \foreach \col in {1,...,\nCols} {
                \node[circle, draw, fill=black, inner sep=0.9pt] (B\row\col) at (\cdelta*\col, -\row*\rdelta) {};
            }
        }
        \foreach \col in {1, 5, 6, 8}{
            \foreach \row in {1,...,\nRows}{
                \ifodd\row
                \draw[-{Latex[length=1.4mm]}, bend right=45, blue, line width=0.7pt] (\cdelta*\col,-\row*\rdelta) to (\cdelta*\col,-\row*\rdelta-1*\rdelta);
                \else
                \draw[-{Latex[length=1.4mm]}, bend right=45, blue, line width=0.7pt] (\cdelta*\col,-\row*\rdelta) to (\cdelta*\col,-\row*\rdelta+1*\rdelta);
                \fi
            }
            \node[blue, font=\tiny] at (\cdelta*\col, -\rdelta*\nRows - 0.5) {$R_\col$};
        }
}}
\newcommand{\abelianactiontwo}{\tikz[scale=0.5]{
        \def\nRows{4}
        \def\nCols{8}
        \def\rdelta{0.8}
        \def\cdelta{0.8}
        \def\lwidth{6}
        \foreach \row in {1,...,\nRows} {
            \ifodd\row
                \draw[line width=\lwidth pt, line cap=round, colorone!35] (\cdelta, -\row*\rdelta) -- (\cdelta*2, -\row*\rdelta-1*\rdelta) -- (\cdelta*\nCols/2, -\row*\rdelta-1*\rdelta);
                \draw[line width=\lwidth pt, line cap=round, colortwo!35] (\cdelta*\nCols/2 + \cdelta*1, -\row*\rdelta) -- (\cdelta*\nCols/2 + \cdelta*2, -\row*\rdelta) -- (\cdelta*\nCols/2 + \cdelta*3, -\row*\rdelta-\rdelta) -- (\cdelta*\nCols, -\row*\rdelta);
            \else
                \draw[line width=\lwidth pt, line cap=round, colorone!35] (\cdelta, -\row*\rdelta) -- (\cdelta*2, -\row*\rdelta+1*\rdelta) -- (\cdelta*\nCols/2, -\row*\rdelta+1*\rdelta);
                \draw[line width=\lwidth pt, line cap=round, colortwo!35] (\cdelta*\nCols/2 + \cdelta, -\row*\rdelta) -- (\cdelta*\nCols/2 + \cdelta*2, -\row*\rdelta) -- (\cdelta*\nCols/2 + \cdelta*3, -\row*\rdelta+\rdelta) -- (\cdelta*\nCols, -\row*\rdelta);
            \fi
        }

        \foreach \row in {1,...,\nRows} {
            \foreach \col in {1,...,\nCols} {
                \node[circle, draw, fill=black, inner sep=0.9pt] (B\row\col) at (\cdelta*\col, -\row*\rdelta) {};
            }
        }
}}
\newcommand{\abelianactionthree}{\tikz[scale=0.5]{
        \def\nRows{4}
        \def\nCols{8}
        \def\rdelta{0.8}
        \def\cdelta{0.8}
        \def\lwidth{6}
        \foreach \row in {1,...,\nRows} {
            \draw[line width=\lwidth pt, line cap=round, colorone!35] (\cdelta, -\row*\rdelta) -- (\cdelta*\nCols/2, -\row*\rdelta);
            \draw[line width=\lwidth pt, line cap=round, colortwo!35] (\cdelta*\nCols/2 + \cdelta*1, -\row*\rdelta) -- (\cdelta*\nCols, -\row*\rdelta);
        }
        \foreach \row in {1,...,\nRows} {
            \foreach \col in {1,...,\nCols} {
                \node[circle, draw, fill=black, inner sep=0.9pt] (B\row\col) at (\cdelta*\col, -\row*\rdelta) {};
            }
        }
        \foreach \col in {1,...,4}{
            \foreach \row in {1,...,\nRows}{
                \ifodd\row
                \draw[-{Latex[length=1.4mm]}, bend right=45, blue, line width=0.7pt] (\cdelta*\col,-\row*\rdelta) to (\cdelta*\col,-\row*\rdelta-1*\rdelta);
                \else
                \draw[-{Latex[length=1.4mm]}, bend right=45, blue, line width=0.7pt] (\cdelta*\col,-\row*\rdelta) to (\cdelta*\col,-\row*\rdelta+1*\rdelta);
                \fi
            }
        }
}}
\newcommand{\abelianactionfour}{\tikz[scale=0.5]{
        \def\nRows{4}
        \def\nCols{8}
        \def\rdelta{0.8}
        \def\cdelta{0.8}
        \def\lwidth{6}
        \foreach \row in {1,...,\nRows} {
            \draw[line width=\lwidth pt, line cap=round, colorone!35] (\cdelta, -\row*\rdelta) -- (\cdelta*\nCols/2, -\row*\rdelta);
            \draw[line width=\lwidth pt, line cap=round, colortwo!35] (\cdelta*\nCols/2 + \cdelta*1, -\row*\rdelta) -- (\cdelta*\nCols, -\row*\rdelta);
        }
        \foreach \row in {1,...,\nRows} {
            \foreach \col in {1,...,\nCols} {
                \node[circle, draw, fill=black, inner sep=0.9pt] (B\row\col) at (\cdelta*\col, -\row*\rdelta) {};
            }
        }
        \foreach \col in {5,...,8}{
            \foreach \row in {1,...,\nRows}{
                \ifodd\row
                \draw[-{Latex[length=1.4mm]}, bend right=45, blue, line width=0.7pt] (\cdelta*\col,-\row*\rdelta) to (\cdelta*\col,-\row*\rdelta-1*\rdelta);
                \else
                \draw[-{Latex[length=1.4mm]}, bend right=45, blue, line width=0.7pt] (\cdelta*\col,-\row*\rdelta) to (\cdelta*\col,-\row*\rdelta+1*\rdelta);
                \fi
            }
        }
}}
\newcommand{\abelianactionfive}{\tikz[scale=0.5]{
        \def\nRows{4}
        \def\nCols{8}
        \def\rdelta{0.8}
        \def\cdelta{0.8}
        \def\lwidth{6}
        \foreach \row in {1,...,\nRows} {
            \draw[line width=\lwidth pt, line cap=round, colorone!35] (\cdelta, -\row*\rdelta) -- (\cdelta*\nCols/2, -\row*\rdelta);
            \draw[line width=\lwidth pt, line cap=round, colortwo!35] (\cdelta*\nCols/2 + \cdelta*1, -\row*\rdelta) -- (\cdelta*\nCols, -\row*\rdelta);
        }
        \foreach \row in {1,...,\nRows} {
            \foreach \col in {1,...,\nCols} {
                \node[circle, draw, fill=black, inner sep=0.9pt] (B\row\col) at (\cdelta*\col, -\row*\rdelta) {};
            }
        }
        \foreach \col in {1,...,8}{
            \foreach \row in {1,...,\nRows}{
                \ifodd\row
                \draw[-{Latex[length=1.4mm]}, bend right=45, blue, line width=0.7pt] (\cdelta*\col,-\row*\rdelta) to (\cdelta*\col,-\row*\rdelta-1*\rdelta);
                \else
                \draw[-{Latex[length=1.4mm]}, bend right=45, blue, line width=0.7pt] (\cdelta*\col,-\row*\rdelta) to (\cdelta*\col,-\row*\rdelta+1*\rdelta);
                \fi
            }
        }
}}
\newcommand{\cartoonone}{\tikz[scale=0.7,baseline=0pt]{
        \def\cdelta{0.7}
        \def\nCols{8}
        \def\arch{0.6}
        \def\thickness{6pt}
        \def\dotsize{1pt}
        \draw[line width=8pt, line cap=round, colorone!35] (1*\cdelta, 0) -- (\cdelta*\nCols/2, 0);
        \draw[line width=8pt, line cap=round, colortwo!35] (\cdelta*\nCols/2 + 1*\cdelta, 0) -- (\cdelta*\nCols, 0);
        \foreach \col in {1,...,\nCols} {
                \node[circle, draw, fill=black, inner sep=\dotsize] (B\col) at (\cdelta*\col, 0) {};
        }
        \node[purple] at (2.5*\cdelta, 0.35) {\tiny $|\phi_1\rangle$};
        \node[teal] at (6.5*\cdelta, -0.35) {\tiny $|\phi_2\rangle$};
}}
\newcommand{\cartoontwo}{\tikz[scale=0.7]{
        \def\cdelta{0.7}
        \def\nCols{8}
        \def\arch{0.6}
        \def\thickness{6pt}
        \def\dotsize{1pt}
            \draw[line width=\thickness, line cap=round, colorone!35] (1*\cdelta, 0)..controls ++(\cdelta,\arch)..(3*\cdelta,0)..controls ++(\cdelta, \arch)..(5*\cdelta, 0)..controls ++(\cdelta, \arch)..(7*\cdelta, 0);
            \draw[line width=\thickness, line cap=round, colortwo!35] (2*\cdelta, 0)..controls ++(\cdelta,-\arch)..(4*\cdelta,0)..controls ++(\cdelta, -\arch)..(6*\cdelta, 0)..controls ++(\cdelta, -\arch)..(8*\cdelta, 0);
        \foreach \col in {1,...,\nCols} {
                \node[circle, draw, fill=black, inner sep=\dotsize] (B\col) at (\cdelta*\col, 0) {};
        }
        \node[purple] at (4*\cdelta, 1.25*\arch) {\tiny $|\phi_1\rangle$};
        \node[teal] at (5*\cdelta, -1.25*\arch) {\tiny $|\phi_2\rangle$};
}}
\newcommand{\cartoonthree}{\tikz[scale=0.7]{
        \def\cdelta{0.7}
        \def\nCols{8}
        \def\arch{0.6}
        \def\thickness{6pt}
        \def\dotsize{1pt}
            \draw[line width=\thickness, line cap=round, colorone!35] (1*\cdelta, 0) -- (2*\cdelta, 0)..controls ++(1.5*\cdelta, \arch)..(5*\cdelta, 0) -- (6*\cdelta, 0);
            \draw[line width=\thickness, line cap=round, colortwo!35] (3*\cdelta, 0) -- (4*\cdelta, 0)..controls ++(1.5*\cdelta, -\arch)..(7*\cdelta, 0) -- (8*\cdelta, 0);
        \foreach \col in {1,...,\nCols} {
                \node[circle, draw, fill=black, inner sep=\dotsize] (B\col) at (\cdelta*\col, 0) {};
        }
        \node[purple] at (3.5*\cdelta, 1.25*\arch) {\tiny $|\phi_1\rangle$};
        \node[teal] at (5.5*\cdelta, -1.25*\arch) {\tiny $|\phi_2\rangle$};
}}
\newcommand{\cartoonfour}{\tikz[scale=0.7]{
        \def\cdelta{0.7}
        \def\nCols{8}
        \def\arch{0.6}
        \def\thickness{6pt}
        \def\dotsize{1pt}
            \draw[line width=\thickness, line cap=round, colorone!35] (2*\cdelta,0)..controls ++(1*\cdelta, \arch)..(4*\cdelta,0) -- (6*\cdelta, 0);
            \draw[line width=\thickness, line cap=round, colortwo!35] (1*\cdelta, 0)..controls ++(\cdelta,-\arch)..(3*\cdelta,0)..controls ++(2*\cdelta, -\arch)..(7*\cdelta, 0) -- (8*\cdelta, 0);
        \foreach \col in {1,...,\nCols} {
                \node[circle, draw, fill=black, inner sep=\dotsize] (B\col) at (\cdelta*\col, 0) {};
        }
        \node[purple] at (4*\cdelta, 1*\arch) {\tiny $|\phi_1\rangle$};
        \node[teal] at (5*\cdelta, -1.25*\arch) {\tiny $|\phi_2\rangle$};
}}
\newcommand{\qubitgrid}{\tikz[scale=0.6]{
        \def\nRows{4}
        \def\nCols{8}
        \def\rowsep{1}
        \def\colsep{1.2}
        \foreach \row in {1,...,\nRows} {
            \draw[line width=9pt, line cap=round, colorone!35] (1, -\row*\rowsep) -- (\nCols/2, -\row*\rowsep);
            \draw[line width=9pt, line cap=round, colortwo!35] (\nCols/2 + 1, -\row*\rowsep) -- (\nCols, -\row*\rowsep);
        }

        \foreach \row in {1,...,\nRows} {
            \foreach \col in {1,...,\nCols} {
                \node[circle, draw, fill=black, inner sep=1.5pt] (B\row\col) at (\col, -\row*\rowsep) {};
            }
            \node[purple] at (\nCols/4 + 0.5, -\row*\rowsep) {\tiny $|\phi_1\rangle$};
            \node[teal] at (3*\nCols/4 + 0.5, -\row*\rowsep) {\tiny $|\phi_2\rangle$};
        }
        \foreach \col in {1,...,4}{
            \draw[-{Latex[length=2mm]}, bend right=30, blue, line width=0.9pt] (\col,-1*\rowsep) to (\col,-2*\rowsep);
            \draw[-{Latex[length=2mm]}, bend right=30, blue, line width=0.9pt] (\col,-2*\rowsep) to (\col,-4*\rowsep);
            \draw[-{Latex[length=2mm]}, bend right=30, blue, line width=0.9pt] (\col,-4*\rowsep) to (\col,-3*\rowsep);
            \draw[-{Latex[length=2mm]}, bend right=30, blue, line width=0.9pt] (\col,-3*\rowsep) to (\col,-1*\rowsep);
        }
        \node[blue, font=\smaller] at (2.5,-\nRows*\rowsep-0.7*\rowsep) {$\bigotimes_{j\in C}R_j(\sigma)$};
        \foreach \col in {5, ..., 8}{
            \draw[-{Latex[length=2mm]}, bend right=30, blue, line width=0.9pt] (\col,-1*\rowsep) to (\col,-3*\rowsep);
            \draw[-{Latex[length=2mm]}, bend right=30, blue, line width=0.9pt] (\col,-3*\rowsep) to (\col,-1*\rowsep);
        }
        \node[blue, font=\smaller] at (6.5,-\nRows*\rowsep-0.7*\rowsep) {$\bigotimes_{j\in \overline{C}}R_j(\pi)$};

        \draw [decorate, decoration={brace, amplitude=10pt, mirror}, black] (0.5, -0.9*\rowsep) -- (0.5, -\nRows*\rowsep-0.1*\rowsep) node[midway, right, xshift=-48pt, black, font=\small] {$t$ copies};
        \draw [decorate, decoration={brace, amplitude=5pt}, black] (0.9, -0.5*\rowsep) -- (\nCols/2+0.1, -0.5*\rowsep) node[midway, black, yshift=12pt, font=\small] {$C$};
        \draw [decorate, decoration={brace, amplitude=5pt}, black] (\nCols/2 + 0.9, -0.5*\rowsep) -- (\nCols+0.1, -0.5*\rowsep) node[midway, black, yshift=12pt, font=\small] {$\overline{C}$};
    }
}
\newcommand{\swaptest}{\begin{tikzpicture}[scale=0.5]
                  \tikzset{
                  operator/.style = {draw,thick,fill=green!5,minimum width=1.5em,minimum height=1.5em},
                  phase/.style = {draw,fill,shape=circle,minimum size=3.5pt,inner sep=0pt},
                  surround/.style = {fill=blue!10,thick,draw=black,rounded corners=2mm},
                  meter/.style={draw, inner sep=3, rectangle, font=\vphantom{A}, minimum width=2em, minimum height=1.5em, line width=.8, path picture={\draw[black] ([shift={(0.1,0.1)}]path picture bounding box.south west) to[bend left=60] ([shift={(-0.1,.1)}]path picture bounding box.south east);\draw[black,-latex] ([shift={(0,.1)}]path picture bounding box.south) -- ([shift={(.3,-.1)}]path picture bounding box.north);}}
                  }
                  \def\rowsep{0.3cm};
                  \def\colsep{0.4cm};
                  \def\swaproundfrac{1.0};
                  \matrix[row sep=\rowsep, column sep=\colsep, ampersand replacement=\&] (circuit) { 
                      \node (q1) {$\ket{0}$};  
                  \&   [-0.4*\colsep] \node[operator] (H1) {$H$}; 
                  \&   \node[phase] (ctrl) {};                         
                  \&   \node[operator] (H2) {$H$};
                  \& [-0.4*\colsep] \node[meter] (end1) {}; \\
                  \node (q2) {$\ket{\alpha}$};                                      
                  \& \node[\empty] (a1) {};
                  \& \node[\empty] (b1) {};                                  
                  \& \node[\empty] (c1) {};
                  \&\coordinate (end2);\\
                  \node (q3) {$\ket{\beta}$};                                  
                  \&\node[\empty] (a2) {};
                  \&\node[\empty] (b2) {};                           
                  \&\node[\empty] (c2) {};
                  \&\coordinate (end3);\\
                  };
                  \node at (end1) [above=0.8*\rowsep, font=\smaller] {$\B$};
                  \node[\empty] (i1) at ($(a1)!0.5!(b1)$) {};
                  \node[\empty] (j1) at ($(b1)!0.5!(c1)$) {};
                  \node[\empty] (i2) at ($(a2)!0.5!(b2)$) {};
                  \node[\empty] (j2) at ($(b2)!0.5!(c2)$) {};
                  \node (i1right) at ($(i1)+(\swaproundfrac*\colsep,0)$) {};
                  \node (i2right) at ($(i2)+(\swaproundfrac*\colsep,0)$) {};
                  \node (j1left) at ($(j1)+(-\swaproundfrac*\colsep,0)$) {};
                  \node (j2left) at ($(j2)+(-\swaproundfrac*\colsep,0)$) {};
                  \node[scale=0.75] (int1) at (intersection of i1right--j2left and i2right--j1left) {}; 
                  \node[draw=black,thick,fill=white,minimum width=2.2*\colsep, minimum height=4.3*\rowsep] (swap1) at (int1) {};
                  \draw[thick] (q1) -- (H1) -- (ctrl) -- (H2) -- (end1)  
                  (q2) -- (i1.center)..controls (i1right) .. (int1)
                  (int1) .. controls (j2left) .. (j2.center) -- (end3)
                  (q3) -- (i2.center)..controls (i2right) .. (int1.center) .. controls (j1left) .. (j1.center) -- (end2);
                  \draw[thick] (ctrl) -- (swap1);
                  \draw[decorate,decoration={brace,mirror}]
                    ($(q2)+(-20pt,10pt)$)
                    to node[midway,left,font=\smaller] {$\ket{\psi}$}
                    ($(q3)+(-20pt,-10pt)$);
\end{tikzpicture}
}
\newcommand{\StateHSPFourierSampling}{
     \begin{tikzpicture}[scale=0.5]
                \tikzset{
                operator/.style = {draw,thick,fill=green!5,minimum width=2.5em,minimum height=2.5em},
                phase/.style = {draw,fill,shape=circle,minimum size=3.5pt,inner sep=0pt},
                surround/.style = {fill=blue!10,thick,draw=black,rounded corners=2mm},
                meter/.style={draw, inner sep=3, rectangle, font=\vphantom{A}, minimum width=2em, minimum height=1.5em, line width=.8, path picture={\draw[black] ([shift={(0.1,0.1)}]path picture bounding box.south west) to[bend left=60] ([shift={(-0.1,.1)}]path picture bounding box.south east);\draw[black,-latex] ([shift={(0,.1)}]path picture bounding box.south) -- ([shift={(.3,-.1)}]path picture bounding box.north);}}
                }
                \def\rowsep{0.3cm};
                \def\colsep{0.6cm};
                \def\swaproundfrac{1.0};
                \matrix[row sep=\rowsep, column sep=\colsep,ampersand replacement=\&] (circuit) { 
                     \node (q1) {};  
                \&   [-0.4*\colsep] \node[operator] (H1) {$\calF_G^{-1}$}; 
                \&   \node (ctrl) {};                         
                \&   \node[operator] (H2) {$\calF_G$};
                \& [-0.6*\colsep] \node[meter] (end1) {}; \\
                \node (q2) {};                                      
                \& \node[\empty] (a1) {};
                \& \node[\empty] (b1) {};                               
                \& \node[\empty] (c1) {};
                \&\coordinate (end2);\\
                };
                \node at (end1) [above=0.8*\rowsep, font=\smaller] {$\;\;\;\;\lambda\in\mathrm{Irr}[G]$};
                \node at (q1) [left=0.05*\colsep] {$\C^{G}: \ket{\mathbf{0}}$};
                \node at (q2) [left=0.0*\colsep] {$\C^d: \ket{\psi}$};
                \node[draw=black,thin,rounded corners,dashed,fill=blue!5!white,minimum width=4*\rowsep, minimum height=7*\rowsep] (spot) at ($(b1)!0.5!(ctrl)$) {};
                \node[draw=black,thick,fill=white,minimum width=3.0*\rowsep, minimum height=3.0*\rowsep] (gate1) at (b1) {$R$};
                \draw[thick] (q1) -- (H1) -- (ctrl.center) -- (H2) -- (end1)  
                (q2) -- (gate1) -- (end2);
                \node[phase] at (ctrl) {}; 
                \node[font=\smaller] at (spot) [above=3.4*\rowsep] {$U_{R}$};
                \draw[thick] (ctrl.center) -- (gate1);
    \end{tikzpicture}
}
\newcommand{\HiddenCutCircuit}{
    \begin{tikzpicture}[scale=0.5]
            \tikzset{
            operator/.style = {draw,thick,fill=green!5,minimum width=1.5em,minimum height=1.5em},
            phase/.style = {draw,fill,shape=circle,minimum size=2.5pt,inner sep=0pt},
            surround/.style = {fill=blue!10,thick,draw=black,rounded corners=2mm},
            meter/.style={draw, inner sep=3, rectangle, font=\vphantom{A}, minimum width=2em, minimum height=1.5em, fill=white, line width=.8, path picture={\draw[black] ([shift={(0.1,0.1)}]path picture bounding box.south west) to[bend left=60] ([shift={(-0.1,.1)}]path picture bounding box.south east);\draw[black,-latex] ([shift={(0,.1)}]path picture bounding box.south) -- ([shift={(.3,-.1)}]path picture bounding box.north);}}
            }
            \def\rowsep{0.3cm};
            \def\colsep{0.6cm};
            \def\swaproundfrac{0.85};

            \foreach \step in {0,...,5}{
                \begin{scope}[shift={(\step * 0.2,-\step * 0.15)}]
                    \matrix[row sep=\rowsep, column sep=\colsep,ampersand replacement=\&] (circuit) { 
                    \node[minimum height=1.5em] (q1) {};  
                    \&   \node[operator] (H1) {$H$}; 
                    \&   \node[phase] (ctrl) {};                         
                    \&   \node[operator] (H2) {$H$};
                    \&   \node[meter] (end1) {}; \\
                    \node[minimum height=1.5em] (q2\step) {};                                      
                    \& \node[\empty] (a1) {};
                    \& \node[\empty] (b1) {};                                  
                    \& \node[\empty] (c1) {};
                    \& \coordinate (end2);\\
                    \node[minimum height=1.5em] (q3\step) {};                                  
                    \&\node[\empty] (a2) {};
                    \&\node[\empty] (b2) {};                           
                    \&\node[\empty] (c2) {};
                    \&\coordinate (end3);\\
                    \node[minimum height=1.5em] (q4\step) {};                                      
                    \& \node[\empty] (a3) {};
                    \& \node[\empty] (b3) {};                                  
                    \& \node[\empty] (c3) {};
                    \&\coordinate (end4);\\
                    \node[minimum height=1.5em] (q5\step) {};                                      
                    \& \node[\empty] (a4) {};
                    \& \node[\empty] (b4) {};                                  
                    \& \node[\empty] (c4) {};
                    \& \coordinate (end5);\\
                    };
                    \foreach \k in {1,...,4} {
                        \node[\empty] (i\k) at ($(a\k)!0.5!(b\k)$) {};
                        \node[\empty] (iright\k) at ($(i\k)+(\swaproundfrac*\colsep,0)$) {};
                        \node[\empty] (j\k) at ($(b\k)!0.5!(c\k)$) {};
                        \node[\empty] (jleft\k) at ($(j\k)+(-\swaproundfrac*\colsep,0)$) {};
                    }
                    \node[scale=0.75] (int1) at (intersection of iright1--jleft2 and iright2--jleft1) {}; 
                    \node[scale=0.75] (int2) at (intersection of iright3--jleft4 and iright4--jleft3) {};
                    \node[draw=black,thick,fill=white,minimum width=1.75*\colsep, minimum height=9.7*\rowsep] (swap) at ($(int1)!0.5!(int2)$) {};
                    \draw[] (q1) -- (H1) -- (ctrl.center) -- (H2) -- (end1)  
                    (q2\step) -- (i1.center)..controls (iright1) .. (int1)
                    (int1) .. controls (jleft2) .. (j2.center) -- (end3)
                    (q3\step) -- (i2.center)..controls (iright2) .. (int1.center) .. controls (jleft1) .. (j1.center) -- (end2)
                    (q4\step) -- (i3.center)..controls (iright3) .. (int2)
                    (int2) .. controls (jleft4) .. (j4.center) -- (end5)
                    (q5\step) -- (i4.center)..controls (iright4) .. (int2.center) .. controls (jleft3) .. (j3.center) -- (end4);
                    \draw[thick] (ctrl.center) -- (swap);
                \end{scope}
            }
            \begin{scope}[shift={(10,0)}]
            \foreach \j in {2,...,5} {
                \draw[line width=4pt, line cap=round, red!35] (q\j0.center) -- ($(q\j0)!0.4!(q\j5)$);
                \draw[line width=4pt, line cap=round, green!35] ($(q\j0)!0.6!(q\j5)$) -- (q\j5.center);
                \foreach \step in {0,...,5}{
                    \node[circle, fill=black, minimum size=0.25em, inner sep=0pt] at ($(q\j0)!{\step/5}!(q\j5)$) {};
                } 
            }
        \end{scope}
        \node at (q1) [left=1em, yshift=0.5em, font=\smaller] {$\ket{0^n}$};
        \node at (end1) [above=2em,xshift=-0.5em, font=\smaller] {$\B^n$};
        \draw[decorate,decoration={brace,mirror}]
        ($(q20)+(-8pt,0)$)
        to node[midway,left,font=\smaller] {$t$ copies}
        ($(q50)+(-8pt,0)$);
        \draw[decorate,decoration={brace,mirror}]
        ($(q50)+(-3pt,-8pt)$)
        to node[midway,below,font=\smaller, xshift=-3pt] {$n$}
        ($(q55)+(-3pt,-8pt)$);
    \end{tikzpicture}
}
\title{\vspace{-40pt}\large\bf The state hidden subgroup problem \\ and an efficient algorithm for locating unentanglement \vspace{-10pt}}
\author{
\normalsize Adam Bouland\thanks{\href{mailto:abouland@stanford.edu}{abouland@stanford.edu}} \\ \small Stanford University
\and
\normalsize Tudor Giurgic\u{a}-Tiron\thanks{\href{mailto:tgt@stanford.edu}{tgt@stanford.edu}} \\ \small Stanford University, QuICS
\and
\normalsize John Wright\thanks{\href{mailto:jswright@berkeley.edu}{jswright@berkeley.edu}} \\ \small UC Berkeley
}
\date{}
\begin{document}

\maketitle
\vspace{-20pt}
\begin{abstract}\noindent
We study a generalization of entanglement testing which we call the ``hidden cut problem.'' Taking as input copies of an $n$-qubit pure state which is product across an unknown bipartition, the goal is to learn precisely \emph{where} the state is unentangled, i.e.\  to determine which of the exponentially many possible cuts separates the state. We give a polynomial-time quantum algorithm which can find the cut using $O(n/\epsilon^2)$ many copies of the state, which is optimal up to logarithmic factors. Our algorithm also generalizes to learn the entanglement structure of arbitrary product states. In the special case of Haar-random states, we further show that our algorithm requires circuits of only constant depth. 
To develop our algorithm, we introduce a state generalization of the hidden subgroup problem (StateHSP) which might be of independent interest, in which one is given a quantum state invariant under an unknown subgroup action, with the goal of learning the hidden symmetry subgroup. We show how the hidden cut problem can be formulated as a StateHSP with a carefully chosen Abelian group action. We then prove that Fourier sampling on the hidden cut state produces similar outcomes as a variant of the well-known Simon's problem, allowing us to find the hidden cut efficiently. Therefore, our algorithm can be interpreted as an extension of Simon's algorithm to entanglement testing. We discuss possible applications of StateHSP and hidden cut problems to cryptography and pseudorandomness.
\end{abstract}

\section{Introduction}
Detecting the entanglement properties of states is a central theme in quantum information. 
In the standard formulation of product testing \cite{montanaro2013survey}, the goal is to determine if a state is product vs.\  far from product across a fixed bipartition, given as input copies of the state. The well-known ``SWAP test'' \cite{gottesman2001quantum} provides a fundamental algorithmic primitive for entanglement testing in this setting. 
Variations of state product and separability testing have found many applications in quantum complexity theory \cite{gharibian2008strong,gutoski2013quantum}, and many problems which admit states as inputs can be reduced to questions of detecting entanglement --- such as the proof that $\textsf{QMA}(k)=\textsf{QMA}(2)$ \cite{harrow2013testing}.
Recently a number of works have shown upper and lower bounds \cite{badescu2020lower,soleimanifar2022testing,flammia2024quantum} for estimating various measures of entanglement/separability for quantum states.

In this work we study a generalization of product testing which we call the ``hidden cut problem,''
which removes the assumption of a pre-defined bipartition. 
Given as input copies of an $n$-qubit pure state, the task is not to determine {\em if} the state is unentangled,
but rather to determine {\em where} it is unentangled. 
In other words, given a state which is promised to be a product of two $(n/2)$-qubit states across some unknown ``cut'' (i.e.\  a bipartition of the $n$ qubits), the goal is to learn the precise location of the cut. The combinatorics of set partitions makes the problem non-trivial: even though one can efficiently check any candidate cut via SWAP test on two state copies, there are exponentially many possible bipartitions of $n$ qubits into two sets of size $n/2$, since $\frac{1}{2}\binom{n}{n/2}=2^{\Theta(n)}$; therefore, a brute-force search is inefficient. See \Cref{fig:cartoonbipartitions} for an illustration.
We define the hidden cut problem more formally as follows:
\begin{figure}
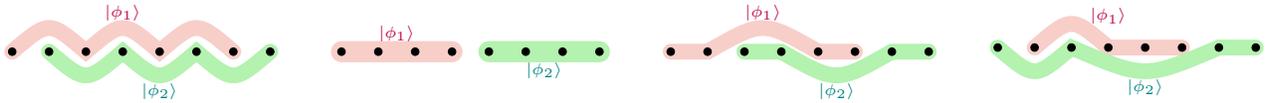

    \begin{equation*}
        \hspace*{-0.1in}
        \begin{array}{ccccccc}
            \vcenter{\hbox{\cartoontwo}} & & \vcenter{\hbox{\cartoonone}} & & \vcenter{\hbox{\cartoonthree}} & & \vcenter{\hbox{\cartoonfour}} \\
        \end{array}
    \end{equation*}\vspace*{-15pt}
    \caption{A cartoon illustrating four out of the $\frac{1}{2}\binom{8}{4}=35$ possible distinct ways (i.e.\  `cuts') in which a pure state on $n=8$ qubits can be formed as a product of two states $\ket{\phi_1},\ket{\phi_2}$ on $\frac{n}{2}=4$ qubits. The factor states $\ket{\phi_1}$ and $\ket{\phi_2}$ are depicted as contiguous clouds. The goal of the hidden cut problem is to determine which cut separates the product state.}
    \label{fig:cartoonbipartitions}
\end{figure}
\begin{restatable}[Hidden cut problem]{definition}{defHiddenCut}\label{def:HiddenCut}
    Let $\ket{\psi}\in (\C^{2})^{\otimes n}$ be a state on $n$ qubits (where $n$ is even) which is promised to be a product across an unknown cut $C\in\binom{[n]}{n/2}$, denoted $\ket{\psi}=\ket{\phi_1}_C\otimes \ket{\phi_2}_{\overline{C}}$, such that the factor states $\ket{\phi_{1}},\ket{\phi_2}\in(\C^{2})^{\otimes n/2}$ are $\epsilon$-far from being product states. The hidden cut problem asks to identify the cut $C$, given copies of the state $\ket{\psi}$.
\end{restatable}

Note that in order for the cut to be well-defined, we require that the two unentangled factor states are far from product states themselves. 
Additionally, while we have defined the hidden cut problem to refer to equal-sized bipartitions of the qubits, one can of course generalize the problem to unequally sized cuts or to a product of more than two states, which we will address later in the paper.

A natural first question asks whether the hidden cut problem is even information-theoretically solvable given polynomially many copies of the input state. The answer is yes --- a closely related problem has been analyzed in the property testing literature under the name of ``multipartite productness''. If we interpret the hidden cut problem as a search task (which cut separates the state?), then multipartite productness is defined as the associated decision task (is there a cut which separates the state?). In \cite{harrow2017sequential}, Harrow, Lin, and Montanaro showed that only $O(n/\epsilon^2)$ state copies are information-theoretically required to decide if a state is multipartite product or $\epsilon$-far from any such state, and this bound was recently shown to be optimal up to log factors by Jones and Montanaro \cite{jones2024testing}. 
A simple modification of Harrow, Lin and Montanaro's argument shows the cut can also be located using only $O(n/\epsilon^2)$ copies of the input state (see \Cref{sec:infotheoretic}).
However, these property testing algorithms are computationally inefficient, and work by combining an exponentially long sequence of ``gentle measurements'' on the input state to try out the different possible cuts.
This approach takes exponential time, and there is no obvious way to do better.

Our main result is a positive answer to this question: we construct an efficient algorithm for the hidden cut problem, which can learn the cut using $O(n/\epsilon^2)$ copies of the state and polynomial time. This is exponentially faster than prior property testing algorithms:
\begin{restatable}[Hidden cut algorithm]{theorem}{thmmain}\label{thm:main}
There is an efficient quantum algorithm for the hidden cut problem on $n$ qubits which uses $O(n/\epsilon^2)$ copies of the state and runs in polynomial time, requiring circuits of depth $O(n^2)+O(\log\epsilon^{-1})$ which act coherently on $O(\epsilon^{-2})$ state copies at a time.
\end{restatable}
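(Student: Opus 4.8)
The plan is to instantiate the \StateHSP{} framework with a carefully chosen Abelian group and then reduce the analysis of Fourier sampling to a noise-robust version of Simon's problem. Fix a prime $t = \Theta(\epsilon^{-2})$, take $t$ copies of $\ket\psi$, and arrange the $tn$ qubits in a grid whose rows are copies and whose columns are qubit positions. Let $G = \Z_t^n$ act on $\ket\psi^{\otimes t}$ by letting the $j$-th coordinate cyclically shift the $t$ copies living in column $j$. Writing $\ket\psi = \ket{\phi_1}_C\otimes\ket{\phi_2}_{\overline{C}}$, the state $\ket\psi^{\otimes t}$ is invariant under the subgroup
\[
H_C \;=\; \bigl\{(a_j)_{j\in[n]} \;:\; a_j\equiv a\ (j\in C),\ a_j\equiv b\ (j\in\overline{C}),\ a,b\in\Z_t\bigr\}\;\cong\;\Z_t\times\Z_t,
\]
since uniformly shifting the $C$-columns merely permutes the $t$ copies of $\ket{\phi_1}$, under which $\ket{\phi_1}^{\otimes t}$ is symmetric, and likewise for $\overline{C}$. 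The first step is to prove that $H_C$ is the \emph{exact} hidden subgroup: for $g\notin H_C$ the shift pattern is non-constant on $C$ or on $\overline{C}$, and then $\langle\psi^{\otimes t}|U_g|\psi^{\otimes t}\rangle$ factorizes into a product of permutation moments, each carrying a factor $\mathrm{Tr}(\rho_S^{c})$ for a \emph{nontrivial} bipartition $S$ of $C$ or of $\overline{C}$; the promise that $\ket{\phi_1},\ket{\phi_2}$ are $\epsilon$-far from product gives $\lambda_{\max}(\rho_S)\le 1-\Omega(\epsilon^2)$ and hence $\mathrm{Tr}(\rho_S^{c})\le(1-\Omega(\epsilon^2))^{c-1}$, so the overlap is bounded below $1$ by a constant once $t=\Theta(\epsilon^{-2})$. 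Thus taking many copies amplifies ``distance from product'' into a genuine spectral gap.

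Next I would run \StateHSP{} Fourier sampling over $G$: introduce $n$ control registers of $\lceil\log t\rceil$ qubits, apply $\calF_{\Z_t}$ to each, apply the controlled group action $\sum_g \ket g\!\bra g\otimes U_g$ --- which is, column by column, a controlled cyclic shift of $t$ single-qubit registers, realizable by a depth-$O(\log t)$ rotation network, with all $n$ columns acting on disjoint qubits --- then apply $\calF_{\Z_t}^{-1}$ and measure to obtain $x\in\widehat G\cong\Z_t^n$. The key structural observation is that both $U_g$ and $\ket\psi^{\otimes t}$ factorize across the bipartition $C\sqcup\overline{C}$, and the Fourier transform factorizes as $\calF_{\Z_t^C}\otimes\calF_{\Z_t^{\overline{C}}}$; hence the sample $x = (x_C, x_{\overline{C}})$ has \emph{independent} components, with $x_C$ distributed exactly as the Fourier-sampling output of the single-block problem on $\ket{\phi_1}^{\otimes t}$ and $x_{\overline{C}}$ that of $\ket{\phi_2}^{\otimes t}$. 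This is the sense in which Fourier sampling on the hidden cut state mimics Simon's problem: the central lemma is that for a state $\ket\phi$ on $m$ qubits that is $\epsilon$-far from product across \emph{every} bipartition, the single-block distribution on $\ket\phi^{\otimes t}$ with $t=\Theta(\epsilon^{-2})$ is supported on --- and not concentrated on any proper subspace of --- the hyperplane $\{x\in\Z_t^m:\sum_j x_j\equiv 0\}$, the annihilator of the only true symmetry $\langle(1,\dots,1)\rangle$; consequently each global sample $x$ lies in $H_C^\perp = \{x:\sum_{j\in C}x_j\equiv 0,\ \sum_{j\in\overline{C}}x_j\equiv 0\}$ and, with constant probability, contributes a fresh direction inside it.

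It then follows that $O(n)$ samples suffice: with high probability their span over $\Z_t$ equals $H_C^\perp$, so solving the resulting linear system recovers $H_C$, whose two generators (ones on $C$ and ones on $\overline{C}$) reveal the cut; one may additionally verify the answer with a single SWAP test on two copies. The total number of copies is $O(n)\cdot t = O(n/\epsilon^2)$, matching the information-theoretic bound of \Cref{sec:infotheoretic} up to logarithmic factors; the classical post-processing is $\Z_t$-Gaussian elimination in time $\mathrm{poly}(n)$; and a careful accounting of the control registers, the $\Z_t$-QFTs, and the copy-permutation networks (including routing) yields quantum circuits of depth $O(n^2)+O(\log\epsilon^{-1})$ that act coherently on $t=O(\epsilon^{-2})$ copies at a time.

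I expect the main obstacle to be the noise-robust Simon lemma. In the textbook hidden subgroup problem one Fourier-samples a \emph{function} with \emph{exact} periodicity whose distinct coset-values are mutually orthogonal, obtaining an exactly uniform distribution on $H^\perp$; here the role of the function is played by the state family $g\mapsto U_g\ket\psi^{\otimes t}$, which is $H_C$-periodic but whose values on distinct cosets are merely \emph{nearly}, not exactly, orthogonal --- elements outside $H_C$ move the state by $\Omega(\epsilon)$ in trace distance, and only the $t$-th power trick amplifies this into an overlap bounded below $1$. Controlling the Fourier spectrum of this approximately-periodic state --- showing it is essentially confined to $H_C^\perp$ and yet spread out enough within $H_C^\perp$ that $O(n)$ samples span it --- is the technical heart of the argument, and is precisely where the choice $t=\Theta(\epsilon^{-2})$ and the moment bounds $\mathrm{Tr}(\rho_S^{c})\le(1-\Omega(\epsilon^2))^{c-1}$ enter. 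A secondary subtlety is ruling out \emph{proper-subgroup} symmetries within a single column --- a cyclic shift of order properly dividing $t$ that fixes one factor state --- which again uses that each $\ket{\phi_i}$ is far from product across \emph{all} of its bipartitions, not only the trivial one.
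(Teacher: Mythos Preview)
Your plan diverges from the paper in two essential ways, and both introduce real gaps.

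\textbf{The paper's route.} The paper uses the group $G=\Z_2^n$, not $\Z_t^n$: the action on $t$ copies (with $t$ even) is that the $j$-th bit swaps qubit $j$ within each of $t/2$ fixed pairs of copies. The overlap for $\mathbf{x}\in\Z_2^n$ is then exactly $\mathrm{Tr}[\psi_{\mathbf{x}}^2]^{t/2}$, a power of a single purity, which is what makes the error analysis tractable. The non-adaptive version of Fourier sampling needs $t=O(n/\epsilon^2)$ per sample (hence $O(n^2/\epsilon^2)$ total) to drive the distribution uniformly close to $\mathrm{P}_{\mathrm{HSP}}$. The optimal $O(n/\epsilon^2)$ count is achieved only by an \emph{adaptive} modification (Algorithm~2): at round $k$ the group register is initialized in the uniform superposition over the subspace $\Sigma_k$ orthogonal to previously collected samples, and one shows directly that with $t=O(1/\epsilon^2)$ the new sample leaves $\Sigma_k^\perp$ with constant probability. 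The depth $O(n^2)$ in the theorem statement is exactly the cost of preparing these subspace states; it is not a byproduct of QFTs or shift networks.

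\textbf{Where your plan breaks.} First, your claimed factorization of the overlap is not correct in general. For $g\in\Z_t^m$ whose shift values take only two distinct values, the overlap is indeed $\mathrm{Tr}[\rho_S^{\,t}]$ for the bipartition $S$ determined by the two blocks. But once the shifts take three or more distinct values---say $(a,b,0)$ with $a,b,a-b$ all nonzero in $\Z_t$---the quantity $\langle\phi^{\otimes t}|\prod_j R_j(a_j)|\phi^{\otimes t}\rangle$ is a genuinely multipartite tensor contraction that does \emph{not} decompose as a product of bipartite moments $\mathrm{Tr}[\rho_S^{\,c}]$. Your bound ``$\mathrm{Tr}(\rho_S^{c})\le(1-\Omega(\epsilon^2))^{c-1}$'' therefore does not control the generic coset, and you have no handle on the bulk of the $t^{m-1}$ terms in the Fourier sum. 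Second, even granting some overlap bound, your ``noise-robust Simon lemma'' is precisely what the paper's non-adaptive analysis shows \emph{fails} at $t=\Theta(\epsilon^{-2})$: a constant orthogonality allowance leaves the StateHSP distribution far from uniform on $H_C^\perp$, and one cannot conclude that $O(n)$ samples span it. The paper circumvents this not by sharper overlap estimates but by the adaptive subspace trick, which you never invoke. Finally, nothing in your construction accounts for the $O(n^2)$ depth; your circuit as described has depth $\mathrm{poly}\log(1/\epsilon)$ per sample, so the stated depth is simply asserted to match the theorem rather than derived.
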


The number of copies used by our algorithm is optimal up to log factors, in light of Jones and Montanaro's decision lower bound \cite{jones2024testing}.
Our algorithm works in an entirely different way than the prior information-theoretic approaches. In particular, we show the hidden cut problem can formulated as a quantum state version of the well-known Abelian hidden subgroup problem (HSP), and then give an efficient algorithm to solve this quantum state HSP via a generalization of Simon's algorithm \cite{simon1997power}. Our work can thus be interpreted as an extension of Simon's algorithm to entanglement testing.

In addition to being a natural question in entanglement testing, the hidden cut problem is also motivated by questions in quantum pseudorandomness and pseudoentanglement, since hiding the location of a separating cut could be a natural mechanism of hiding entanglement.
A number of works have recently explored creating quantum pseudorandom states \cite{ji2018pseudorandom} with low entanglement \cite{aaronson2024quantum}.
One natural recursive candidate construction\footnote{We thank Henry Yuen for raising this question.} would consist of building larger states from products of two smaller pseudorandom states across a random partition.
This could potentially result in a pseudorandom state construction with no entanglement across some partition, and which naturally lifts pseudorandom state construction on $n$ qubits to pseudorandom state constructions on $n'>n$ qubits.
Our result shows this approach does not work, as there is an efficient algorithm to locate unentanglement.
Interestingly, our algorithm does not rule out the possibility of more general pseudorandom state constructions with low, but nonzero entanglement across hidden cuts (see Discussion section).

Our algorithm also admits a number of generalizations and improvements. First, our algorithm generalizes to the case in which the state is a product of two or more unentangled states which are not necessarily of the same size. Specifically, we naturally define the more generic ``hidden many-cut problem'', in which the input state is allowed to be a product of several unentangled substates. We will show that the same algorithm can solve this version of the problem with minimal modifications, as long as the individual subsystems are entangled enough:
\begin{restatable}[Algorithm for the many-cut problem  --- informal]{corollary}{cormanycut}\label{cor:manycut}
    The same algorithm from \Cref{thm:main} solves the ``hidden many-cut problem'' in which an $n$-qubit input state is product across an arbitrary set partition $C_1\sqcup\dots\sqcup C_m = [n]$: 
    \begin{equation}
        \ket{\psi}=\ket{\phi_1}_{C_1}\otimes \dots \otimes \ket{\phi_m}_{C_m}\,,
    \end{equation}
    such that each factor state $\ket{\phi_k}$ (where $k \in [m]$ indexes the $m$ parts) is at least $\epsilon$-far from any separable state on $\abs{C_k}$ qubits. The algorithm identifies the set partition with the same resource and runtime requirements as in \Cref{thm:main}.
\end{restatable}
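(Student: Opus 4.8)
The plan is to argue that the algorithm of \Cref{thm:main} and its analysis apply essentially verbatim, the only substantive point being a robustness check that uses the (weaker) per-block promise. The key observation is that both ingredients of the StateHSP encoding behind \Cref{thm:main} are \emph{product-structured across the qubits}: the input state $\ket{\psi}=\ket{\phi_1}_{C_1}\otimes\cdots\otimes\ket{\phi_m}_{C_m}$ is a tensor product across the parts, and the Abelian group $G$ used for Fourier sampling acts as a tensor product of per-qubit actions, with $G$ depending only on $n$ and on the number of copies $t=\Theta(\epsilon^{-2})$ --- not on the partition. Consequently the quantum circuit (prepare $\ket{\psi}^{\otimes t}$, attach the group register $\C^{G}$, run the StateHSP Fourier-sampling routine, measure) is literally unchanged; only the post-processing interpretation and the analysis must be revisited.

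First I would identify the hidden subgroup. In the single-cut case the approximate stabilizer of $\ket{\psi}^{\otimes t}$ in $G$ is the subgroup $H_C$ ``knowing the cut $C$'', whose nontrivial characters sit in a codimension-two subspace of the $\Z_2^n$-part that pins down $\one_C$. For a many-cut state I claim the approximate stabilizer is the subgroup $H$ generated by the single-block symmetries $H_{C_1},\dots,H_{C_m}$, i.e.\ the one corresponding to the coarsenings of $C_1\sqcup\cdots\sqcup C_m$. The promise enters here twice. That $H$ is \emph{no smaller}: since $\ket{\psi}$ is literally a tensor product across the $C_k$ and the $G$-action is a tensor product across qubits, each generator of the claimed $H$ acts as the identity on all but one block and, on that block, as an (approximate, with error $O(\epsilon)$) symmetry of $\ket{\phi_k}^{\otimes t}$, exactly as in the single-cut analysis applied to $\ket{\phi_k}$. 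That $H$ is \emph{no larger}: a genuine symmetry not contained in $H$ would have to ``split'' some $C_k$, i.e.\ witness that $\ket{\phi_k}$ is close to a separable state on $|C_k|$ qubits, which the $\epsilon$-far promise forbids.

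Next I would push the Fourier-sampling (``generalization of Simon's'') lemma through with this $H$. That lemma is not special to index-four subgroups: for an arbitrary subgroup of the relevant Abelian $G$ it shows that measuring the Fourier register yields a character trivial on that subgroup, up to total-variation error controlled by $\epsilon$ and $t$. Applying it with $H$, I would control the error by a hybrid argument that replaces $\ket{\phi_k}$ one block at a time by an idealized block state exactly invariant under $H_{C_k}$, paying $O(\epsilon)$ per swap and $O(m\epsilon)=O(n\epsilon)$ in total; rescaling $t=\Theta(\epsilon^{-2})$ (and absorbing the factor $m\le n$, which \Cref{thm:main} already pays for) keeps the sampling distribution within constant distance of uniform over the characters trivial on $H$. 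The remaining step is the classical decoding: from $O(n)$ such samples one recovers the subspace they span, hence $H$, and the partition $C_1,\dots,C_m$ is then read off as the minimal nonzero vectors among the $\Z_2^n$-vectors orthogonal to every sample (for $m=2$ these are exactly $\one_C$ and $\one_{\overline C}$, recovering \Cref{thm:main}'s output). Copy complexity, running time, and circuit depth are all inherited unchanged.

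The step I expect to be the main obstacle is the robustness claim above. The single-cut analysis leans on the promise that \emph{both} sides of the cut are far from separable, whereas in the many-cut setting the complement $\overline{C_k}$ of a block is itself a product of $m-1$ states and hence far from being far from separable; so one cannot simply quote the bipartite statement for the cut $(C_k,\overline{C_k})$. Instead I would re-run the Simon-type estimate directly for $H$, verifying that the only way an error term of size $\gg\epsilon$ could enter is through a block admitting a near-symmetry that splits it --- excluded by the per-block $\epsilon$-far promise --- and that the $m$ block contributions \emph{add} rather than \emph{compound}, which is exactly what the tensor-product structure of $\ket{\psi}$ and of the $G$-action delivers.
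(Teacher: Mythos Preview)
Your high-level plan is right and matches the paper: the circuit is unchanged, the hidden subgroup becomes $H=\langle 1^{C_k}0^{[n]\setminus C_k}:k\in[m]\rangle\simeq\Z_2^m$, Fourier samples land in $H^\perp$ (an $(n-m)$-dimensional subspace), and the partition is read off from the nullspace of the sample matrix. You also correctly flag the real issue, namely that one cannot quote the bipartite bound for $(C_k,\overline{C_k})$ because $\overline{C_k}$ is itself product.

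Where your proposal goes off track is the mechanism for controlling the error. The hybrid argument as you state it is confused: the state $\ket{\psi}^{\otimes t}$ is already \emph{exactly} invariant under $H$, so there is no ``idealized block state exactly invariant under $H_{C_k}$'' to swap in at cost $O(\epsilon)$. The deviation from the ideal HSP distribution comes not from approximate invariance under $H$ but from the nonzero inner products $\mel{\psi^{\otimes t}}{R(\mathbf{x})}{\psi^{\otimes t}}$ for $\mathbf{x}\notin H$, and a hybrid on the state does not obviously address those. The paper instead uses that these inner products are \emph{amplified purities} which factorize exactly over blocks,
\[
\mel{\psi^{\otimes t}}{R(\mathbf{x})}{\psi^{\otimes t}}=\Tr[\psi_{\mathbf{x}}^2]^{t/2}=\prod_{k\in[m]}\Tr\bigl[(\phi_k)_{\mathbf{x}|_{C_k}}^2\bigr]^{t/2},
\]
so the entire Fourier-sampling distribution factors as $\mathrm{P}_{\mathrm{HSP}}[\mathbf{y}]\cdot\prod_k F_k(\mathbf{y}_k)$ with $\abs{F_k-1}\le\Delta_{\phi_k,t}\le 2^{|C_k|}(1-\epsilon^2)^{t/2}$. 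Two corrections to your accounting follow. First, the per-block error is not ``$O(\epsilon)$'' but $2^{|C_k|}(1-\epsilon^2)^{t/2}$, controlled by the amplification parameter $t$; the sum over $2^{|C_k|}$ internal cuts is essential and is exactly why $t=\Theta(\epsilon^{-2})$ (adaptive) or $t=\Theta(n/\epsilon^2)$ (non-adaptive) is needed. Second, the block contributions \emph{compound multiplicatively} as $\prod_k(1+\Delta_{\phi_k,t})-1$, not additively; this is harmless once each $\Delta_{\phi_k,t}$ is driven small, but your ``add rather than compound'' claim is the wrong way around. Once you have the exact factorization, the rest of the single-cut analysis (including the adaptive subspace initialization) carries over block-by-block with no further ideas.
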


Second, a stronger promise about the internal entanglement structure of the input states can significantly reduce the runtime requirements of the algorithm. For example, if one assumes the input state is a product of two Haar random states, the algorithm works with constant-depth circuits, acting on only two copies at a time:
\begin{restatable}[Hidden cut algorithm with Haar-random states]{theorem}{thmhaar}\label{thm:haar}
    Under the stronger promise of Haar-random factor states, the hidden cut can be found by the same algorithm with only $O(n)$ copies of the state, involving circuits of constant depth which coherently access only two state copies at a time.
\end{restatable}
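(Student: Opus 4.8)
The plan is to run the algorithm of \Cref{thm:main} in its smallest incarnation --- with $t=2$ state copies per Fourier sample --- and show that under the Haar promise this two-copy instance already succeeds in constant depth. Recall that the general algorithm casts the hidden cut problem as a $\Z_t^n$ StateHSP on $\ket{\psi}^{\otimes t}$ with hidden subgroup (essentially) $H_{\mathrm{cut}}=\{\vec{\sigma}\in\Z_t^n:\vec{\sigma}\text{ constant on }C\text{ and on }\overline{C}\}$, using $t=\Theta(\epsilon^{-2})$ copies so that the overlaps $\langle\psi^{\otimes t}|(\vec{\sigma}\text{-action})|\psi^{\otimes t}\rangle$ equal products of subsystem purities $\mathrm{Tr}[\rho_B^{t}]\le(\mathrm{Tr}[\rho_B^{2}])^{t/2}\le(1-\epsilon^{2})^{t/2}$ and hence stay bounded away from $1$ off $H_{\mathrm{cut}}$. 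The first step is the standard fact that a Haar-random state on $m$ qubits is $\epsilon$-far from every product state with $\epsilon=1-2^{-\Omega(m)}$: from $\E\,\mathrm{Tr}[\rho_A^{2}]=\tfrac{d_A+d_{\overline{A}}}{d_A d_{\overline{A}}+1}$, Levy's lemma, and a union bound over the $2^{m}$ marginals, every nontrivial subsystem of $\ket{\phi_i}$ has purity at most $\tfrac12+o(1)$ with probability $1-2^{-\Omega(m)}$ over the Haar measure. Hence the cut is well-defined, $\epsilon=\Omega(1)$ (so $t=2$ already lies in $\Theta(\epsilon^{-2})$), and the \emph{only} subsets across which $\ket{\psi}$ is product are $H\deff\{\emptyset,C,\overline{C},[n]\}$.

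The second step analyzes the $t=2$ circuit directly: it is exactly the Simon-type $\Z_2^n$ StateHSP on $\ket{\psi}^{\otimes 2}$ with per-column action $A\mapsto\mathrm{SWAP}_A$, implemented as a layer of Hadamards on $n$ ancillas, $n$ parallel Fredkin gates (ancilla $j$ controlling a swap of qubit $j$ between the two copies), a second layer of Hadamards, and a measurement. A short computation gives outcome distribution $\Pr[y]=\hat{g}_1(y|_C)\,\hat{g}_2(y|_{\overline{C}})$, where $g_i(B)=\mathrm{Tr}[(\rho_i)_B^{2}]$ and $\hat{g}_i$ is its $\F_2$-Fourier transform, which moreover is precisely the distribution of the symmetric/antisymmetric pattern obtained by testing each of the $n/2$ qubit-pairs of $\ket{\phi_i}^{\otimes 2}$. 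Two structural facts make this work. First, $\ket{\psi}^{\otimes 2}$ is invariant under the action of $H$ --- it is product across $C$ and across $\overline{C}$, and trivially fixed by the full two-copy swap --- so \emph{every} sampled $y$ lies in $H^{\perp}=\{y:|y\cap C|\text{ and }|y\cap\overline{C}|\text{ are both even}\}$, with no noise floor, exactly as in noiseless Simon sampling. Second, from $\E_{\phi_i}\hat{g}_i(z)\propto 3^{-|z|}\mathbf{1}[|z|\text{ even}]$ (a consequence of the purity moment formula), Levy concentration, and a union bound over subsystems, with high probability over the Haar measure every nonempty proper subset $B$ of the qubits of $\phi_i$ satisfies $\Pr_{z\sim\hat{g}_i}[B\cdot z\text{ odd}]\ge\tfrac14-o(1)$, and likewise $\Pr_{z\sim\hat{g}_i}[B\cdot z\text{ even}]=\Omega(1)$.

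The final step collects $m=O(n)$ Fourier samples --- $O(n)$ copies, used two at a time --- forms their $\F_2$-span $V\subseteq H^{\perp}$, and argues $V=H^{\perp}$ with high probability: every $A\notin H$ has $A\cap C$ or $A\cap\overline{C}$ a nonempty proper subset of its side, so $\Pr_y[A\cdot y=1]=\Omega(1)$ by the second fact, and a union bound over the $\le 2^{n}$ such $A$ kills all of them once $m=O(n)$; then $V^{\perp}=H=\{\emptyset,C,\overline{C},[n]\}$ and the cut is the pair of weight-$(n/2)$ elements of $H$. The circuit is a layer of Hadamards, $n$ parallel Fredkin gates, a layer of Hadamards, and a measurement on two copies --- constant depth, as claimed --- since the length-$t$ cyclic shift of \Cref{thm:main} degenerates to a single two-qubit swap at $t=2$ and the $O(\log\epsilon^{-1})$ term vanishes for $\epsilon=\Omega(1)$. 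I expect the main obstacle to be the second structural fact: one must pin down the Fourier transform $\hat{g}_i$ of the Haar purity function accurately enough that $O(n)$ samples determine the $(n-2)$-dimensional subspace $H^{\perp}$ while simultaneously defeating the exponentially many spurious candidates $A\notin H$, which forces the $3^{-|z|}$ moment computation and requires the concentration to hold at a scale fine enough to survive the $2^{n}$-fold union bound --- which is fine, as the relevant functionals are $O(1)$-Lipschitz on the unit sphere in dimension $2^{m}$.
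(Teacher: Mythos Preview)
Your approach is correct and follows a genuinely different route from the paper's. The paper establishes pointwise self-averaging of the entire Fourier distribution $\hat g_i(z)\approx c\cdot 3^{-|z|}\mathbf{1}[|z|\text{ even}]$ via an explicit fourth-moment (purity-covariance) computation plus Chebyshev, then carries out a separate combinatorial analysis of this non-uniform ``Simon with $\mathrm{Ber}(1/4)$ bits'' distribution to show that $n-3$ samples are already linearly independent with probability $\ge 1/2$, and finishes by SWAP-testing the six surviving candidate cuts. Your route --- Levy concentration on the purities, then a direct union bound over all $A\notin H$ --- is more elementary: no fourth moment, no bespoke non-uniform Simon analysis, no final SWAP-test cleanup, at the price of a somewhat larger (but still $O(n)$) sample count $m\gtrsim n/\log_2(4/3)$.

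The one step that is not actually justified in your write-up is the Lipschitz claim at the very end. The clean way to see it --- and the observation that makes your entire ``second structural fact'' a one-liner --- is the Fourier inversion identity
\[
\Pr_{z\sim\hat g_i}[B\cdot z = 1]\;=\;\tfrac{1}{2}\bigl(1-\Tr[(\rho_i)_B^{2}]\bigr),
\]
valid for \emph{every} state $\phi_i$, since $\sum_z(-1)^{B\cdot z}\hat g_i(z)=g_i(B)$. With this in hand the functional you need to concentrate is just $\tfrac12(1-\text{purity})$, manifestly $1$-Lipschitz, and your second fact follows immediately from your first step without any $3^{-|z|}$ computation or second invocation of Levy. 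Without the identity, a naive triangle-inequality bound on $\sum_{z:B\cdot z\text{ odd}}\hat g_i(z)$ gives Lipschitz constant $2^{\Theta(n)}$, which would not survive the union bound --- so this identity is exactly the missing piece.

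Two small corrections. The claim $\epsilon=1-2^{-\Omega(m)}$ is too strong: across a single-qubit cut of a Haar state on $m$ qubits the reduced density matrix has purity $\approx 1/2$ and largest eigenvalue $\approx 1/2$, so the distance to the closest product state is $\approx 1/\sqrt{2}$, not $\approx 1$; fortunately you only use $\epsilon=\Omega(1)$. And the paper's group action is $\Z_2^n$ for every $t$ (the generator on column $j$ is the product of $t/2$ disjoint pair-swaps), not $\Z_t^n$ with a length-$t$ cyclic shift; the two happen to coincide at $t=2$, which is the case you actually analyze, so your circuit description is right even though the general picture is not.
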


This highly efficient version of our algorithm still works when the factor states are not genuinely Haar, but rather computationally indistinguishable from Haar (i.e. pseudorandom states, for which known efficient constructions exist \cite{ji2018pseudorandom}). We conjecture that the scope of this algorithm can be further extended to other families of states which obey a strong entanglement volume law. For these reasons, this highly efficient version of our algorithm could be of particular relevance to near-term experiments.

\subsection{A hidden subgroup problem for states}

We will now describe a conceptual framework which will motivate our design of the algorithm for the hidden cut problem. As with many other quantum algorithms, the key is to make critical use of the symmetries of the input states. The conceptual contribution is to recognize that the hidden cut problem, as well as potentially many other problems with state inputs, can be formulated within a quantum state generalization of the well-known hidden subgroup problem.

We start by recalling that a core algorithmic design principle for state-input problems is accounting for the symmetries of the global state $\ket{\psi}^{\otimes t}$ made up of copies of the input state $\ket{\psi}$ (see e.g.\ \cite{montanaro2013survey, o2016efficient}). In the hidden cut problem, we observe that the global state will have various internal symmetries which are determined by the location of the hidden cut. In other words, we can define a group action on the global state such that each hidden cut will correspond to a unique subgroup of hidden symmetries. This is reminiscent of the  {\em hidden subgroup problem} (HSP), a central framework in quantum algorithms and complexity \cite[Section 5.4.3]{nielsen2010quantum}, in which the task is also to identify a hidden subgroup $H$ given a function on a parent group $G$ which is invariant under $H$. However, there is a fundamental difference as the HSP takes as input a \emph{function} with specific subgroup symmetries. In contrast, the hidden cut problem takes as input \emph{quantum states} with particular sets of symmetries.

Motivated by this observation, we define a quantum state version of the HSP, which we call the {\em state hidden subgroup problem} (StateHSP). This problem takes as input (copies of) a state which admits an efficient action of a finite group, such that the state is preserved by an unknown subgroup; the task is once again to identify the symmetry subgroup:
\begin{figure}
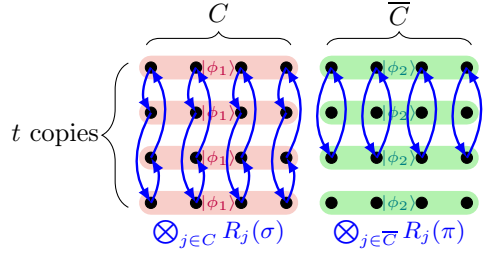

    \centering
    \hspace*{-0.6in}\qubitgrid
    \caption{An illustration of the permutational symmetries of the $tn$-qubit global state $\ket{\psi}^{\otimes t}$ when $\ket{\psi}$ is product across a cut $C\subset[n]$, depicted for $n=8$ and $t=6$. The same cross-copy permutation applied to all qubits inside each side of the cut leaves the global state invariant; to each hidden cut corresponds a hidden subgroup isomorphic to $S_t^{\times 2}$ inside the group $S_t^{\times n}$. In \Cref{sec:symmetricgroupHSP}, we show that Fourier sampling with this group action fails to identify the cut.}
    \label{fig:example}
\end{figure}
\begin{restatable}[The state hidden subgroup problem (StateHSP) --- informal]{definition}{defStateHSP}\label{def:StateHSP} Let $G$ be a finite group with a unitary representation $R:G\rightarrow \mathrm{U}(d)$. The goal is to identify the unknown hidden subgroup $H<G$, given access to the representation and (copies of) a quantum state $\ket{\psi} \in \C^d$ with the following properties:
\begin{itemize}[itemsep=-0.5em, leftmargin=*]
    \vspace*{-0.5em}
    \item $\ket{\psi}$ is invariant under the action of the subgroup $H$, i.e.\  for all $h\in H$, $R(h)\ket{\psi} = \ket{\psi}$.
    \item $\ket{\psi}$ is acted on nontrivially by elements outside the subgroup: for any $g\notin H$, $\abs{\mel{\psi}{R(g)}{\psi}} \leq 1-\epsilon$.
\end{itemize}
\end{restatable}
StateHSP can be interpreted as a generalization of the standard HSP in the following concrete sense: the canonical approach to the standard hidden subgroup problem already involves the construction of states with subgroup symmetries in the form of {\em coset states} \cite{childs2010quantum}. Whereas the coset states live in the regular representation of the group, StateHSP generalizes the state problem to arbitrary group representations. We will elaborate in more technical detail in \Cref{sec:StateHSP}.
We hope the StateHSP problem might be of independent interest, as it is a natural generalization of the HSP. 

In order to apply this framework to the hidden cut problem, one must (a) find an appropriate group action such that the hidden cut problem is formulated as a StateHSP, and (b) find an algorithm to solve the corresponding StateHSP. 
For the latter problem, we will later show that techniques for solving the standard HSP  -- such as an appropriate generalization of Fourier sampling -- can be ported over to the StateHSP setting. 
But first we need to explain why the hidden cut problem is a StateHSP in the first place.

\subsection{The Hidden cut problem as an Abelian StateHSP}
\begin{figure}
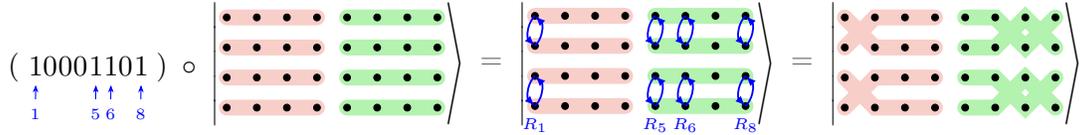
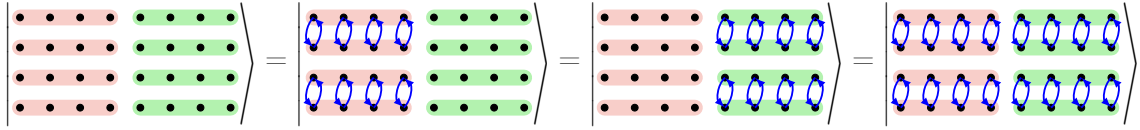

\centering
\begin{subfigure}[t]{\linewidth}
\begin{equation*}(\vcenter{\vspace*{0.22in}\hbox{\abelianactionmini}}\hspace*{-0.06in})\;\circ\;\resizebox{3pt}{27pt}{$\Biggr\vert$}\vcenter{\vspace*{-0.045in}\hbox{\abelianactionzero}}\resizebox{7pt}{27pt}{$\Biggr\rangle$} \vcenter{\vspace*{-0.04in}\hbox{\;\;=\;\;}} \resizebox{3pt}{27pt}{$\Biggr\vert$}\hspace*{-0.07in}\vcenter{\vspace*{0.08in}\hbox{\abelianactionone}}\hspace*{-0.06in}\resizebox{7pt}{27pt}{$\Biggr\rangle$} \vcenter{\vspace*{-0.04in}\hbox{\;\;=\;\;}}\resizebox{3pt}{27pt}{$\Biggr\vert$}\vcenter{\vspace*{-0.045in}\hbox{\abelianactiontwo}}\resizebox{7pt}{27pt}{$\Biggr\rangle$}
\end{equation*}
\vspace*{-0.2in}
\caption{
    The action of a group element $\mathbf{x}\in\Z_2^n$ on the global input state.
}
\end{subfigure}

\vspace*{0.0in}
\begin{subfigure}[t]{\linewidth}
\begin{equation*}
    \resizebox{3pt}{27pt}{$\Biggr\vert$}\vcenter{\vspace*{-0.045in}\hbox{\abelianactionzero}}\resizebox{7pt}{27pt}{$\Biggr\rangle$}
    \vcenter{\vspace*{-0.04in}\hbox{\;=\;}}\resizebox{3pt}{27pt}{$\Biggr\vert$}\vcenter{\vspace*{-0.045in}\hbox{\abelianactionthree}}\resizebox{7pt}{27pt}{$\Biggr\rangle$}\vcenter{\vspace*{-0.04in}\hbox{\;=\;}}\resizebox{3pt}{27pt}{$\Biggr\vert$}\vcenter{\vspace*{-0.045in}\hbox{\abelianactionfour}}\resizebox{7pt}{27pt}{$\Biggr\rangle$}\vcenter{\vspace*{-0.04in}\hbox{\;=\;}}\resizebox{3pt}{27pt}{$\Biggr\vert$}\vcenter{\vspace*{-0.045in}\hbox{\abelianactionfive}}\resizebox{7pt}{27pt}{$\Biggr\rangle$}
\end{equation*}
\vspace*{-0.2in}
\caption{
    The subgroup operations preserving the global input state.
}
\end{subfigure}
\caption{The Abelian group action of the group $G=\Z_2^n$ on the $tn$-qubit global state $\ket{\psi}^{\otimes t}$ which underlies our hidden cut algorithm, depicted for $n=8$ and $t=4$. To each hidden cut $C\subset[n]$ corresponds an order-four hidden subgroup $H_C \simeq \Z_2^2$ which leaves the state invariant. The subgroup is generated by the two equivalent $n$-bit strings which encode the cut and its complement: $H_C=\langle1^C0^{\overline{C}},\;0^C1^{\overline{C}}\rangle$.}\label{fig:AbelianGroupAction}
\end{figure}
In order to describe the hidden cut problem as a StateHSP, we must first find an appropriate group action on the global state $\ket{\psi}^{\otimes t}$ such that each hidden cut corresponds to a specific subgroup. In the hidden cut problem we make no assumption about the internal structure of the factor states $\ket{\phi_{1,2}}$ beyond high entanglement; theferore, the symmetries of the problem lie in acting across, not along, the copies of the input state. Any $t$-fold global state $\ket{\psi}^{\otimes t}$ has a trivial permutational symmetry group $\bbS_t$ which permutes the $t$ copies. However, since each copy is internally separable along a hidden cut $C\subset[n]$, there is a larger permutational symmetry group which leaves the global state invariant. In particular, permutations which act simultaneously on all qubits within each side of the cut also preserve the global state (see \Cref{fig:example}). This would formulate the hidden cut problem as a StateHSP over the parent group $G=(\bbS_t)^{\times n}$, such that the hidden subgroups are promised to be isomorphic to $(\bbS_t)^{\times 2}$. This would seem to be the most general set of permutational symmetries of the global state.
Furthermore, in the standard HSP, Fourier sampling is efficiently implementable given known circuits for the non-Abelian quantum Fourier transform on the symmetric group \cite{beals1997quantum}, so there is hope this group action could result in an algorithm for hidden cuts. However, as we will show in \Cref{sec:symmetricgroupHSP}, this Fourier sampling method fails to find the hidden cut, for similar reasons that Fourier sampling fails to solve the standard HSP over the symmetric group \cite{moore2008symmetric}. Thus, the most obvious StateHSP approach to the hidden cut problem does not work.

Our key observation is that a much simpler Abelian group action can be used to define a StateHSP for the hidden cut problem. As it turns out, it is possible to restrict the permutational symmetries to a subset isomorphic to the group $G=\Z_2^n$, by considering simple SWAPs of pairs of qubits. Concretely, we consider dividing the $t$ copies of the input state (assuming $t$ is even) into pairs; the action of the $i$-th entry of the $n$-bit string $x\in \{0,1\}^n$ is to SWAP the $i$-th qubits inside each pair (see \Cref{fig:AbelianGroupAction}). The key point is that swapping all the qubits within each side of the cut leaves the global state invariant. Therefore, to each possible hidden cut $C\subset[n]$ corresponds a hidden subgroup of order four isomorphic to $\Z_2^2$, which contains all operations acting simultaneously on all qubits on either side of the cut.
For example, if the cut is between the first and second $n/2$ qubits, the hidden subgroup is the group with elements $\{0^n,0^{n/2}1^{n/2},1^{n/2}0^{n/2},1^n\}$ under bitwise addition mod 2, because this subgroup of SWAP operations preserves the paired copies of the input state by exchanging the left/right halves of the paired states.
Therefore, this choice of group action successfully formulates the hidden cut problem as an Abelian StateHSP instance.

\subsection{Solving the Abelian StateHSP via Fourier Sampling}

Having identified the hidden cut problem as an Abelian StateHSP over $G=\Z_2^n$, it remains to show how to efficiently solve it. Recall that standard Abelian HSP instances can be efficiently solved by Fourier sampling. We will show that a generalization of Fourier sampling can be transplanted to the StateHSP problem, resulting in an efficient algorithm to find the hidden cut, or more generally to solve any Abelian StateHSP (see \Cref{fact:StateHSPFourierSamplingAmplified}). The algorithm follows a familiar Fourier sampling workflow: we first prepare an equal superposition of group elements, then apply the controlled group action to the input state(s), and finally take a Fourier transform followed by a measurement on the group register. The circuit implementation of this approach is particularly simple, with the added benefit of parallelization over the $n$ ancillary qubits which make up the $\Z_2^n$ group register (see \Cref{fig:GFouriersampling}). 

The main question we need to answer next is how the output of this state Fourier sampling circuit relates to that of the equivalent standard HSP algorithm, i.e.\  the standard hidden subgroup problem defined with the same parent group, and the same set of valid hidden subgroups.
The technical sections of this paper focus on precisely understanding the output distribution of Fourier samples arising from the hidden cut problem. A key observation is the way in which this measurement outcome distribution depends on the number of copies of the input states $t$ one uses to produce each Fourier sample. Specifically, increasing the number of copies $t$ behaves as a form of orthogonality amplification, resulting in the output distribution approaching the ``ideal'' distribution induced by the associated standard Abelian HSP. 
In other words, the ability to act coherently on multiple copies at a time makes the hidden cut problem behave more like the corresponding standard Abelian HSP. To see why, consider all the states obtained by group action from the initial input state, i.e.\  the group orbit of the initial state. A group element can either be inside the hidden subgroup (in which case it preserves the input state), or outside the hidden subgroup (in which case it does not), meaning that each distinct state in the orbit corresponds to a coset of the hidden subgroup. Acting coherently on several copies of the state at the same time exponentially suppresses the inner product between the states along the orbit of the group action. Intuitively, this effectively orthogonalizes the orbit states; the case of orthogonal coset states is precisely the regime of the standard HSP. This effect is crucial to our algorithm, because it essentially means the hidden cut problem can be reduced to an Abelian HSP, from the point of view of Fourier sampling.

Concretely, an input involving $t$ copies of a specific state $\ket{\psi}$ will induce a specific distribution $\mathrm{P}_\mathrm{StateHSP_{\psi,t}}[\mathbf{y}]$ over the measurement outcomes $\mathbf{y}\in\Z_2^n$ of the Fourier sampling circuit. In \Cref{sec:HiddenCut} we describe the technical error analysis which allows us to appropriately choose the number of state copies $t$. Specifically, this number of copies is chosen such that the output distribution of the hidden cut Fourier sampling circuit $\mathrm{P}_\mathrm{StateHSP_{\psi,t}}$ becomes negligibly close to the equivalent standard HSP outcome distribution $\mathrm{P}_\mathrm{HSP}$ in a multiplicative sense:

\begin{equation}\label{eq:relativenegligibleerror}
     \mathrm{P}_\mathrm{StateHSP_{\psi,t}}[\mathbf{y}] = \mathrm{P}_\mathrm{HSP}[\mathbf{y}] \left(1 + \negl(n)\right),\quad\text{ for all }\mathbf{y}\in\Z_2^n\,.
\end{equation}

Here, $\mathrm{P}_\mathrm{HSP}[\mathbf{y}]$ denotes the probability to obtain outcome $\mathbf{y}\in\Z_2^n$ via Fourier sampling in the associated standard HSP with the same group and subgroup specifications as our StateHSP. This associated HSP Fourier sampling distribution is particularly simple:

\begin{equation}\label{eq:SimonslikeDistribution}
        \mathrm{P}_\mathrm{HSP}[\mathbf{y}] = \left\{
        \begin{array}{lr}
            2^{-n+2}\hspace*{10pt} & \text{if }\mathbf{y}\cdot1^C0^{\overline{C}}=\mathbf{y}\cdot 0^C1^{\overline{C}}=0\;\mathrm{mod}\;2,\vspace*{10pt}\\
            0 & \text{otherwise.}
        \end{array}\right.
\end{equation}

Specifically, this means that the associated HSP distribution  $\mathrm{P}_\mathrm{HSP}$ is uniformly supported on the Boolean subspace of dimension $n-2$ which is orthogonal to the two equivalent bit strings\footnote{The notation $1^C0^{\overline{C}}$ denotes the $n$-bit string with 1's in the positions in $C\subseteq[n]$ and 0's elsewhere.} $1^C0^{\overline{C}}$, $0^C1^{\overline{C}}$ which encode the hidden cut $C$. We remark that this is a variation of the classic Simon's problem \cite{simon1997power}, in which the Fourier samples are also uniform over the subspace orthogonal to a secret string. The fact that we obtain a multiplicative error in the output distribution of Simon's algorithm means that we will never observe a string outside the orthogonal subspace when obtaining the Fourier samples.

Once the number of copies $t$ is chosen such that the hidden cut problem returns similar outcomes as the associated Simon-like HSP, the original logic of Simon's algorithm allows us to efficiently find the hidden cut: after obtaining $O(n)$ independent Fourier samples, one has collected a complete basis for the supporting subspace with high probability, from which the secret string $1^C0^{\overline{C}}$ which encodes the cut $C$ can be learned by solving a simple Boolean linear algebra problem of size $n$. Therefore, our algorithm can be viewed as an extension of Simon's algorithm to entanglement testing, since finding the hidden cut reduces to solving a Simon-like Abelian HSP over the group $G=\Z_2^n$.

As will be detailed in \Cref{sec:HiddenCut}, a straightforward application of Abelian Fourier sampling to the hidden cut problem succeeds in finding the cut, however a number $O(n^2/\epsilon^2)$ of copies are required. A key observation is that it is possible to further reduce the requirement by a factor of $n$, down to the optimal $O(n/\epsilon^2)$ asymptotic of \Cref{thm:main}, by an adaptive modification of the Fourier sampling procedure. Specifically, we show how this can be achieved by changing the initializion of the ancillary group register. Whereas the standard Fourier sampling approach involves starting with a uniform superposition over all group elements, in our second adaptive algorithm we will start with a superposition over the $\Z_2^n$ elements which are orthogonal to previous samples. We will show how this choice boosts the probability of measuring new linearly independent samples, such that a number of copies $t=O(1/\epsilon^2)$ at each sampling round suffices to produce valid measurements from the cut subspace with constant success probability.

\begin{figure}
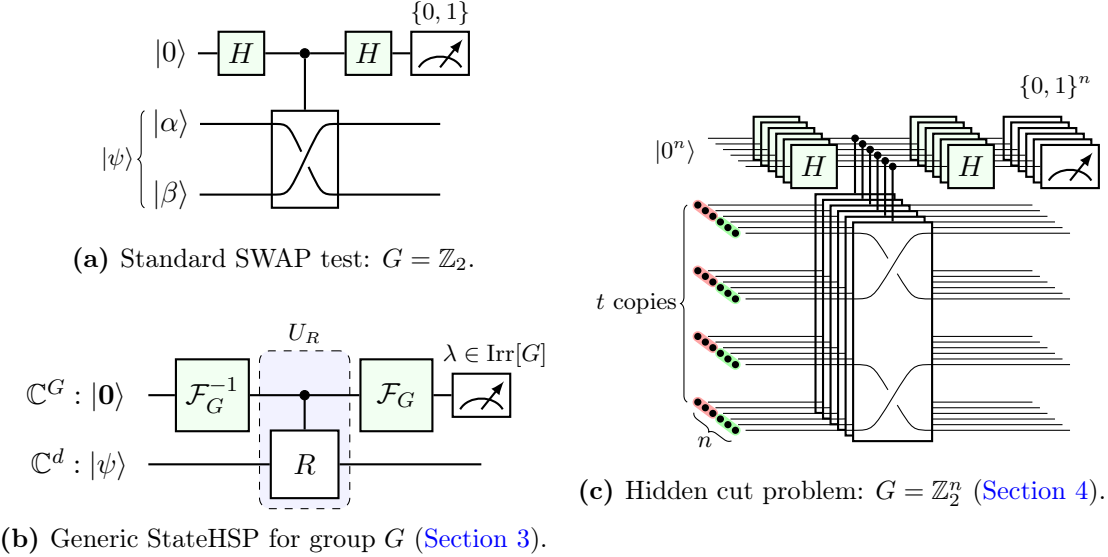

    \hspace*{0.0in}
    \centering
    \begin{minipage}[b]{0.45\textwidth}
        \begin{subfigure}[b]{\linewidth}
                \hspace*{0.46in}
                \swaptest
                \caption{Standard SWAP test: $G=\Z_2$.}\label{subfig-1:swaptest}
        \end{subfigure}\\[\baselineskip]
        \begin{subfigure}[b]{\linewidth}
            \hspace*{0.05in}
            \StateHSPFourierSampling
          \caption{Generic StateHSP for group $G$ (\Cref{sec:StateHSP}).}\label{subfig-2:StateHSP}
        \end{subfigure}
      \end{minipage}
      \begin{subfigure}{.45\linewidth}
        \hspace*{0.025in}
        \HiddenCutCircuit
      \caption{Hidden cut problem: $G=\Z_2^n$ (\Cref{sec:HiddenCut}).}\label{subfig-3:HiddenCutCircuit}
      \vspace*{0.25in}
    \end{subfigure}
    \caption{
        Group Fourier sampling circuits for specific cases of the state hidden subgroup problem. In all the examples, the top ancillary register supports a regular representation of the group, and the central operation is a controlled group action on the input state. {\bf (a)} The standard SWAP test. {\bf (b)} The generic StateHSP for arbitrary group $G$ admitting circuit implementations of the quantum group Fourier transform $\calF_G$, and which acts by a unitary representation $R$ on the input state. The top register is initialized in the basis state associated with the trivial representation of $G$. The outcome is a label $\lambda \in \Irr{G}$ of a group irreducible representation. {\bf (c)} Schematic of the circuit which solves the hidden cut problem as part of the non-adaptive \Cref{alg:main}.
   }
   \label{fig:GFouriersampling}
\end{figure}

Different promises on the internal entanglement of the factor states will ultimately impact the number of state copies $t$ required for orthogonality amplification. Following the formulation common to property testing scenarios, the generic version of the hidden cut problem (\Cref{def:HiddenCut}) promises that the factor states are at least a constant trace distance $\epsilon$ away from separable. This will require a number of copies $t=O(1/\epsilon^2)$ per Fourier sample in order to exhaustively suppress the contributions coming from all of the possible false internal cuts. This count can be reduced to a constant of only $t=2$ if the promise is strengthened to Haar-random factor states; furthermore, the corresponding circuits require only a constant depth. This improvement requires several changes to the analysis specific to the special case of Haar-random factor states (\Cref{thm:haar}), which we will detail in \Cref{sec:Haar}. First, we show that in this case the Fourier sampling distribution self-averages in a particularly strong sense. Second, we relax the strong multiplicative error condition mentioned above, and generalize Simon's algorithm to a setting which no longer involves uniform samples from the orthogonal subspace, but is skewed towards lower-weight strings. We will show that the Simon-like ``basis coupon collection'' process via Fourier sampling nonetheless succeeds to find the hidden cut under this modification.

We also note this algorithm directly generalizes to the multicut case --- as this simply corresponds to larger Abelian subgroups of this same group action, where the subgroup is generated by $1^{C_i}0^{[n]\setminus C_i}$ for any sub-partition of the qubits $C_i$. The main challenge again is to carefully keep track of the errors in the Fourier sampling distribution in this more general setting.
One can also observe our algorithm does not require knowing the number of cuts in advance, as this can be efficiently inferred from the linear algebra of the obtained Fourier samples.

We remark that one can interpret our algorithm as a combinatorial generalization of the standard SWAP test, the canonical primitive for state comparison in property testing. The SWAP test is indeed a simple instance of StateHSP for the group $G=\Z_2$ which acts as an exchange operation, with Fourier sampling implementing the projective measurement against the symmetric and antisymmetric subspaces. Our own Fourier sampling circuit for the hidden cut problem consists of $n$ parallel amplified SWAP tests which are only entangled through the internal structure of the input state (see \Cref{fig:GFouriersampling}).

\subsection{Outline of the paper}

\Cref{sec:prelims} contains basic preliminaries, as well as the exponential-time algorithm which solves the hidden cut problem using $O(n/\epsilon^2)$ copies.
In \Cref{sec:StateHSP}, we define the state version of the hidden subgroup problem. We adapt the Fourier sampling algorithm to the state problem, and describe a setting in which the state version and the standard version of the hidden subgroup problem produce similar outcomes. In \Cref{sec:HiddenCut}, we describe the efficient algorithm for the hidden cut problem and prove \Cref{thm:main}. By taking advantage of the permutational symmetries of the global state, we design an Abelian group action which fits into the StateHSP framework of \Cref{sec:StateHSP}, and show how the Fourier sampling outcomes concentrate towards a version of Simon's algorithm. We will first describe a non-adaptive, Simon-like Fourier sampling algorithm (\Cref{alg:main}) which finds the cut given $O(n^2/\epsilon^2)$ state copies. Subsequently, we will introduce an adaptive modification of the algorithm (\Cref{alg:adaptive}) and show how this decreases the state copy requirement to the optimal value of $O(n/\epsilon^2)$. \Cref{sec:Haar} is dedicated to the special case of Haar-random states, which will require a more in-depth technical analysis. Specifically, to find the cut with constant-depth circuits in the Haar-random case (\Cref{thm:haar}), we will show a self-averaging property of the Fourier sampling distribution, by approximately diagonalizing the covariance matrix of internal purities of Haar-random states; additionally, this special case requires a modification of Simon's algorithm to allow non-uniform samples. In \Cref{sec:HiddenManyCut} we generalize our results to the ``hidden many-cut problem'', showing how the same algorithm can solve not just for a single bipartition, but can similarly identify arbitrary product state structures. Finally, in \Cref{sec:Discussion} we discuss open questions and possible applications of the hidden cut problem and StateHSP to cryptography and pseudorandomness.

\section{Preliminaries} \label{sec:prelims}

We start by collecting a few basic notions about the geometry of quantum states. For more background, we refer readers to a standard reference such as \cite{nielsen2010quantum}.

\subsection{Distances}

\begin{definition}[Overlap]
    The \emph{overlap} of two pure states $\ket{\psi},\ket{\phi} \in \C^d$ is given by $\abs{\ip{\psi}{\phi}}^2$. 
\end{definition}

\begin{definition}[Trace distance]
    The \emph{trace distance} between two mixed states $\rho, \sigma \in \C^{d \times d}$ is given by
    \begin{equation*}
    \dtr(\rho, \sigma) \coloneqq \frac{1}{2}  \norm{\rho - \sigma}_1,
    \end{equation*}
    where $\norm{\cdot}_1$ denotes the trace norm, also known as the Schatten 1-norm. If $\rho = \ketbra{\psi}$ and $\sigma = \ketbra{\phi}$ are both pure states, then
    \begin{equation}\label{eq:trace-for-product}
        \dtr(\rho, \sigma) =\sqrt{1 - \abs{\ip{\psi}{\phi}}^2}.
    \end{equation}
\end{definition}

\begin{definition}[Distance from a subset]
    Let $\calH$ be a Hilbert space,
    and let $\calP$ be a subset of the pure states in $\calH$.
    Then $\ket{\psi}$ is \emph{$\epsilon$-far from $\calP$} if
    \begin{equation*}
        \dtr(\ketbra{\psi}, \ketbra{\phi}) \geq \epsilon
    \end{equation*}
    for all states $\ket{\phi}$ in $\calP$.
    Via \eqref{eq:trace-for-product}, this is equivalent to
    \begin{equation*}
        \abs{\ip{\psi}{\phi}}^2 \leq 1 - \epsilon^2
    \end{equation*}
    for all states $\ket{\phi}$ in $\calP$.
\end{definition}

\subsection{Product states}

\begin{definition}[Product states]
Let $\calH_A$ and $\calH_B$ be Hilbert spaces.
Then a \emph{product state on $\calH_A \otimes \calH_B$} is a state of the form $\ket{a}_A \otimes \ket{b}_B$.
If the bipartition of the overall Hilbert space $\calH = \calH_A \otimes \calH_B$ is clear from context, we will usually simply refer to $\ket{\psi}$ as a product state.
\end{definition}

Although not every state is a product state,
every state can be written as a superposition of product states
which are orthogonal on both their $A$ and $B$ registers.
This is given by the Schmidt decomposition.

\begin{definition}[Schmidt decomposition]
  Let $\ket{\psi}_{AB} \in \calH_A \otimes \calH_B$ be a bipartite quantitum state.
  Suppose $\calH_A$ and $\calH_B$ have dimensions $d_A$ and $d_B$, respectively, and write $r \coloneqq \min\{d_A, d_B\}$.
  The \emph{Schmidt decomposition of $\ket{\psi}$} is given by
  \begin{equation*}
      \ket{\psi}_{AB} = \sum_{i=1}^r \sqrt{\lambda_i} \cdot \ket{u_i}_A \otimes \ket{v_i}_B,
  \end{equation*}
  where (i) $\lambda_1, \ldots, \lambda_r$ are nonnegative real numbers such that $\lambda_1 + \cdots + \lambda_r = 1$,
  (ii) $\ket{u_1}, \ldots, \ket{u_r}$ are orthonormal vectors in $\calH_A$,
  and (iii) $\ket{v_1}, \ldots, \ket{v_r}$ are orthonormal vectors in $\calH_B$.
  The numbers $\lambda_1, \ldots, \lambda_r$ are known as $\ket{\psi}$'s \emph{Schmidt coefficients}.
\end{definition}

Thus, $\ket{\psi}$ is a product state if and only if its largest Schmidt coefficient is equal to $1$ and all other Schmidt coefficients are equal to $0$.
The next lemma shows a robust version of this statement,
namely that $\ket{\psi}$'s maximum Schmidt coefficient is exactly its largest overlap with any product state.

\begin{proposition}\label{prop:schmidt-overlap}
    Suppose $\ket{\psi}_{AB} \in \calH_A \otimes \calH_B$
    has Schmidt coefficients $\lambda_1 \geq \cdots \geq \lambda_r$. Then $\ket{\psi}$'s maximum squared overlap with any product state is equal to $\lambda_1$.
\end{proposition}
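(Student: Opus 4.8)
The plan is to prove the two inequalities separately: that no product state achieves squared overlap exceeding $\lambda_1$, and that $\lambda_1$ is actually attained. The second half is immediate: writing $\ket{\psi}_{AB} = \sum_{i=1}^r \sqrt{\lambda_i}\,\ket{u_i}_A \otimes \ket{v_i}_B$ in its Schmidt form, the product state $\ket{u_1}_A \otimes \ket{v_1}_B$ has $\ip{u_1 \otimes v_1}{\psi} = \sqrt{\lambda_1}$ by orthonormality of the $\ket{u_i}$'s and the $\ket{v_i}$'s, so its squared overlap with $\ket{\psi}$ equals $\lambda_1$.

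For the upper bound, I would take an arbitrary product state $\ket{a}_A \otimes \ket{b}_B$ with $\ket{a}, \ket{b}$ unit vectors and expand:
\begin{equation*}
    \ip{a \otimes b}{\psi} = \sum_{i=1}^r \sqrt{\lambda_i}\, \ip{a}{u_i}\,\ip{b}{v_i}.
\end{equation*}
Applying the Cauchy--Schwarz inequality to this sum, viewing it as an inner product of the vectors $\big(\sqrt{\lambda_i}\,\ip{a}{u_i}\big)_i$ and $\big(\ip{b}{v_i}\big)_i$, gives
\begin{equation*}
    \abs{\ip{a \otimes b}{\psi}}^2 \;\leq\; \Big(\sum_{i=1}^r \lambda_i \abs{\ip{a}{u_i}}^2\Big)\Big(\sum_{i=1}^r \abs{\ip{b}{v_i}}^2\Big).
\end{equation*}
The second factor is at most $\norm{\ket{b}}^2 = 1$ by Bessel's inequality, since $\ket{v_1}, \dots, \ket{v_r}$ are orthonormal (this also cleanly covers the case $d_B > r$, where $\ket{b}$ may have a component outside $\spn\{\ket{v_i}\}$). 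For the first factor, I would bound $\lambda_i \leq \lambda_1$ for every $i$ and pull it out, leaving $\lambda_1 \sum_i \abs{\ip{a}{u_i}}^2 \leq \lambda_1 \norm{\ket{a}}^2 = \lambda_1$, again by Bessel. Combining, $\abs{\ip{a \otimes b}{\psi}}^2 \leq \lambda_1$ for every product state, which together with the attainment above finishes the proof.

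There is no real obstacle here; the argument is a direct Cauchy--Schwarz estimate. The only point that warrants a moment's care is making sure the orthonormality assumptions are invoked correctly when the local dimensions $d_A, d_B$ exceed the Schmidt rank $r$, so that $\sum_i \abs{\ip{a}{u_i}}^2$ and $\sum_i \abs{\ip{b}{v_i}}^2$ are genuinely $\leq 1$ rather than $=1$; Bessel's inequality handles this uniformly, so no case analysis is needed.
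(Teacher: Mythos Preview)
Your proposal is correct and follows essentially the same approach as the paper: attainment via $\ket{u_1}\otimes\ket{v_1}$, then the upper bound by expanding in the Schmidt basis, applying Cauchy--Schwarz, bounding $\sum_i\abs{\ip{b}{v_i}}^2\leq 1$ by orthonormality, and finally replacing each $\lambda_i$ by $\lambda_1$. Your explicit mention of Bessel's inequality to handle the case where the local dimensions exceed the Schmidt rank is a nice clarification of a point the paper leaves implicit.
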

\begin{proof}
    First, we show that $\ket{\psi}$'s maximum overlap with any product state is at least $\lambda_1$.
    Consider the product state $\ket{u_1}_A \otimes \ket{v_1}_B$. Then
    \begin{equation*}
        \abs{\bra{u_1}_A\otimes \bra{v_1}_B \cdot \ket{\psi}}^2
        = \abs{\bra{u_1}_A\otimes \bra{v_1}_B \cdot \Big(\sum_{i=1}^r \sqrt{\lambda_i} \cdot \ket{u_i}_A \otimes \ket{v_i}_B\Big)}^2
        = \abs{\sqrt{\lambda_1}}^2
        = \lambda_1.
    \end{equation*}
    Next, we show that $\ket{\psi}$'s maximum overlap with any product state is at most $\lambda_1$.
    Let $\ket{a}_A \otimes \ket{b}_B$ be a product state.
    Then
    \begin{align}
        \abs{\bra{a}_A\otimes \bra{b}_B \cdot \ket{\psi}}^2
        &= \abs{\bra{a}_A\otimes \bra{b}_B \cdot \Big(\sum_{i=1}^r \sqrt{\lambda_i} \cdot \ket{u_i}_A \otimes \ket{v_i}_B\Big)}^2\nonumber\\
        &= \abs{\sum_{i=1}^r \sqrt{\lambda_i} \cdot \ip{a}{u_i} \cdot \ip{b}{v_i}}^2\nonumber\\
        &\leq \Big(\sum_{i=1}^r \lambda_i \cdot \abs{\ip{a}{u_i}}^2\Big) \cdot \Big(\sum_{i=1}^r \abs{\ip{b}{v_i}}^2\Big), \label{eq:just-used-cauchy-schwarz}
    \end{align}
    where the last step used the Cauchy-Schwarz inequality.
    Because $\ket{u_1}, \ldots, \ket{u_r}$ are orthonormal
    and $\ket{v_1}, \ldots, \ket{v_r}$ are orthonormal,
    we have that
    \begin{equation*}
        \abs{\ip{a}{u_1}}^2 + \cdots + \abs{\ip{a}{u_r}}^2 \leq 1
        \qquad \text{and} \qquad
        \abs{\ip{b}{v_1}}^2 + \cdots + \abs{\ip{b}{v_r}}^2 \leq 1.
    \end{equation*}
    Plugging this into \eqref{eq:just-used-cauchy-schwarz}, we get that
    \begin{equation*}
        \eqref{eq:just-used-cauchy-schwarz}
        \leq \sum_{i=1}^r \lambda_i \cdot \abs{\ip{a}{u_i}}^2
        \leq \sum_{i=1}^r \lambda_1 \cdot \abs{\ip{a}{u_i}}^2
        \leq \lambda_1.
    \end{equation*}
    This completes the proof.
\end{proof}

Combining \Cref{prop:schmidt-overlap} with \eqref{eq:trace-for-product} gives us the following immediate corollary.

\begin{corollary}\label{cor:not-product-small-schmidt}
Suppose $\ket{\psi}_{AB} \in \calH_A \otimes \calH_B$ is $\epsilon$-far from any product state.
Then its maximum Schmidt coefficient is at most $1 - \epsilon^2$.
\end{corollary}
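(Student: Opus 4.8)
The plan is to derive this directly from \Cref{prop:schmidt-overlap} together with the characterization of $\epsilon$-farness in terms of overlaps given just before the statement. First I would unpack the hypothesis: saying that $\ket{\psi}_{AB}$ is $\epsilon$-far from any product state means, by definition, that $\dtr(\ketbra{\psi},\ketbra{\phi}) \geq \epsilon$ for every product state $\ket{\phi}$ on $\calH_A\otimes\calH_B$; invoking \eqref{eq:trace-for-product}, which gives $\dtr(\ketbra{\psi},\ketbra{\phi}) = \sqrt{1-\abs{\ip{\psi}{\phi}}^2}$ for pure states, this is equivalent to the bound $\abs{\ip{\psi}{\phi}}^2 \leq 1-\epsilon^2$ for every product state $\ket{\phi}$.

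Next I would apply \Cref{prop:schmidt-overlap}, which asserts that $\ket{\psi}$'s maximum squared overlap over all product states equals its top Schmidt coefficient $\lambda_1$. Concretely, the product state $\ket{u_1}_A\otimes\ket{v_1}_B$ built from the first Schmidt vectors attains squared overlap exactly $\lambda_1$. Feeding this particular $\ket{\phi}$ into the inequality from the previous paragraph yields $\lambda_1 = \abs{\bra{u_1}\otimes\bra{v_1}\cdot\ket{\psi}}^2 \leq 1-\epsilon^2$, which is the claimed conclusion.

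There is essentially no obstacle here — the corollary is a one-line consequence of the proposition and the definition — so the only thing to be careful about is logical direction: \Cref{prop:schmidt-overlap} is used in the ``$\geq$'' direction (there \emph{exists} a product state with overlap $\lambda_1$), not the ``$\leq$'' direction, and the farness hypothesis supplies the upper bound on that overlap. A short remark could note that the bound is tight, achieved e.g.\ by $\ket{\psi} = \sqrt{1-\epsilon^2}\,\ket{00} + \epsilon\,\ket{11}$, but this is not needed for the statement.
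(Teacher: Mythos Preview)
Your proposal is correct and matches the paper's approach exactly: the paper simply states that the corollary follows by combining \Cref{prop:schmidt-overlap} with \eqref{eq:trace-for-product}, which is precisely what you spell out.
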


An equivalent characterization of product states
is that $\ket{\psi}_{AB}$ is product if and only  if the reduced density matrix $\psi_A$ is a pure state.
The purity is an analytic measure for how pure a density matrix is.

\begin{definition}[Purity]
    Given a mixed state $\rho \in \C^{d \times d}$, its \emph{purity} is the quantity $\Tr(\rho^2)$.
\end{definition}

The following proposition shows that the purity of $\psi_A$ can be used as a measure for how close $\ket{\psi}_{AB}$ is to being a product state.

\begin{proposition}\label{prop:robust-purity}
    Suppose $\ket{\psi}_{AB} \in \calH_A \otimes \calH_B$ is $\epsilon$-far from any product state.
    Then the purity of $\psi_A$ is at most
    \begin{equation*}
        \Tr(\psi_A^2) \leq 1 - \epsilon^2.
    \end{equation*}
\end{proposition}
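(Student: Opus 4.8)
The plan is to relate the purity of $\psi_A$ directly to the Schmidt coefficients of $\ket{\psi}_{AB}$, and then invoke \Cref{cor:not-product-small-schmidt}. First I would write out the Schmidt decomposition $\ket{\psi}_{AB} = \sum_{i=1}^r \sqrt{\lambda_i}\, \ket{u_i}_A \otimes \ket{v_i}_B$, from which the reduced density matrix is $\psi_A = \Tr_B(\ketbra{\psi}) = \sum_{i=1}^r \lambda_i \ketbra{u_i}$. Since the $\ket{u_i}$ are orthonormal, this is already a spectral decomposition of $\psi_A$, so $\psi_A^2 = \sum_{i=1}^r \lambda_i^2 \ketbra{u_i}$ and therefore $\Tr(\psi_A^2) = \sum_{i=1}^r \lambda_i^2$.

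The remaining step is a short inequality: using $\lambda_i \le \lambda_1$ for every $i$ together with $\sum_i \lambda_i = 1$, we get $\sum_i \lambda_i^2 \le \lambda_1 \sum_i \lambda_i = \lambda_1$. By \Cref{cor:not-product-small-schmidt}, the hypothesis that $\ket{\psi}_{AB}$ is $\epsilon$-far from any product state gives $\lambda_1 \le 1 - \epsilon^2$, and combining the two yields $\Tr(\psi_A^2) \le 1 - \epsilon^2$, as desired.

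I do not anticipate any serious obstacle here — the proof is essentially a two-line computation once the Schmidt decomposition is in hand. The only thing worth being slightly careful about is noting explicitly that the Schmidt decomposition simultaneously diagonalizes $\psi_A$ (this is what makes $\Tr(\psi_A^2) = \sum_i \lambda_i^2$ immediate rather than requiring a separate argument), and that the bound $\sum_i \lambda_i^2 \le \lambda_1$ is tight exactly when $\ket{\psi}$ is a product state, which is consistent with the intuition that purity measures closeness to product. Everything else is routine.
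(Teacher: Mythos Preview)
Your proposal is correct and follows essentially the same approach as the paper's proof: both express $\Tr(\psi_A^2) = \sum_i \lambda_i^2$, bound this by $\lambda_1$ via $\sum_i \lambda_i^2 \le \lambda_1 \sum_i \lambda_i = \lambda_1$, and then invoke \Cref{cor:not-product-small-schmidt}. The only difference is that you spell out the computation of $\psi_A$ from the Schmidt decomposition explicitly, whereas the paper jumps straight to $\Tr(\psi_A^2) = \sum_i \lambda_i^2$.
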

\begin{proof}
    Write $\lambda_1 \geq \cdots \geq \lambda_r$ for the Schmidt coefficients of $\ket{\psi}_{AB}$.
    Then \Cref{cor:not-product-small-schmidt} implies that
    $\lambda_1 \leq 1 - \epsilon^2$.
    Thus, we can bound the purity of $\psi_A$ by
    \begin{equation*}
        \Tr(\psi_A^2)
        = \sum_{i=1}^r \lambda_i^2
        \leq \sum_{i=1}^r \lambda_1 \cdot\lambda_i
        = \lambda_1 \cdot \Big(\sum_{i=1}^r \lambda_i\Big)
        = \lambda_1
        \leq 1 - \epsilon^2. \qedhere
    \end{equation*}
    This completes the proof.
\end{proof}

\subsection{Purity testing}

Given a mixed state
\begin{equation*}
    \rho = \sum_{i=1}^d \alpha_i \cdot \ketbra{v_i},
\end{equation*}
testing if it is actually a pure state
is impossible with only one copy of $\rho$
because no matter how far from pure $\rho$ is,
a single copy of it can be always viewed as a mixture over pure states.
It turns out, however, that \emph{two} copies of $\rho$, i.e.
\begin{equation}\label{eq:two-copies}
    \rho^{\otimes 2} = \sum_{i=1}^d \sum_{j=1}^d \alpha_i\alpha_j \cdot \ketbra{v_i} \otimes \ketbra{v_j},
\end{equation}
suffice for this task,
because if $\rho$ is not a pure state,
this mixture will contain nonzero weight on terms $\ketbra{v_i} \otimes \ketbra{v_j}$
for which $i \neq j$,
consisting of two orthogonal pure states.
We need only be able to detect when two pure states are orthogonal rather than equal,
and this can be done via the well-known \emph{SWAP test} procedure \cite{gottesman2001quantum}.

The most basic component of the SWAP test is the SWAP gate:

\begin{definition}[The SWAP gate]
Let $d$ be an integer.
The \emph{SWAP gate} is the unitary operator $\swapjohn$ acting on $\C^d \otimes \C^d$ such that
\begin{equation*}
    \swapjohn \cdot \ket{i} \otimes \ket{j} = \ket{j} \otimes \ket{i},
\end{equation*}
for all $1\leq i, j \leq d$.
\end{definition}

Then the SWAP test acts as follows.

\begin{definition}[The SWAP test]
    Let $\rho_{A B}$ be a mixed state in $\C^d \otimes \C^d$
    (which will typically be a tensor product state $\rho_A \otimes \sigma_{B}$).
    The \emph{SWAP test} is the quantum algorithm which acts as follows.
    Beginning with the input state $\rho_{A B}$, (i) append an ancilla qubit in the $\ket{+}_{\anc}$ state.
        Next, (ii) apply $\swapjohn_{AB}$ conditioned on the ancilla qubit and then (iii) Hadamard the ancilla qubit.
        Finally, (iv) measure the ancilla qubit and accept if the outcome is ``0''.
        An illustration of the SWAP test applied to a product state $\ket{\alpha} \otimes \ket{\beta}$ can be found in \Cref{subfig-1:swaptest}.
\end{definition}

The SWAP test can equivalently be stated in terms of a two-outcome projective measurement.

\begin{proposition}[SWAP test, projector version]\label{prop:swap-is-projector}
    Write $\Pi_{\swapjohn}$ for the projector $\tfrac{1}{2}(I_{AB} + \swapjohn_{AB})$. Then the SWAP test implements the projective measurement $\{\Pi_{\swapjohn}, I - \Pi_{\swapjohn})$.
\end{proposition}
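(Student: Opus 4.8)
The plan is to trace the joint state of system-plus-ancilla through the four steps of the SWAP test and read off the measurement statistics. Since outcome probabilities are linear in the input density matrix (and the unnormalized post-measurement states are too), it suffices to analyze a pure input $\ket{\psi}_{AB} \in \C^d \otimes \C^d$ and then extend to a general mixed $\rho_{AB}$ by linearity. Before the computation I would record two elementary facts that make $\Pi_{\swapjohn} = \tfrac12(I_{AB} + \swapjohn_{AB})$ an honest projector: from $\swapjohn\ket i\ket j = \ket j \ket i$ one reads off $\swapjohn^\dagger = \swapjohn$ and $\swapjohn^2 = I_{AB}$, whence $\Pi_{\swapjohn}^\dagger = \Pi_{\swapjohn}$ and $\Pi_{\swapjohn}^2 = \tfrac14(I + 2\swapjohn + \swapjohn^2) = \Pi_{\swapjohn}$, and $I - \Pi_{\swapjohn} = \tfrac12(I_{AB} - \swapjohn_{AB})$ is the complementary projector.

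Then I carry out the circuit. After step (i) the joint state is $\ket{+}_{\anc}\otimes\ket{\psi}_{AB} = \tfrac1{\sqrt2}(\ket0 + \ket1)_{\anc}\ket{\psi}_{AB}$. Step (ii), the controlled-$\swapjohn$, leaves the $\ket0$ branch untouched and applies $\swapjohn_{AB}$ to the $\ket1$ branch, giving $\tfrac1{\sqrt2}\big(\ket0_{\anc}\ket{\psi} + \ket1_{\anc}\,\swapjohn\ket{\psi}\big)$. Step (iii), Hadamarding the ancilla, produces
\[
\tfrac12\Big(\ket0_{\anc}(I+\swapjohn)\ket{\psi} + \ket1_{\anc}(I-\swapjohn)\ket{\psi}\Big) = \ket0_{\anc}\,\Pi_{\swapjohn}\ket{\psi} + \ket1_{\anc}\,(I-\Pi_{\swapjohn})\ket{\psi}.
\]
Step (iv), measuring the ancilla in the computational basis, then yields outcome ``0'' with probability $\norm{\Pi_{\swapjohn}\ket{\psi}}^2 = \bra{\psi}\Pi_{\swapjohn}\ket{\psi}$ (using Hermiticity and idempotency) and outcome ``1'' with probability $\bra{\psi}(I-\Pi_{\swapjohn})\ket{\psi}$, with residual state on $AB$ equal to the corresponding normalized projection of $\ket{\psi}$. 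This is precisely the projective measurement $\{\Pi_{\swapjohn}, I - \Pi_{\swapjohn}\}$ applied to $\ket{\psi}_{AB}$.

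Finally I extend to $\rho_{AB} = \sum_k p_k \ketbra{\psi_k}$: summing the pure-state analysis against $p_k$ gives outcome-``0'' probability $\sum_k p_k \bra{\psi_k}\Pi_{\swapjohn}\ket{\psi_k} = \Tr(\Pi_{\swapjohn}\rho_{AB})$ and conditional post-measurement state $\Pi_{\swapjohn}\rho_{AB}\Pi_{\swapjohn}/\Tr(\Pi_{\swapjohn}\rho_{AB})$, with the analogous statement for ``1'', which is exactly the action of the projective measurement $\{\Pi_{\swapjohn}, I - \Pi_{\swapjohn}\}$. I do not expect any genuine obstacle in this argument; the only steps warranting an explicit line are verifying that $\Pi_{\swapjohn}$ is Hermitian and idempotent and that the reduction from mixed to pure inputs is legitimate, both of which are routine.
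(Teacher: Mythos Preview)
Your proposal is correct and follows essentially the same approach as the paper: trace the pure input $\ket{\psi}_{AB}$ through the three unitary steps to reach $\ket{0}_{\anc}\,\Pi_{\swapjohn}\ket{\psi} + \ket{1}_{\anc}\,(I-\Pi_{\swapjohn})\ket{\psi}$ and then measure. You are in fact more thorough than the paper, which stops at the pure-state computation and does not explicitly verify idempotency of $\Pi_{\swapjohn}$ or spell out the extension to mixed $\rho_{AB}$.
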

\begin{proof}
    Given an input state $\ket{\psi}_{AB}$, the SWAP test acts as follows.
    \begin{align*}
        \ket{\psi}_{AB}
        & \longrightarrow \ket{+}_{\anc} \otimes \ket{\psi}_{AB} \tag{append the ancilla}\\
        & \longrightarrow \frac{1}{\sqrt{2}} \cdot\ket{0}_{\anc} \otimes \ket{\psi}_{AB} + \frac{1}{\sqrt{2}}\cdot \ket{1}_{\anc} \otimes (\swapjohn_{AB} \cdot \ket{\psi}_{AB}) \tag{apply the controlled $\swapjohn$}\\
        & \longrightarrow \frac{1}{\sqrt{2}} \cdot\ket{+}_{\anc} \otimes \ket{\psi}_{AB} + \frac{1}{\sqrt{2}}\cdot \ket{-}_{\anc} \otimes (\swapjohn_{AB} \cdot \ket{\psi}_{AB}). \tag{Hadamard the ancilla}
    \end{align*}
    This state is equal to
    \begin{align*}
    &\frac{1}{2} \cdot \ket{0}_{\anc} \otimes (\ket{\psi}_{AB} + \swapjohn_{AB} \cdot \ket{\psi}_{AB})
    + 
    \frac{1}{2} \cdot \ket{1}_{\anc} \otimes (\ket{\psi}_{AB} - \swapjohn_{AB} \cdot \ket{\psi}_{AB})\\
    ={}& \ket{0}_{\anc} \otimes (\Pi_{\swapjohn} \cdot \ket{\psi}_{AB}) + \ket{0}_{\anc} \otimes ((I-\Pi_{\swapjohn}) \cdot \ket{\psi}_{AB}),
    \end{align*}
    where here we used the fact that $I - \Pi_{\swapjohn} = \tfrac{1}{2}(I_{AB} - \swapjohn_{AB})$.
    The SWAP test concludes by measuring the ancilla in the standard basis, which concludes the proof.
\end{proof}

Hence, the probability that the SWAP test accepts on $\rho^{\otimes 2}$ is $\Tr(\Pi_{\swapjohn} \cdot \rho^{\otimes 2}) = \frac{1}{2} + \frac{1}{2} \cdot \Tr(\swapjohn \cdot \rho^{\otimes 2})$.
The next proposition computes the second term.

\begin{proposition}[Purity formula]
\begin{equation*}
    \Tr(\swapjohn \cdot \rho^{\otimes 2})
    = \Tr(\rho^2).
\end{equation*}
\end{proposition}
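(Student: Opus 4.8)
The plan is to leverage the spectral decomposition of $\rho$, which the paper has essentially already written down in \eqref{eq:two-copies}. Write $\rho = \sum_{i=1}^d \alpha_i \ketbra{v_i}$ with $\{\ket{v_i}\}$ an orthonormal eigenbasis, so that $\rho^{\otimes 2} = \sum_{i,j} \alpha_i \alpha_j \, \ketbra{v_i} \otimes \ketbra{v_j}$. By linearity of the trace it then suffices to evaluate $\Tr\big(\swapjohn \cdot (\ketbra{v_i} \otimes \ketbra{v_j})\big)$ for each pair $(i,j)$ and sum the results against the weights $\alpha_i\alpha_j$.

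For the per-term computation, I would use $\swapjohn (\ket{v_i}\otimes\ket{v_j}) = \ket{v_j}\otimes\ket{v_i}$ to rewrite $\swapjohn \cdot (\ketbra{v_i}\otimes\ketbra{v_j})$ as the rank-one operator $(\ket{v_j}\otimes\ket{v_i})(\bra{v_i}\otimes\bra{v_j})$, whose trace is $(\bra{v_i}\otimes\bra{v_j})(\ket{v_j}\otimes\ket{v_i}) = \ip{v_i}{v_j}\ip{v_j}{v_i} = \abs{\ip{v_i}{v_j}}^2$. Since the eigenvectors are orthonormal this equals $\delta_{ij}$, and hence $\Tr(\swapjohn\cdot\rho^{\otimes2}) = \sum_{i,j}\alpha_i\alpha_j\,\delta_{ij} = \sum_i \alpha_i^2 = \Tr(\rho^2)$, as claimed. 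This also makes transparent the remark preceding the proposition: the cross terms $i\neq j$, which carry two orthogonal pure states, contribute nothing, while each diagonal term contributes its full weight.

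An essentially equivalent route — avoiding any appeal to diagonalizability — is to prove the operator identity $\Tr(\swapjohn\cdot(A\otimes B)) = \Tr(AB)$ for arbitrary $A,B$ on $\C^d$ by expanding $\swapjohn = \sum_{k,\ell=1}^d \ketbra{k}{\ell}\otimes\ketbra{\ell}{k}$ in the computational basis, multiplying, and taking the (factorized) trace to get $\sum_{k,\ell}\mel{\ell}{A}{k}\mel{k}{B}{\ell} = \sum_\ell \mel{\ell}{AB}{\ell} = \Tr(AB)$; specializing $A=B=\rho$ finishes it. There is no real obstacle here — the proposition is a two-line computation either way — and the only points requiring a modicum of care are getting the bra/ket ordering right in the basis expansion of $\swapjohn$, and noting that the per-term trace collapses to $\abs{\ip{v_i}{v_j}}^2$ rather than $\ip{v_i}{v_j}^2$, which is precisely what makes it a genuine Kronecker delta.
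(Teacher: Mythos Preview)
Your proposal is correct and follows essentially the same approach as the paper: expand $\rho^{\otimes 2}$ via the spectral decomposition \eqref{eq:two-copies}, compute the per-term trace $\Tr(\swapjohn \cdot \ketbra{v_i}\otimes\ketbra{v_j}) = \abs{\ip{v_i}{v_j}}^2 = \delta_{ij}$, and sum by linearity to get $\sum_i \alpha_i^2 = \Tr(\rho^2)$. Your alternative basis-expansion route proving $\Tr(\swapjohn\cdot(A\otimes B)) = \Tr(AB)$ is a mild generalization the paper does not state, but the primary argument is the same.
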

\begin{proof}
    Write $\rho^{\otimes 2}$ as in \eqref{eq:two-copies}.
    Then we have
    \begin{equation*}
        \Tr(\swapjohn \cdot \ketbra{v_i} \otimes \ketbra{v_j})
        = \Tr(\ketbra{v_j}{v_i} \otimes \ketbra{v_i}{v_j})
        = \abs{\ip{v_i}{v_j}}^2
        = \left\{\begin{array}{rl}
        1 & \text{if $i = j$},\\
        0 & \text{otherwise.}
        \end{array}\right.
    \end{equation*} 
    Extending via linearity,
    \begin{equation*}
        \Tr(\swapjohn \cdot \rho \otimes \rho)
        = \sum_{i=1}^d \sum_{j=1}^d \alpha_i \alpha_j \cdot 1[i= j]
        = \sum_{i=1}^d \alpha_i^2 = \Tr(\rho^2).\qedhere
    \end{equation*}
\end{proof}

Putting everything together gives the following formula for the probability the SWAP test accepts.

\begin{corollary}[SWAP test acceptance probability]\label{cor:swap-test-prob}
    The probability the SWAP test accepts on $\rho^{\otimes 2}$ is
    $\frac{1}{2} + \frac{1}{2} \cdot \Tr(\rho^2)$.
\end{corollary}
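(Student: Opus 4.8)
The plan is to chain together the two immediately preceding results. By \Cref{prop:swap-is-projector}, the SWAP test implements the projective measurement $\{\Pi_{\swapjohn}, I - \Pi_{\swapjohn}\}$ with $\Pi_{\swapjohn} = \tfrac12(I_{AB} + \swapjohn_{AB})$, and it accepts precisely when the first outcome is obtained. So the acceptance probability on input $\rho^{\otimes 2}$ is $\Tr(\Pi_{\swapjohn}\cdot\rho^{\otimes 2})$. First I would expand this by linearity of the trace:
\begin{equation*}
    \Tr(\Pi_{\swapjohn}\cdot\rho^{\otimes 2}) = \tfrac12\Tr(\rho^{\otimes 2}) + \tfrac12\Tr(\swapjohn\cdot\rho^{\otimes 2}) = \tfrac12 + \tfrac12\Tr(\swapjohn\cdot\rho^{\otimes 2}),
\end{equation*}
using that $\rho^{\otimes 2}$ is a density matrix so $\Tr(\rho^{\otimes 2}) = 1$. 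Then I would invoke the purity formula (the proposition just above), which gives $\Tr(\swapjohn\cdot\rho^{\otimes 2}) = \Tr(\rho^2)$, and substitute to conclude that the acceptance probability equals $\tfrac12 + \tfrac12\Tr(\rho^2)$.

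There is no real obstacle here: this is a bookkeeping corollary that simply packages \Cref{prop:swap-is-projector} and the purity formula into a single clean statement, and the only "step" is the one-line linearity expansion above. If anything, the sole thing to be careful about is the normalization convention — making sure that acceptance is identified with the $\Pi_{\swapjohn}$ outcome (rather than its complement) and that $\rho^{\otimes 2}$ is normalized — both of which are already fixed by the definitions given earlier in the section.
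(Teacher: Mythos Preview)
Your proposal is correct and matches the paper's own argument essentially line for line: the paper also writes the acceptance probability as $\Tr(\Pi_{\swapjohn}\cdot\rho^{\otimes 2}) = \tfrac12 + \tfrac12\Tr(\swapjohn\cdot\rho^{\otimes 2})$ via \Cref{prop:swap-is-projector} and then substitutes the purity formula.
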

\noindent
Thus, if $\rho$ is pure,
i.e.\ its purity is 1,
then the SWAP test will always accept,
but if $\rho$ is very mixed,
i.e.\ its purity is close to 0,
then the SWAP test will accept with probability roughly $\frac{1}{2}$.

This also gives an algorithm for testing if a bipartite pure state $\ket{\psi}_{AB}$ is entangled given just two copies $\ket{\psi}_{AB} \otimes \ket{\psi}_{A'B'}$: simply run the SWAP test on the $A$ and $A'$ registers of this two-copy state,
which is equivalent to running the SWAP test on $\psi_A^{\otimes 2}$.
Doing so will accept with probability $\frac{1}{2} + \frac{1}{2} \cdot \Tr(\psi_A^2)$,
which is equal to 1 if $\ket{\psi}_{AB}$ is a product state but is at most $1 - \epsilon^2/2$ if $\ket{\psi}_{AB}$ is $\epsilon$-far from product (via \Cref{prop:robust-purity}).
This algorithm can be thought of as exploiting the fact that $\ket{\psi}_{AB} \otimes \ket{\psi}_{A'B'}$
is unchanged by applying $\swapjohn_{A A'}$ if and only if $\ket{\psi}$ is a product state.

\subsection{Multipartite product states}

\begin{notation}[$n$-qubit systems]
    Much of this paper is about $n$-qubit systems.
    Given a subset $S \subseteq [n]$ of the qubits, we will write $\overline{S} \equiv [n] \setminus S$ for the qubits outside of $S$.
    We will write $\calH_S$ for the Hilbert space consisting of the qubits in $S$,
    and so we will often write a state in $\calH_S$ as $\ket{\psi}_S$, i.e.\ with the ``$S$'' subscript.
\end{notation}

\begin{definition}[Multipartite product states]
    An $n$-qubit state $\ket{\psi}$ is a \emph{multipartite product state}
    if there exists a subset of the qubits $C \subseteq [n]$ such that $\ket{\psi}$
    can be written as $\ket{\psi} = \ket{a}_C \otimes \ket{b}_{\overline{C}}$, for some states
    $\ket{a}_C$ and $\ket{b}_{\overline{C}}$ supported on the qubits in $C$ and $\overline{C}$, respectively.
\end{definition}

We will often consider the case when $\ket{\psi} = \ket{a}_C \otimes \ket{b}_{\overline{C}}$ in which $\ket{a}_C$ and $\ket{b}_{\overline{C}}$ are both $\epsilon$-far from any multipartite product state,
and our goal is to determine $C$.
It is natural to pick a subset $S$ and test if $C = S$
by running the product test on the qubits within $S$.
To analyze this, we first show the following proposition.

\begin{proposition}[Purity across different cuts]\label{fact:epsilonpurity}
    Let $\ket{\psi} = \ket{a}_C \otimes \ket{b}_{\overline{C}}$ in which $\ket{a}_C$ and $\ket{b}_{\overline{C}}$ are both $\epsilon$-far from any multipartite product state.
    Let $S \subseteq [n]$ be a subset of the qubits. Then the purity of $\psi_S$ is $\Tr(\psi_S^2) = 1$ if $S = C$ or $\overline{C}$ and otherwise $\Tr(\psi_S^2) \leq 1 - \epsilon^2$.
\end{proposition}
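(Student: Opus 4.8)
The plan is to reduce everything to the bipartite robust-purity bound of \Cref{prop:robust-purity} via a factorization of the reduced state $\psi_S$. First I would dispatch the easy direction: if $S = C$, then tracing out $\overline{C} = [n]\setminus C$ destroys the $\ket{b}_{\overline{C}}$ factor and leaves $\psi_S = \ketbra{a}_C$, a pure state, so $\Tr(\psi_S^2) = 1$; the case $S = \overline{C}$ is symmetric.

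For the main bound, write $S_1 \coloneqq S \cap C$ and $S_2 \coloneqq S \cap \overline{C}$, so $S = S_1 \sqcup S_2$ and $\overline{S}$ splits as $(C\setminus S_1)\sqcup(\overline{C}\setminus S_2)$. Since $\ket{\psi} = \ket{a}_C\otimes\ket{b}_{\overline{C}}$ is product across the cut $C\mid\overline{C}$, the partial trace over $\overline{S}$ respects this tensor structure, giving $\psi_S = a_{S_1}\otimes b_{S_2}$, where $a_{S_1}\coloneqq \Tr_{C\setminus S_1}\ketbra{a}_C$ and $b_{S_2}\coloneqq\Tr_{\overline{C}\setminus S_2}\ketbra{b}_{\overline{C}}$. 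Hence $\Tr(\psi_S^2) = \Tr(a_{S_1}^2)\cdot\Tr(b_{S_2}^2)$, and since every density matrix has purity at most $1$, it suffices to exhibit one of the two factors that is at most $1-\epsilon^2$.

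This is where the combinatorics enters. Assuming $S$ is a nonempty proper subset with $S\neq C$ and $S\neq\overline{C}$, at least one of $S_1, S_2$ must be a proper nonempty subset of $C$, resp.\ of $\overline{C}$: otherwise $S_1\in\{\emptyset,C\}$ and $S_2\in\{\emptyset,\overline{C}\}$, forcing $S\in\{\emptyset,C,\overline{C},[n]\}$, a contradiction. Say $\emptyset\subsetneq S_1\subsetneq C$ (the other case is identical). Then $\ket{a}_C$, being $\epsilon$-far from every multipartite product state on the qubits of $C$, is in particular $\epsilon$-far from every state product across the bipartition $(S_1, C\setminus S_1)$, so \Cref{prop:robust-purity} applied with $A$ the qubits in $S_1$ yields $\Tr(a_{S_1}^2)\leq 1-\epsilon^2$, whence $\Tr(\psi_S^2)\leq(1-\epsilon^2)\cdot 1$.

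The argument is short, and I do not anticipate a real obstacle; the only points needing care are (i) justifying that the reduced state of a product state factorizes as claimed, which is a one-line computation once one writes out the partial trace over the decomposed complement $\overline{S}$, and (ii) the elementary case analysis ruling out $S_1$ and $S_2$ both trivial. The degenerate choices $S=\emptyset$ and $S=[n]$ give $\psi_S$ pure and fall outside the ``$\leq 1-\epsilon^2$'' clause; they never coincide with the $S=C$ or $S=\overline{C}$ case since a hidden cut is a nonempty proper subset.
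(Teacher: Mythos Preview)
Your proposal is correct and follows essentially the same route as the paper: factor $\psi_S = a_{C\cap S}\otimes b_{\overline{C}\cap S}$, observe that $S\notin\{C,\overline{C}\}$ forces one of the two intersections to be a proper nonempty subset, and then invoke \Cref{prop:robust-purity} on that factor. Your case analysis is in fact slightly more explicit than the paper's, and your remark about the degenerate choices $S=\emptyset$ and $S=[n]$ correctly flags an edge case the paper's statement and proof silently exclude.
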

\begin{proof}
    If $S = C$ then $\psi_S = \ketbra{a}$, which is a pure state, and so its purity is 1;
    a similarly argument applies when $S = \overline{C}$.
    On the other hand, when $S \neq C, \overline{C}$, we have that
    \begin{equation*}
        \psi_S = a_{C \cap S} \otimes b_{\overline{C} \cap S}.
    \end{equation*}
    Because $S \neq C, \overline{C}$,
    it must be the case that either
    $\emptyset \subsetneq C \cap S \subsetneq C$
    or $\emptyset \subsetneq \overline{C} \cap S \subsetneq \overline{C}$; let us assume without loss of generality that the former is true.
    Then because $\ket{a}_C$ is $\epsilon$-far from multiproduct,
    it is $\epsilon$-far from being a product state on the bipartition $(C \cap S, C \cap \overline{S})$. Hence, by \Cref{prop:robust-purity}, we can bound its purity by
    \begin{equation*}
        \Tr(a_{C \cap S}^2) \leq 1 - \epsilon^2.
    \end{equation*}
    As a result, we can bound the purity of the overall state by
    \begin{equation*}
        \Tr(\psi_S^2) = \Tr(a_{C \cap S}^2) \cdot \Tr(b_{\overline{C} \cap S}^2)
        \leq \Tr(a_{C \cap S}^2)
        \leq 1 - \epsilon^2.\qedhere
    \end{equation*}
\end{proof}

Combining this with \Cref{cor:swap-test-prob},
we get the following bound on the probability that the SWAP test on the qubits in $S$ accepts.

\begin{corollary}\label{cor:swap-on-s}
    Let $\ket{\psi} = \ket{a}_C \otimes \ket{b}_{\overline{C}}$ in which $\ket{a}_C$ and $\ket{b}_{\overline{C}}$ are both $\epsilon$-far from any multipartite product state.
    Suppose we are given two copies of $\ket{\psi}$
    and we run the $\swapjohn$ test on some subset $S\subseteq [n]$ of the qubits. Then if $S = C$ or $S = \overline{C}$, the $\swapjohn$ test always accepts. Otherwise, if $S \neq C, \overline{C}$,
    \begin{equation*}
        \mathbb{P}\left[\mathrm{SWAP\;test\; accepts}\right] \leq 1 - \epsilon^2/2.
    \end{equation*}
\end{corollary}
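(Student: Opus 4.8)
The plan is to simply chain together the two results just established: \Cref{fact:epsilonpurity}, which controls the purity $\Tr(\psi_S^2)$ depending on whether $S$ coincides with the cut, and \Cref{cor:swap-test-prob}, which turns a purity into a SWAP-test acceptance probability. The only conceptual point to make explicit is the reduction already used above in the bipartite setting: running the $\swapjohn$ test on the subset $S$ of the two-copy state $\ket{\psi}_{AB}\otimes\ket{\psi}_{A'B'}$ is the same as running the $\swapjohn$ test on the single density matrix $\psi_S^{\otimes 2}$, since $\swapjohn_{SS'}$ acts trivially on $\ket{\psi}\otimes\ket{\psi}$ restricted to $\overline{S}$ and as the SWAP of two copies of the reduced state $\psi_S$ on $S$. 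This is the combinatorial analogue of the observation made just before the ``Multipartite product states'' subsection.

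First I would invoke that reduction to write $\mathbb{P}[\mathrm{SWAP\ test\ accepts}] = \frac12 + \frac12\Tr(\psi_S^2)$ directly from \Cref{cor:swap-test-prob}. Then I would split into the two cases of \Cref{fact:epsilonpurity}: if $S = C$ or $S = \overline{C}$ then $\Tr(\psi_S^2) = 1$ and the expression evaluates to $1$, so the test always accepts; if $S \neq C, \overline{C}$ then $\Tr(\psi_S^2) \le 1 - \epsilon^2$, so the acceptance probability is at most $\frac12 + \frac12(1-\epsilon^2) = 1 - \epsilon^2/2$. That is the entire argument.

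There is essentially no obstacle here — the corollary is a bookkeeping combination of \Cref{fact:epsilonpurity} and \Cref{cor:swap-test-prob}. The only place one should be slightly careful is making sure the ``far from any multipartite product state'' hypothesis on $\ket a_C$ and $\ket b_{\overline C}$ is exactly what \Cref{fact:epsilonpurity} requires (it is, verbatim), so no additional case analysis on the internal bipartitions of $C$ or $\overline C$ is needed at this level — that work has already been absorbed into the proof of \Cref{fact:epsilonpurity}.
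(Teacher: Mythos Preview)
Your proposal is correct and follows exactly the paper's approach: the corollary is stated in the paper as an immediate consequence of combining \Cref{fact:epsilonpurity} with \Cref{cor:swap-test-prob}, and you have simply made the reduction to $\psi_S^{\otimes 2}$ explicit. There is nothing to add.
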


We can also amplify the probability of detecting that $\ket{\psi}$ in the $S \neq C, \overline{C}$ case of \Cref{cor:swap-on-s} by taking additional copies of $\ket{\psi}$.
In particular, suppose we have $2m$ copies of $\ket{\psi}$ and we
group them up into $m$ pairs.
If we run the SWAP test on the qubits in $S$ for each pair and accept only if all $n$ SWAP tests accept,
then the probability we accept is at most
\begin{equation*}
(1-\epsilon^2/2)^m \leq e^{- \frac{1}{2} \epsilon^2 m}.
\end{equation*}
This gives us the following proposition.

\begin{proposition}\label{prop:amplified}
    Given an integer $k$,
    there is a projective measurement $\{\Pi_S, \overline{\Pi}_S\}$ which acts as follows.
    Let $\ket{\psi} = \ket{a}_C \otimes \ket{b}_{\overline{C}}$ in which $\ket{a}_C$ and $\ket{b}_{\overline{C}}$ are both $\epsilon$-far from any multipartite product state.
    Suppose we measure $\ket{\psi}^{\otimes 2m}$
    with $\{\Pi_S, \overline{\Pi_S}\}$.
    If $S = C$ or $S = \overline{C}$, this measurement always accepts. Otherwise, if $S \neq C, \overline{C}$, the probability it accepts is at most $\mathrm{exp}(-\epsilon^2 m/2)$.
\end{proposition}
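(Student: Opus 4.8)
The plan is to define $\Pi_S$ as the ``all-accept'' projector of $m$ parallel SWAP tests, one on each of $m$ disjoint pairs of copies, and then read off its acceptance probability from the purity formula. Concretely, I would label the $2m$ copies as pairs $(1,1'),\dots,(m,m')$, and for each $j\in[m]$ let $\Pi_{\swapjohn}^{(j)} \deff \tfrac12\big(I + \swapjohn_{S_j S_j'}\big)$ be the SWAP-test projector that swaps the $S$-qubits of copy $j$ with the $S$-qubits of copy $j'$ and acts as identity on all remaining qubits. Since distinct pairs involve disjoint sets of copies, the operators $\Pi_{\swapjohn}^{(1)},\dots,\Pi_{\swapjohn}^{(m)}$ act on pairwise-disjoint registers, hence commute, so $\Pi_S \deff \prod_{j=1}^m \Pi_{\swapjohn}^{(j)}$ is again a projector and $\{\Pi_S, \overline{\Pi}_S\}$ with $\overline\Pi_S \deff I - \Pi_S$ is a legitimate two-outcome projective measurement. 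By \Cref{prop:swap-is-projector}, the ``accept'' outcome of $\Pi_S$ coincides with the event that the SWAP test on the qubits in $S$ accepts for every one of the $m$ pairs.

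The second step is to compute $\Tr(\Pi_S \cdot \psi^{\otimes 2m})$, where $\psi \deff \ketbra{\psi}$. After reordering the registers so that the $j$-th pair's qubits are grouped together, $\psi^{\otimes 2m}$ is a product state across the $m$ pairs and $\Pi_S$ factors as a tensor product of the $\Pi_{\swapjohn}^{(j)}$'s across that same factorization; therefore
\begin{equation*}
    \Tr\big(\Pi_S \cdot \psi^{\otimes 2m}\big) = \prod_{j=1}^m \Tr\big(\Pi_{\swapjohn} \cdot \psi^{\otimes 2}\big) = \left(\tfrac12 + \tfrac12 \Tr(\psi_S^2)\right)^{m},
\end{equation*}
where the last equality is \Cref{cor:swap-test-prob} applied to $\rho = \psi_S$ (equivalently, the purity formula applied to the restriction of each pair to its $S$-registers).

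The third step just substitutes \Cref{fact:epsilonpurity}. If $S = C$ or $S = \overline C$ then $\Tr(\psi_S^2) = 1$, so the displayed quantity equals $1$ and the measurement accepts with certainty. Otherwise $S \neq C, \overline C$ forces $\Tr(\psi_S^2) \le 1 - \epsilon^2$, whence the acceptance probability is at most $(1 - \epsilon^2/2)^m \le e^{-\epsilon^2 m/2}$ using $1 - x \le e^{-x}$. This matches the claimed bound (and the free parameter in the statement, written ``$k$'', should be $m$).

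\textbf{Main obstacle.} There is no real difficulty here; the only point requiring care is the bookkeeping in the first step — verifying that the $m$ single-pair SWAP-test projectors genuinely commute (because they act on disjoint copy-registers) so that their product is a projector defining a valid projective measurement, and that the acceptance probability factorizes cleanly because $\ket{\psi}^{\otimes 2m}$ is product across the pairing. Everything else is a direct appeal to \Cref{prop:swap-is-projector}, \Cref{cor:swap-test-prob}, and \Cref{fact:epsilonpurity}.
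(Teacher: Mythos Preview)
Your proposal is correct and follows essentially the same approach as the paper: group the $2m$ copies into $m$ pairs, run the SWAP test on the $S$-qubits of each pair, accept only if all tests accept, and bound the acceptance probability via \Cref{cor:swap-on-s} (equivalently \Cref{cor:swap-test-prob} plus \Cref{fact:epsilonpurity}). The paper's argument is terser---it simply states that the $m$ tests are independent and each projective---whereas you make the commutativity and factorization explicit; your observation that the ``$k$'' in the statement should be ``$m$'' is also correct.
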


That the measurement is projective follows from the fact that it is performing $m$ SWAP tests, and each SWAP test is a projective measurement due to \Cref{prop:swap-is-projector}.

\subsection{An information theoretic algorithm for the hidden cut problem}
\label{sec:infotheoretic}

Harrow, Lin, and Montanaro~\cite{harrow2017sequential}
studied the problem of testing whether a given $n$-qubit state $\ket{\psi}$ is a multipartite product state or is $\epsilon$-far from all multipartite product states.
The key intuition is that if $\ket{\psi}$ is a multipartite product state,
then there exists an $S$ such that the measurement $\{\Pi_S, \overline{\Pi}_S\}$ accepts with probability 1.
On the other hand, if $\ket{\psi}$ is $\epsilon$-far from all multipartite product states,
then the measurement will accept with probability at most $\mathrm{exp}(-\epsilon^2 m/2)$,
which can be made smaller than, say, $100^{-n}$ by taking $m = O(n/\epsilon^2)$.
Since this is so small, we can apply their quantum OR bound to combine all $(2^n-2)$ different $\{\Pi_S, \overline{\Pi}_S\}$ measurements into a single measurement $\{Q, \overline{Q}\}$ which always accepts on multipartite product states
and accepts with exponentially small probability on states which are $\epsilon$-far from multipartite product.

We now observe that if instead of combining these measurements with the quantum OR bound,
we combine them with Gao's quantum union bound,
we get a sample-efficient algorithm for finding the cut $C$. 

\begin{fact}[Gao's quantum union bound \cite{Gao15}]
    Let $\rho$ be a density matrix.
    For each $1 \leq i \leq k$, let $\{\Pi_i, \overline{\Pi}_i\}$ be a two outcome projective measurement, and write $\mathrm{err}_i \coloneqq \Tr(\Pi_i \cdot \rho)$.
    If we measure $\rho$ with each $\{\Pi_i, \overline{\Pi}_i\}$ measurement in sequence from $i = 1$ to $k$, then the probability that we only observe the $\overline{\Pi}_i$ outcomes is at least $1 - 4 \cdot (\mathrm{err}_1 + \cdots + \mathrm{err}_k)$.
\end{fact}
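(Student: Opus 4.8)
The plan is to reduce to pure states and then run a telescoping argument, but summed \emph{globally} rather than applied term by term. Both the quantity of interest, $\Tr(\overline{\Pi}_k\cdots\overline{\Pi}_1\,\rho\,\overline{\Pi}_1\cdots\overline{\Pi}_k)$, and the target $1-4\sum_i\Tr(\Pi_i\rho)$ are affine in $\rho$, so it suffices to prove the inequality for $\rho=\ketbra{\psi}$ and then take convex combinations. Write $P_i\coloneqq\overline{\Pi}_i=I-\Pi_i$ and $\epsilon_i\coloneqq\mel{\psi}{\Pi_i}{\psi}=\norm{(I-P_i)\ket{\psi}}^2$.

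First I would introduce the sub-normalized post-measurement vectors $\ket{v_0}=\ket{\psi}$, $\ket{v_j}=P_j\ket{v_{j-1}}=P_jP_{j-1}\cdots P_1\ket{\psi}$, and the ``error increments'' $\ket{u_j}\coloneqq(I-P_j)\ket{v_{j-1}}$, so that $\ket{v_{j-1}}=\ket{v_j}+\ket{u_j}$ is an orthogonal decomposition. Unwinding the sequential measurement on the pure input shows the probability of observing only $\overline{\Pi}_i$ outcomes is $\norm{\ket{v_k}}^2$, and applying Pythagoras at each step telescopes to $1-\norm{\ket{v_k}}^2=\sum_{j=1}^k\norm{\ket{u_j}}^2=:X$; so it remains to show $X\le 4\sum_i\epsilon_i$. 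The hard part is here: the obvious estimate, using $\ket{v_{j-1}}=\ket{\psi}-\sum_{l<j}\ket{u_l}$ and the triangle inequality, gives the recursion $\norm{\ket{u_j}}\le\sqrt{\epsilon_j}+\sum_{l<j}\norm{\ket{u_l}}$, whose solution blows up exponentially in $k$ and is useless. Instead, one must sum an exact identity: since $\ket{u_j}=(I-P_j)\ket{\psi}-\sum_{l<j}(I-P_j)\ket{u_l}$ and $I-P_j$ fixes $\ket{u_j}$, one reads off $\sum_{l\le j}\ip{u_j}{u_l}=\mel{u_j}{(I-P_j)}{\psi}$ for every $j$. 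Summing over $j$, the $l=j$ terms contribute $X$, and the $l<j$ terms combine with the trivial inequality $\norm{\sum_j\ket{u_j}}^2=X+2\,\mathrm{Re}\sum_{l<j}\ip{u_j}{u_l}\ge 0$ to yield $X\le 2\,\mathrm{Re}\sum_j\mel{u_j}{(I-P_j)}{\psi}$. Two Cauchy--Schwarz steps then finish it: $\mathrm{Re}\sum_j\mel{u_j}{(I-P_j)}{\psi}\le\sum_j\norm{\ket{u_j}}\sqrt{\epsilon_j}\le\sqrt{X}\,(\sum_j\epsilon_j)^{1/2}$, so $X\le 2\sqrt{X}\,(\sum_j\epsilon_j)^{1/2}$, hence $X\le 4\sum_j\epsilon_j$.

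The two places needing care are: (i) the affine-in-$\rho$ reduction --- cleanest is to write $\rho=\sum_a q_a\ketbra{\psi_a}$ and average the pure-state bound, since $\Tr(\overline{\Pi}_k\cdots\overline{\Pi}_1\,\rho\,\overline{\Pi}_1\cdots\overline{\Pi}_k)=\sum_a q_a\norm{\overline{\Pi}_k\cdots\overline{\Pi}_1\ket{\psi_a}}^2$ and $\Tr(\Pi_i\rho)=\sum_a q_a\mel{\psi_a}{\Pi_i}{\psi_a}$; and (ii) the cross terms $\ip{u_j}{u_l}$ need not be individually real even though $X$ is, so one should pass to real parts throughout when summing the identity, using $\sum_{j\ne l}\ip{u_j}{u_l}=2\,\mathrm{Re}\sum_{l<j}\ip{u_j}{u_l}$; with that care the argument goes through verbatim. (As a sanity check the constant is sharp: equal projectors make the bound degrade gracefully, while a ``rotating projector'' family on a single qubit shows the factor $4$ cannot be improved.)
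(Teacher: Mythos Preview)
The paper does not prove this statement: it is stated as a cited fact from \cite{Gao15} with no accompanying proof. Your argument is correct and is essentially Gao's original proof --- the reduction to pure states, the telescoping decomposition $\ket{v_{j-1}}=\ket{v_j}+\ket{u_j}$, the key identity $\sum_{l\le j}\ip{u_j}{u_l}=\mel{u_j}{(I-P_j)}{\psi}$, and the two Cauchy--Schwarz steps are all exactly the ingredients in Gao's paper. The only remark I would make is that your parenthetical claim about the constant $4$ being sharp is not quite right as stated (later refinements such as O'Donnell--Venkateswaran give slightly better constants under the same hypotheses), but this does not affect the correctness of the bound you were asked to prove.
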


\begin{theorem}[Information theoretic algorithm for the hidden cut problem]
    Let $\ket{\psi} = \ket{a}_C \otimes \ket{b}_{\overline{C}}$ in which $\ket{a}_C$ and $\ket{b}_{\overline{C}}$ are both $\epsilon$-far from any multipartite product state.
    There is an algorithm which can identify $C$ with probability at least $99\%$ given $m = O(n/\epsilon^2)$ copies of $\ket{\psi}$.
\end{theorem}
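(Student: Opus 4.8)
The plan is to turn the exhaustive search over candidate cuts into a single sequential measurement procedure assembled from the amplified SWAP tests of \Cref{prop:amplified}, and to control the disturbance that earlier measurements inflict on later ones using Gao's quantum union bound. First I would fix $m = \Theta(n/\epsilon^2)$ large enough that $e^{-\epsilon^2 m/2} \le 2^{-n-1}/100$ and take $2m$ copies of $\ket{\psi}$, writing $\rho \coloneqq \ketbra{\psi}^{\otimes 2m}$. Since a bipartition $\{C,\overline{C}\}$ has exactly one side containing qubit $1$, it suffices to enumerate the family $\calS \coloneqq \{S \subseteq [n] : 1 \in S,\ S \neq [n]\}$: the true cut is represented by a unique $S^\star \in \calS$ with $S^\star \in \{C,\overline{C}\}$, and outputting $S^\star$ pins down the cut. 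The algorithm fixes an arbitrary ordering $S_1,\dots,S_k$ of $\calS$ (with $k = 2^{n-1}-1$), applies the projective measurements $\{\Pi_{S_1},\overline{\Pi}_{S_1}\},\dots,\{\Pi_{S_k},\overline{\Pi}_{S_k}\}$ of \Cref{prop:amplified} in sequence to the same $2m$-copy register, and outputs the first $S_i$ whose measurement accepts (halting early is clearly equivalent to running all $k$ measurements and reporting the first acceptance). This uses only $2m = O(n/\epsilon^2)$ copies, at the cost of exponential running time.

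For the analysis I would use that, by \Cref{prop:amplified}, $\Tr(\Pi_{S^\star}\rho) = 1$ and hence $\Tr(\overline{\Pi}_{S^\star}\rho) = 0$, while for every $S \in \calS$ with $S \neq S^\star$ --- such an $S$ is neither $C$, $\overline{C}$, $\emptyset$, nor $[n]$, since the partner set $\overline{S^\star}$ is not in $\calS$ --- one has $\Tr(\Pi_S\rho) \le e^{-\epsilon^2 m/2}$. Purely for the analysis, relabel the two outcomes of the step measuring $S^\star$ so that its ``error'' outcome is $\overline{\Pi}_{S^\star}$ and, at every other step, the error outcome is $\Pi_{S_i}$; then ``no step ever outputs its error outcome'' is exactly the event that every $S_i \neq S^\star$ rejects while $S^\star$ accepts, which forces the algorithm to output $S^\star$. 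Gao's quantum union bound, applied to this fixed sequence with all error probabilities evaluated on the undisturbed $\rho$, then gives
\begin{equation*}
    \Pr[\text{algorithm outputs } S^\star] \;\ge\; 1 - 4\Big(\textstyle\sum_{S_i \neq S^\star}\Tr(\Pi_{S_i}\rho) + \Tr(\overline{\Pi}_{S^\star}\rho)\Big) \;\ge\; 1 - 4\cdot 2^{n-1}\, e^{-\epsilon^2 m/2} \;\ge\; \tfrac{99}{100},
\end{equation*}
by the choice of $m$, which is the claim.

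The one genuinely delicate point is that all $k$ measurements act on a single register, so each $\{\Pi_{S_i},\overline{\Pi}_{S_i}\}$ perturbs the state seen by its successors; a naive union bound over the ``bad'' subsets would have to track this accumulated perturbation (feasible via the gentle measurement lemma, but considerably messier). Gao's union bound is precisely the device that avoids this, since it bounds the total failure probability in terms of error probabilities on the pristine $\rho$ alone; the only trick required is the outcome relabeling at the $S^\star$-step, which recasts ``the algorithm succeeds'' into the ``we never observe an error outcome'' form that the union bound is phrased in. Restricting the enumeration to subsets containing a fixed qubit is a minor bookkeeping convenience that simultaneously breaks the $C$-versus-$\overline{C}$ symmetry and keeps the candidate family away from the degenerate sets $\emptyset$ and $[n]$ (on which \Cref{prop:amplified} says nothing useful).
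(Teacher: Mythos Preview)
Your proof is correct and follows essentially the same approach as the paper: amplify the per-subset SWAP test via \Cref{prop:amplified}, sequentially test candidate subsets on a single $O(n/\epsilon^2)$-copy register, and control the cumulative disturbance with Gao's quantum union bound, relabeling the accept/reject outcomes at the true-cut step. The only cosmetic differences are that you enumerate subsets containing qubit~1 (a clean way to break the $C$/$\overline{C}$ symmetry, where the paper instead orders all nontrivial subsets and assumes WLOG that $C$ precedes $\overline{C}$) and that you apply the union bound across the full sequence rather than truncating at the true cut; neither changes the argument.
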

\begin{proof}
    Set $m = O(n/\epsilon^2)$ so that $\mathrm{exp}(-\epsilon^2 m / 2) \leq 100^{-n}$.
    Then for each subset $S$, \Cref{prop:amplified} gives us a projective measurement $\{\Pi_S, \overline{\Pi}_S\}$ which accepts with probability $1$ if $S = C, \overline{C}$ and with probability at most $100^{-n}$ if $S \neq C, \overline{C}$. Set $N = 2^n - 2$ and pick an arbitrary order $S_1, \ldots, S_N$ on the nontrivial subsets of~$[n]$. The algorithm is as follows:
    given $m$ copies of $\ket{\psi}$,
    perform the $N$ measurements $\{\Pi_{S_1}, \overline{\Pi}_{S_1}\}$ through $\{\Pi_{S_N}, \overline{\Pi}_{S_N}\}$ in order until the first time observing a $\Pi_{S_i}$ outcome; when this happens, output ``$S_i$'' and terminate.

    Suppose without loss of generality that $S_i$ is equal to the true cut $C$, and none of the previous $S_j$'s is equal to $\overline{C}$.
    The algorithm succeeds if the first $i-1$ measurements reject and the $i$-th measurement accepts. To compute the probability that this does \emph{not} happen, we can apply Gao's quantum union bound with $\Pi_j = \Pi_{S_j}$ for each $1 \leq j \leq i - 1$ and $\Pi_i = \overline{\Pi}_{S_i}$.
    Then $\mathrm{err}_j = \Tr(\Pi_{S_j} \cdot \rho) \leq 100^{-n}$ for each $1 \leq j \leq i-1$
    and $\mathrm{err}_i = \Tr(\overline{\Pi}_{S_i} \cdot \rho) = 0$.
    Then the union bound says that the probability the algorithm does not succeed is at most
    \begin{equation*}
        4 \cdot (\mathrm{err}_1 + \cdots + \mathrm{err}_{i-1} + \mathrm{err}_i)
        \leq 4 \cdot (i-1) \cdot \frac{1}{100^n}
        \leq \frac{4 N}{100^n} \leq 0.01.
    \end{equation*}
    This completes the proof.
\end{proof}

Note that the multipartite product state detection algorithm of Harrow, Lin, and Montanaro runs in exponential time, because it involves computing the quantum OR of an exponentially large number of measurements and then implementing that (likely computationally infeasible) measurement on $\ket{\psi}^{\otimes m}$.
Similarly, this algorithm for the hidden cut problem also requires exponential time as it performs exponentially many measurements in sequence.

\section{The State Hidden Subgroup Problem}\label{sec:StateHSP}

To produce our algorithm we introduce a \emph{quantum state} version of the hidden subgroup problem which may be of independent interest. 
Our definition is motivated by the observation that in the hidden cut problem, the input states have certain symmetries determined by the secret cut. For example, the Haar measure is invariant under arbitrary unitaries. If we instantiate the hidden cut problem with two $n/2$-qubit Haar random states separated by a random cut, this means that the input states to the hidden cut problem, when viewed as a density matrix, are invariant under the action of a group isomorphic to $\mathrm{U}(2^{n/2}) \times \mathrm{U}(2^{n/2})$ --- the issue is that we don't know the qubit bipartition which defines the specific symmetry group.

This sounds related to the well-studied Hidden Subgroup Problem (HSP) \cite[Section 5.4.3]{nielsen2010quantum}, in which one is given a function $f:G\rightarrow \{0,1\}^n$ invariant on left cosets of a subgroup $H<G$, with the goal of learning $H$:
\begin{definition}[Hidden subgroup problem (HSP)]
     A function $f: G \rightarrow L$ from a finite group $G$ to a set of discrete labels $L$ is said to {\em hide} a subgroup $H \leq G$ if it is constant on the (left) cosets of $H$, and different across the cosets, in other words $f(x)=f(y)$ if and only if $x^{-1}y\in H$.
     The hidden subgroup problem is the task of determining the hidden subgroup $H$ from as few queries to the function $f$ as possible, assuming black-box access to an oracle implementation $O_f:\ket{g}\ket{0} \mapsto \ket{g}\ket{f(g)}$.
\end{definition}
One crucial difference, however, is that the HSP takes as input a \emph{function} respecting certain subgroup symmetries, while the hidden cut problem takes as input \emph{quantum states} invariant under a certain subgroup action. Motivated by this observation, we define a quantum state version of the HSP:
\begin{customdefinition}{\ref{def:StateHSP}}[The state hidden subgroup problem (StateHSP) --- restated]
    Let $G$ be a finite group with a unitary representation $R:G\rightarrow \mathrm{U}(d)$. Let $\C^G$ denote a Hilbert space in the regular representation of $G$, meaning $\C^G=\mathrm{span}\{\ket{g}\}_{g\in G}$ where $\ip{g}{h}=\delta_{g,h}$. Assume efficient implementation of the controlled group action $U_R=\sum_{g\in G}\dyad{g}\otimes R(g)$ acting on $\C^G\otimes \C^d$. Assume access to (copies of) a quantum state $\ket{\psi} \in \C^d$ with the following properties:
\begin{itemize}[leftmargin=*]
    \item $\ket{\psi}$ is invariant under the action of a subgroup $H$, i.e.\  for all $h\in H$, $R(h)\ket{\psi} = \ket{\psi}$.
    \item $\ket{\psi}$ is acted on nontrivially by elements outside the subgroup: for any $g\notin H$, $\abs{\mel{\psi}{R(g)}{\psi}} \leq 1-\epsilon$,
\end{itemize}
where the parameter $\epsilon$, which we call the ``orthogonality allowance'', can depend on the dimension $d$, the group $G$, and the representation $R$. The goal is to identify the hidden subgroup $H$.
\end{customdefinition}
Here we are assuming one has efficient access to the group representation --- i.e.\  given $g\in G$, one can apply $R(g)$ efficiently. In this sense the problem is similar to the black box group model of computing (e.g.\ employed in \cite{watrous2000succinct}), but defined with respect to a representation of the group other than the left regular representation. Our definition can also be viewed as inspired by recent works studying quantum state/unitary variants of complexity classes and cryptography such as \cite{bostanci2023unitary,rosenthal2021interactive,lombardi2024one,zhandry2023quantum}. We emphasize that the value of $\epsilon$ could vary substantially between different representations, so the scaling behavior of $\epsilon$ might significantly affect the difficulty of this problem. Additionally, this framework can accommodate problems in which either the parent group $G$, or the representation $R$ and its dimension $d$, or possibly both, can depend on the specific underlying parameter of the problem. Multiple variations can be imagined, such as an additional promise that the hidden subgroups are mutually conjugate (a common HSP flavor), or introducing unknowns about the specific group representation $R$.

We note that a recent set of works has studied the problem of determining whether a given quantum state is preserved by a known, specific group action \cite{laborde2023testing, rethinasamy2023quantum}. This particular property testing task can be seen as a special case of decisional StateHSP, in which the role of the hidden subgroup $H$ is played by the parent group $G$ itself. Finally, we notice a connection to the problem of {\em state isomorphism} \cite{lockhart2017quantum}, which asks whether two input states can be made equal under a permutation of the qubits.

\subsection{Coset states and the standard HSP approach}

Despite the syntactic differences between the HSP and the StateHSP, there is a sense in which the HSP can be viewed as a special case of the StateHSP.
The ``standard method'' for the HSP~\cite{GSVV04} involves preparing a uniform superposition over the elements of $G$ and feeding it into the $f$ oracle, resulting in the state
\begin{equation}\label{eq:already-state-HSP}
\ket{\psi} = \frac{1}{\sqrt{|G|}} \cdot \sum_{x \in G} |x\rangle_G \otimes \ket{f(x)}_L.
\end{equation}
Next, one discards the label register, resulting in a uniform mixture of coset states
\begin{equation*}
\frac{1}{|G|} \cdot \sum_{g \in G} \ketbra{gH}, \qquad \text{where } \ket{gH}\coloneqq \frac{1}{\sqrt{|H|}} \cdot \sum_{h \in H} \ket{gh}
\text{ is a coset state.}
\end{equation*}
One can then repeatedly run this procedure to generate multiple coset states, and the task is to use them to learn $H$.

However, even before discarding the label register,
the state in Equation~\eqref{eq:already-state-HSP} is already an instance of the StateHSP.
In particular, let $R : G \rightarrow \mathrm{U}(d)$ be the \emph{right regular representation} of $G$,
meaning that it acts on $\C^G$ via $R(g) \cdot \ket{x} = \ket{x g^{-1}}$.
Suppose $f$ hides the subgroup $H$.
Then $\ket{\psi}$ is invariant under the action of $H$, because for any $h \in H$,
\begin{align*}
R(h)_G \cdot \ket{\psi}
&= \frac{1}{\sqrt{|G|}}\cdot \sum_{x \in G} |x h^{-1}\rangle_G \otimes \ket{f(x)}_L\\
&= \frac{1}{\sqrt{|G|}}\cdot \sum_{y \in G} |y\rangle_G \otimes \ket{f(yh)}_L
= \frac{1}{\sqrt{|G|}}\cdot \sum_{y \in G} |y\rangle_G \otimes \ket{f(y)}_L
= \ket{\psi},
\end{align*}
where we used the fact that $f(y) = f(yh)$ because $y^{-1} y h = h \in H$.
On the other hand, for any $g \notin H$,
\begin{equation*}
R(g)_G \cdot \ket{\psi}
= \frac{1}{\sqrt{|G|}}\cdot \sum_{x \in G} |x g^{-1}\rangle_G \otimes \ket{f(x)}_L
= \frac{1}{\sqrt{|G|}}\cdot \sum_{y \in G} |y\rangle_G \otimes \ket{f(y g)}_L.
\end{equation*}
But $f(y) \neq f(yg)$ because $y^{-1} y g = g \notin H$. 
This means that for all $g \notin H$,
\begin{equation*}
\bra{\psi} R(g)_G  \ket{\psi} = 0,
\end{equation*}
and so this state satisfies the definition of the StateHSP with an orthogonality allowance of $\epsilon = 1$.
In general, we will see that instances of the StateHSP where $\epsilon$ is close to 1 act like instances of the traditional HSP, which we can sometimes solve efficiently.

\subsection{Fourier sampling in HSP vs.\  StateHSP}

We begin by reviewing the Fourier sampling approach to solving the standard hidden subgroup problem, which we will proceed to generalize to the StateHSP setting. The construction at its core is the group Fourier transform:

\begin{definition}[Group Fourier transform \cite{diaconis1988group}]
Let $G$ be a finite group (not necessarily Abelian), and let $\{\rho_\lambda\}_{\lambda \in \Irr{G}}$ be a full set of irreducible unitary $G$-representations (irreps), such that each $\rho_\lambda:G\to\U(d_\lambda)$ is a unitary irrep of $G$ of dimension $d_\lambda$. Then, the group Fourier transform is the $\abs{G}\times\abs{G}$ unitary $\calF_G$ which transforms from the regular representation basis $\{\ket{g}\}_{g\in G}$ to a Schur basis $\{\ket{\lambda, i, j}\}_{\substack{\lambda \in \Irr{G}\\i,j \in [d_\lambda]}}$. Explicitly:
\begin{align}\label{eq:GFourierTransform}
    \calF_G \ket{g} = \sum_{\substack{\lambda \in \Irr{G}\\i,j \in [d_\lambda]}} \sqrt\frac{d_\lambda}{\abs{G}} \rho_\lambda(g)_{i,j}\ket{\lambda,i,j}, && \calF^{-1}_G\ket{\lambda,i,j} = \sqrt\frac{d_\lambda}{\abs{G}}\sum_{g\in G}\rho_\lambda(g^{-1})_{j,i}\ket{g}\,.
\end{align}
\end{definition}

The Fourier sampling approach to the generic HSP becomes possible when there is an efficient circuit for the group Fourier transform. This is usually a safe assumption if the group is Abelian; efficient circuits for the non-Abelian quantum Fourier transform are known for several important groups including the dihedral and symmetric groups, but other groups are conjectured not to admit efficient QFT circuits \cite{moore2006generic}. Given an HSP with hidden subgroup $H<G$, the measurement outcome of a so-called ``weak'' Fourier sampling circuit are samples from a distribution over the irrep labels  $\lambda\in\Irr{G}$:

\begin{fact}[Weak Fourier sampling in HSP \cite{hallgren2003hidden}]\label{fact:HSPFourierSampling} In an HSP over the parent group $G$ with a hidden subgroup $H<G$, the weak Fourier sampling outcome distribution over the labels $\lambda \in \Irr{G}$ is given by:
\begin{equation}\label{eq:HSPFourierDistr}
    \mathrm{P}_{\mathrm{HSP}}[\lambda] = \frac{d_\lambda}{\abs{G}} \sum_{h\in H} \chi_\lambda(h)\,,
\end{equation}
where $\chi_\lambda$ denotes the corresponding irreducible character of $G$, i.e.\  $\chi_\lambda(g)=\Tr\rho_\lambda(g)$. 
\end{fact}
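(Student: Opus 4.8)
The plan is to compute the measurement outcome distribution of the weak Fourier sampling circuit directly from the definition of the group Fourier transform in \eqref{eq:GFourierTransform}. Recall the standard HSP workflow: prepare the uniform superposition $\frac{1}{\sqrt{|G|}}\sum_{x\in G}\ket{x}$ in the group register, apply the $f$-oracle and discard the label register to obtain the mixed coset state $\rho = \frac{1}{|G|}\sum_{g\in G}\dyad{gH}$ where $\ket{gH} = \frac{1}{\sqrt{|H|}}\sum_{h\in H}\ket{gh}$, then apply $\calF_G$ and measure the irrep-label register. So the first step is to express the probability of outcome $\lambda$ as $\mathrm{P}_{\mathrm{HSP}}[\lambda] = \Tr\!\big(\Pi_\lambda\, \calF_G\, \rho\, \calF_G^{-1}\big)$, where $\Pi_\lambda = \sum_{i,j\in[d_\lambda]}\dyad{\lambda,i,j}$ is the projector onto the block labeled $\lambda$ in the Schur basis.

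Next I would reduce to a single coset state by linearity: since $\rho$ is the average of $\dyad{gH}$ over $g\in G$, and since a quick character-theoretic calculation (or the left-invariance of $\rho$) shows the answer is the same for every coset, it suffices to compute $\Tr\!\big(\Pi_\lambda\,\calF_G\,\dyad{H}\,\calF_G^{-1}\big)$ for the trivial coset $\ket{H}=\frac{1}{\sqrt{|H|}}\sum_{h\in H}\ket{h}$. Using \eqref{eq:GFourierTransform}, $\calF_G\ket{h} = \sum_{\lambda,i,j}\sqrt{d_\lambda/|G|}\,\rho_\lambda(h)_{i,j}\ket{\lambda,i,j}$, so the amplitude of $\ket{\lambda,i,j}$ in $\calF_G\ket{H}$ is $\frac{1}{\sqrt{|H|}}\sqrt{d_\lambda/|G|}\sum_{h\in H}\rho_\lambda(h)_{i,j}$. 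Therefore
\begin{equation*}
\mathrm{P}_{\mathrm{HSP}}[\lambda] = \sum_{i,j\in[d_\lambda]} \frac{d_\lambda}{|G|\,|H|}\,\Big|\sum_{h\in H}\rho_\lambda(h)_{i,j}\Big|^2 = \frac{d_\lambda}{|G|\,|H|}\,\Big\|\sum_{h\in H}\rho_\lambda(h)\Big\|_F^2,
\end{equation*}
where $\|\cdot\|_F$ is the Frobenius norm. The final step is to evaluate this Frobenius norm. The operator $P_\lambda \coloneqq \frac{1}{|H|}\sum_{h\in H}\rho_\lambda(h)$ is the projector onto the $H$-invariant subspace of the irrep $\rho_\lambda$ (it squares to itself by the rearrangement $h\mapsto hh'$ and it is Hermitian since $H$ is closed under inverses and $\rho_\lambda$ is unitary). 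Hence $\big\|\sum_{h\in H}\rho_\lambda(h)\big\|_F^2 = |H|^2\,\|P_\lambda\|_F^2 = |H|^2\,\Tr(P_\lambda^2) = |H|^2\,\Tr(P_\lambda) = |H|\sum_{h\in H}\chi_\lambda(h)$, using $\Tr\rho_\lambda(h)=\chi_\lambda(h)$. Substituting back gives exactly $\mathrm{P}_{\mathrm{HSP}}[\lambda] = \frac{d_\lambda}{|G|}\sum_{h\in H}\chi_\lambda(h)$, as claimed.

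I don't expect a genuine obstacle here — this is the textbook weak-Fourier-sampling computation — but the one place to be careful is justifying that $P_\lambda$ is an orthogonal projector (needing $H$ closed under inversion and $\rho_\lambda$ unitary, so $\rho_\lambda(h)^\dagger = \rho_\lambda(h^{-1})$) and that the coset averaging is legitimate, i.e.\ that the per-coset outcome distribution is coset-independent; the latter follows because conjugating $\dyad{gH}$ under $\calF_G$ differs from the $g=e$ case only by a unitary acting within each $\lambda$-block (the left action of $g$ intertwines with $\rho_\lambda(g)$), which commutes with $\Pi_\lambda$.
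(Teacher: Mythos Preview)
Your proof is correct and follows essentially the same route as the paper: apply $\calF_G$ to a coset state, square the amplitudes in the $\lambda$-block, and reduce the resulting double sum over $H$ to $\sum_{h\in H}\chi_\lambda(h)$. The only cosmetic differences are that the paper keeps a general coset representative $g$ throughout (letting it cancel via the cyclic property of the trace) rather than first reducing to the trivial coset, and it evaluates the double sum directly instead of invoking the projector identity $\Tr(P_\lambda^2)=\Tr(P_\lambda)$ --- but these are the same computation in different packaging.
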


\begin{proof}
    In the HSP weak Fourier sampling setting, one starts with an arbitrary {\em coset state} in the regular representation (which can be efficiently prepared from oracle access to the input function):
    \begin{equation}
        \ket{gH} = \frac{1}{\sqrt{\abs{H}}}\sum_{h\in H}\ket{gh},
    \end{equation}
    to which the $G$-Fourier transform $\calF_G$ \eqref{eq:GFourierTransform} is applied, leading to the state:
    \begin{equation}
        \calF_G\ket{gH} = \frac{1}{\sqrt{\abs{H}}}\sum_{h\in H}\sum_{\substack{\lambda \in \Irr{G}\\i,j \in [d_\lambda]}}\sqrt\frac{d_\lambda}{\abs{G}}\rho_\lambda(gh)_{i,j}\ket{\lambda,i,j}\,.
    \end{equation}
    Finally, only the irrep label register is measured, leading to an output probability:
    \begin{align}
        \mathrm{P}_{\mathrm{HSP}}[\lambda] &= \frac{d_\lambda}{\abs{G}\abs{H}}\sum_{\substack{h,h'\in H\\i,j\in[d_\lambda]}}\rho_\lambda(gh)_{i,j}\overline{\rho_\lambda(gh')_{i,j}} & \text{(via Born rule)}\\
        &=\frac{d_\lambda}{\abs{G}\abs{H}}\sum_{\substack{h,h'\in H\\i,j\in[d_\lambda]}}\rho_\lambda(gh)_{i,j}\rho_\lambda(h'^{-1}g^{-1})_{j,i} & \text{(since $\rho_\lambda(gh')$ is unitary)}\\
        &= \frac{d_\lambda}{\abs{G}\abs{H}}\sum_{h,h'\in H}\Tr\rho_\lambda(hh'^{-1}) & \text{(by cyclic property of trace)}\\
        &= \frac{d_\lambda}{\abs{G}}\sum_{h\in H}\chi_\lambda(h) & \text{(by double-summing over the subgroup)}.
    \end{align}
    See \cite{hallgren2003hidden} for more detail.
\end{proof}

A natural question is understanding how this distribution changes when we generalize Fourier sampling to a StateHSP problem --- specifically, by applying the circuit in \Cref{subfig-2:StateHSP}:
\begin{fact}[Weak Fourier sampling in StateHSP]\label{fact:StateHSPFourierSampling}
    The output of the weak Fourier sampling circuit in \Cref{subfig-2:StateHSP} for a StateHSP problem in which the input state $\ket{\psi}$ is invariant under the $R$-action of the hidden subgroup $H<G$ is:
    \begin{equation}\label{eq:StateHSPFourierDistr}
        \mathrm{P}_{\mathrm{StateHSP}_{\psi}}[\lambda] = \frac{d_\lambda}{\abs{G}}\sum_{c \in G/H} \sum_{h\in H} \chi_\lambda(ch) \,\mel{\psi}{R(c)}{\psi}\,,
    \end{equation}
    where $\lambda\in\Irr{G}$ is a $G$-irrep label, and $G/H$ denotes a set of left coset representatives.
\end{fact}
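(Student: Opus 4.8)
The plan is to trace the weak Fourier sampling circuit of \Cref{subfig-2:StateHSP} through its stages and then invoke the Born rule. The ancilla register $\C^{G}$ starts in $\ket{\mathbf{0}}$, the Schur vector of the trivial (one-dimensional) irrep, so by the inverse transform in \eqref{eq:GFourierTransform} the map $\calF_G^{-1}$ produces the uniform superposition $\tfrac{1}{\sqrt{\abs{G}}}\sum_{g\in G}\ket{g}$. Tensoring in the input state and applying the controlled action $U_R=\sum_{g}\dyad{g}\otimes R(g)$ yields $\tfrac{1}{\sqrt{\abs{G}}}\sum_{g\in G}\ket{g}\otimes R(g)\ket{\psi}$, and the final $\calF_G$ on the ancilla produces
\begin{equation*}
\frac{1}{\sqrt{\abs{G}}}\sum_{g\in G}\sum_{\substack{\lambda\in\Irr{G}\\ i,j\in[d_\lambda]}}\sqrt{\tfrac{d_\lambda}{\abs{G}}}\;\rho_\lambda(g)_{i,j}\;\ket{\lambda,i,j}\otimes R(g)\ket{\psi},
\end{equation*}
after which only the irrep label $\lambda$ is measured (weak sampling traces out the indices $i,j$).

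Next I would expand the outcome probability $\mathrm{P}[\lambda]$ by the Born rule. Using that $R$ is a unitary representation, $\ip{R(g')\psi}{R(g)\psi} = \mel{\psi}{R(g'^{-1}g)}{\psi}$, so
\begin{equation*}
\mathrm{P}[\lambda]=\frac{d_\lambda}{\abs{G}^2}\sum_{g,g'\in G}\Bigl(\sum_{i,j\in[d_\lambda]}\rho_\lambda(g)_{i,j}\,\overline{\rho_\lambda(g')_{i,j}}\Bigr)\mel{\psi}{R(g'^{-1}g)}{\psi}.
\end{equation*}
The crucial step collapses the inner sum: unitarity of $\rho_\lambda$ gives $\overline{\rho_\lambda(g')_{i,j}}=\rho_\lambda(g'^{-1})_{j,i}$, so $\sum_{i,j}\rho_\lambda(g)_{i,j}\rho_\lambda(g'^{-1})_{j,i}=\Tr\rho_\lambda(gg'^{-1})=\chi_\lambda(gg'^{-1})$. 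Substituting $k=g'^{-1}g$ (for each fixed $g'$, as $g$ ranges over $G$ so does $k$) and using that $gg'^{-1}=g'kg'^{-1}$ is conjugate to $k$ while $\chi_\lambda$ is a class function, the sum over $g'$ becomes trivial and contributes a factor $\abs{G}$, leaving $\mathrm{P}[\lambda]=\tfrac{d_\lambda}{\abs{G}}\sum_{k\in G}\chi_\lambda(k)\mel{\psi}{R(k)}{\psi}$. Finally, writing each $k\in G$ uniquely as $k=ch$ with $c$ a representative of a left coset in $G/H$ and $h\in H$, the invariance hypothesis $R(h)\ket{\psi}=\ket{\psi}$ gives $\mel{\psi}{R(ch)}{\psi}=\mel{\psi}{R(c)}{\psi}$, which turns this into \eqref{eq:StateHSPFourierDistr}.

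There is no real obstacle here; the computation is routine and closely parallels the HSP calculation in the proof of \Cref{fact:HSPFourierSampling}. The only points needing care are (i) tracking which matrix index is conjugated when invoking unitarity of $\rho_\lambda$ --- this is exactly the orthonormality of the rows (equivalently columns) of $\rho_\lambda(g)$, and is the step absent in the Abelian case --- and (ii) performing the change of variables on $g$ with $g'$ held fixed, so that the conjugation argument justifying the class-function step goes through. As a sanity check one can specialize to the HSP coset state \eqref{eq:already-state-HSP} with $R$ the right regular representation: there $\mel{\psi}{R(c)}{\psi}=1$ for $c=e$ and $0$ for every other coset representative (all of which lie outside $H$, as shown earlier), so the formula collapses to $\tfrac{d_\lambda}{\abs{G}}\sum_{h\in H}\chi_\lambda(h)$, recovering \eqref{eq:HSPFourierDistr}.
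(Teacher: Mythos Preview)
Your proof is correct and follows essentially the same approach as the paper: trace the circuit, apply the Born rule, collapse the matrix indices via unitarity of $\rho_\lambda$ to a character, reduce the double sum to a single sum over $G$, and then split over cosets using $H$-invariance. The only cosmetic difference is that you land on $\chi_\lambda(gg'^{-1})$ and invoke the class-function property to reach $\chi_\lambda(g'^{-1}g)$, whereas the paper writes $\Tr\rho_\lambda(g'^{-1}g)$ directly; both are equivalent.
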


\begin{proof} The result follows from direct calculation along similar lines as \Cref{fact:HSPFourierSampling}. In the StateHSP setting, there is an ancillary register admitting a regular representation of the group $G$ (initialized in the trivial representation), together with the input state $\ket{\psi}$. The first application of the inverse $G$-Fourier transform prepares a uniform superposition over group elements in the first register:
    \begin{equation}
        \left(\calF_G^{-1}\otimes I\right)\ket{\mathbf{0}}\ket{\psi} = \frac{1}{\sqrt{\abs{G}}}\sum_{g\in G} \ket{g}\ket{\psi}.
    \end{equation}
    Applying the controlled group action $U_G = \sum_{g\in G}\dyad{g}\otimes R(g)$ leads to the state:
    \begin{equation}
        U_G\left(\calF_G^{-1}\otimes I\right)\ket{\mathbf{0}}\ket{\psi} = \frac{1}{\sqrt{\abs{G}}}\sum_{g\in G} \ket{g}\otimes R(g)\ket{\psi}.
    \end{equation}
    Finally, the last application of the $G$-Fourier transform results in the state:
    \begin{equation}
        \left(\calF_G\otimes I\right)U_G\left(\calF_G^{-1}\otimes I\right)\ket{\mathbf{0}}\ket{\psi} = \frac{1}{\abs{G}}\sum_{\substack{\lambda \in \Irr{G}\\i,j\in[d_\lambda]}}\sqrt{d_\lambda}  \ket{\lambda,i,j}\otimes \sum_{g\in G} \rho_{\lambda}(g)_{i,j} R(g)\ket{\psi}.
    \end{equation}
    On this state, we measure the ancillary register corresponding to the irrep label $\lambda\in\Irr{G}$, leading to the output probability:
    \begin{equation*}
    \def\arraystretch{2.2}
    \begin{array}{rlr}
        \mathrm{P}_{\mathrm{StateHSP}_\psi}[\lambda] &= \frac{d_\lambda}{\abs{G}^2}\sum\limits_{\substack{i,j\in[d_\lambda]\\g,g'\in G}}\rho_\lambda(g)_{i,j}\overline{\rho_\lambda(g')_{i,j}}\mel{\psi}{R(g')^{\dagger}R(g)}{\psi} & \text{(via Born rule)}\\
        &= \frac{d_\lambda}{\abs{G}^2}\sum\limits_{g,g'\in G} \Tr\rho_\lambda(g'^{-1}g)\mel{\psi}{R(g'^{-1}g)}{\psi} & \text{(by unitarity of $\rho_\lambda$ and $R$)}\\
        &= \frac{d_\lambda}{\abs{G}}\sum\limits_{g\in G}\chi_\lambda(g)\mel{\psi}{R(g)}{\psi} & \text{(after double-summing over the group $G$).}
    \end{array}
    \end{equation*}
    Splitting the group elements over the right-$H$-cosets as $g=ch$, where $h\in H$ are subgroup elements and $c\in G/H$ are coset representatives, we obtain:
    \begin{equation*}
    \def\arraystretch{2.2}
    \begin{array}{rlr}
        \mathrm{P}_{\mathrm{StateHSP}_\psi}[\lambda] &= \frac{d_\lambda}{\abs{G}}\sum\limits_{\substack{c\in G/H \\ h \in H}}\chi_\lambda(ch)\mel{\psi}{R(ch)}{\psi} & \\
        &= \frac{d_\lambda}{\abs{G}}\sum\limits_{\substack{c\in G/H \\ h \in H}}\chi_\lambda(ch)\mel{\psi}{R(c)R(h)}{\psi} & \text{(since $R$ is a $G$-representation)}\\
        &= \frac{d_\lambda}{\abs{G}}\sum\limits_{\substack{c\in G/H \\ h \in H}}\chi_\lambda(ch)\mel{\psi}{R(c)}{\psi} & \text{(by the $H$-invariance of $\ket\psi$)},
    \end{array}
    \end{equation*}
    where in the last line we used the subgroup invariance assumption about the input state:  $R(h)\ket{\psi}=\ket{\psi}$ for all $h \in H$.
\end{proof}

Notice that the StateHSP outcome distribution \eqref{eq:StateHSPFourierDistr} and the equivalent HSP outcome distribution \eqref{eq:HSPFourierDistr} are identical when $\mel{\psi}{R(c)}{\psi}=0$ for all nontrivial coset representatives $c\neq \mathrm{id}$. The analogy is explained by the fact that in StateHSP, the states $\{R(c)\ket{\psi}\}_{c\in G/H}$ play a similar role to the {\em coset states} $\ket{cH}=\frac{1}{\sqrt{\abs{H}}}\sum_{h\in H}\ket{ch}$ in HSP. The coset states $\{\ket{cH}\}_{c\in G/H}$ are manifestly $H$-invariant under a right-regular group action, and also mutually orthogonal. Therefore, the usual approach to HSP involving the construction of coset states via the function oracle is a specific instance of StateHSP. However, the more generic StateHSP problem differs in that it allows non-orthogonal coset states $\{R(c)\ket{\psi}\}_{c\in G/H}$. On the other hand, if the coset states are too close to each other, the problem risks becoming information-theoretically intractable: the state would be too close to the symmetric subspace invariant under all group operations, and distinguishing between any nontrivial subgroup symmetry and the full group symmetry would require an inefficient number of measurements. For this reason, it is crucial to introduce an orthogonality allowance $\epsilon$ in \Cref{def:StateHSP}.

A simple but powerful observation is that a large enough orthogonality can be further amplified when one is allowed multiple copies of the input state:

\begin{fact}\label{fact:StateHSPFourierSamplingAmplified} There is a Fourier sampling circuit for an StateHSP problem with orthogonality allowance $\epsilon$ which uses $t$ copies of the input state, yielding outcome distribution:
    \begin{equation}
        \mathrm{P}_{\mathrm{StateHSP}_{\psi,t}}[\lambda] = \mathrm{P}_{\mathrm{HSP}}[\lambda] + O\left(d_\lambda^2\,\mathsf{exp}(-\epsilon t)\right)\,.
    \end{equation}
The depth of the circuit can be as shallow as $O(\log t)$ by using $O(t\log\abs{G})$ ancillary qubits.
\end{fact}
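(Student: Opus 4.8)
The plan is to reduce to \Cref{fact:StateHSPFourierSampling} by running weak Fourier sampling not on $\ket{\psi}$ directly, but on the $t$-fold tensor power $\ket{\psi}^{\otimes t}$ acted on by the tensor-power representation $R^{\otimes t}:G\to\U(d^t)$, $R^{\otimes t}(g)=R(g)^{\otimes t}$. First I would check that $(G,H,R^{\otimes t},\ket{\psi}^{\otimes t})$ is again a legitimate StateHSP instance with the same parent group and the same hidden subgroup: $H$-invariance is immediate since $R^{\otimes t}(h)\ket{\psi}^{\otimes t}=(R(h)\ket{\psi})^{\otimes t}=\ket{\psi}^{\otimes t}$, while for the coset overlaps one has the multiplicativity $\mel{\psi^{\otimes t}}{R^{\otimes t}(c)}{\psi^{\otimes t}}=\mel{\psi}{R(c)}{\psi}^{t}$, so that for any $g\notin H$ the overlap is at most $(1-\epsilon)^t\le e^{-\epsilon t}$; i.e.\ the orthogonality allowance is amplified from $\epsilon$ to $1-e^{-\epsilon t}$.

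Next I would plug this amplified instance into \Cref{fact:StateHSPFourierSampling}, taking the coset representative of the trivial coset $H$ to be $\mathrm{id}$ (legitimate, since the distribution in that formula is independent of the choice of representatives by the $H$-invariance of $\ket{\psi}$). This gives
\[
  \mathrm{P}_{\mathrm{StateHSP}_{\psi,t}}[\lambda]=\frac{d_\lambda}{\abs{G}}\sum_{c\in G/H}\sum_{h\in H}\chi_\lambda(ch)\,\mel{\psi}{R(c)}{\psi}^{t}.
\]
The $c=\mathrm{id}$ term contributes exactly $\frac{d_\lambda}{\abs G}\sum_{h\in H}\chi_\lambda(h)=\mathrm{P}_{\mathrm{HSP}}[\lambda]$ by \Cref{fact:HSPFourierSampling}, using $\mel{\psi}{R(\mathrm{id})}{\psi}=1$. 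For the remaining $c\neq\mathrm{id}$ I would use $\abs{\chi_\lambda(ch)}\le d_\lambda$ (the trace of a $d_\lambda\times d_\lambda$ unitary) together with $\abs{\mel{\psi}{R(c)}{\psi}}^t\le e^{-\epsilon t}$; since there are $\abs{G/H}-1$ such cosets and $\abs H$ subgroup elements, and $\abs{G/H}\cdot\abs H=\abs G$, the total error is at most $\tfrac{d_\lambda}{\abs G}(\abs{G/H}-1)\abs H\,d_\lambda\,e^{-\epsilon t}\le d_\lambda^2\,e^{-\epsilon t}$, which is the claimed bound.

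For the circuit and its depth, I would observe that the controlled $t$-copy action $U_{R^{\otimes t}}=\sum_g\dyad{g}\otimes R(g)^{\otimes t}$ need not be implemented as $t$ sequential copies of $U_R$ sharing one control register: instead, prepare $\frac{1}{\sqrt{\abs G}}\sum_g\ket{g}$ on a single $\lceil\log_2\abs G\rceil$-qubit register, fan it out in the computational basis onto $t$ such registers via a balanced binary tree of CNOTs (depth $O(\log t)$), apply $U_R$ in parallel on the $j$-th register with the $j$-th copy of $\ket{\psi}$, then uncompute the copies by the reverse fan-out and finish with $\calF_G$ and a measurement. The ancilla cost is $t\lceil\log_2\abs G\rceil=O(t\log\abs G)$, and the $t$-dependence of the depth is $O(\log t)$; so whenever $\calF_G$ and $U_R$ are themselves shallow — e.g.\ for $G=\Z_2^n$, where $\calF_G=H^{\otimes n}$ has depth $1$ and $R(g)$ is a single layer of controlled $\swapjohn$ gates acting on disjoint qubits — the full circuit has depth $O(\log t)$.

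I expect the only non-mechanical step to be the last one: verifying that the fan-out/uncompute trick exactly reproduces $U_{R^{\otimes t}}(\calF_G^{-1}\otimes I)\ket{\mathbf{0}}\ket{\psi}^{\otimes t}$ before the final $\calF_G$. The crucial point is that $U_R$ acts as the identity on the control register in the computational basis, so after the parallel layer of $U_R$'s the $t$ control copies still all hold $\ket{g}$ and can be XORed back to $\ket{\mathbf{0}}$ without disturbing the data registers. Everything else — the multiplicativity of overlaps, the character bound, and the coset counting — is routine.
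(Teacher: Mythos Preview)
Your proposal is correct and follows essentially the same approach as the paper: promote $R$ to the tensor-power representation $R^{\otimes t}$ on $\ket{\psi}^{\otimes t}$, invoke \Cref{fact:StateHSPFourierSampling}, isolate the $c=\mathrm{id}$ term as $\mathrm{P}_{\mathrm{HSP}}[\lambda]$, and bound the remaining cosets via $\abs{\chi_\lambda}\le d_\lambda$ and $\abs{\mel{\psi}{R(c)}{\psi}}^t\le(1-\epsilon)^t\le e^{-\epsilon t}$; the fan-out/uncompute construction for the $O(\log t)$ depth is likewise the same idea as the paper's binary tree of controlled two-register copies. Your added care in noting that the full depth is $O(\log t)$ only when $\calF_G$ and $U_R$ are themselves shallow is a welcome clarification.
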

\begin{proof}
    The construction is natural and it involves promoting the group action $\sum_{g\in G}\dyad{g}\otimes R(g)$ on $\C^G\otimes \C^d$ to the $t$-fold version $\sum_{g\in G}\dyad{g}\otimes R(g)^{\otimes t}$ on $\C^G\otimes \left(\C^d\right)^{\otimes t}$, and plugging this action into the StateHSP Fourier sampling circuit of \Cref{fact:StateHSPFourierSampling} with input state $\ket{\psi}^{\otimes t}$. A simple application of triangle inequality gives:
    \begin{align}
        \abs{\mathrm{P}_{\mathrm{StateHSP}_{\psi,t}}[\lambda] - \mathrm{P}_{\mathrm{HSP}}[\lambda]} &= \abs{\frac{d_\lambda}{\abs{G}}\sum_{\substack{c \in G/H \neq \mathrm{id}\\h\in H}} \chi_\lambda(ch) \,\mel{\psi^{\otimes t}}{R(c)^{\otimes t}}{\psi^{\otimes t}}}\\
        &\leq \frac{d_\lambda}{\abs{G}}\sum_{\substack{c \in G/H \neq \mathrm{id}\\h\in H}} \abs{\chi_\lambda(ch)} \abs{\mel{\psi}{R(c)}{\psi}}^t\\
        &\leq d_\lambda^2\left(1-\frac{\abs{H}}{\abs{G}}\right)(1-\epsilon)^t\leq O\left(d_\lambda^2\,\mathsf{exp}(-\epsilon t)\right),
    \end{align}
    where in the last line we used the simple character bound $\abs{\chi_\lambda(g)}\leq d_\lambda$ for all $g\in G$. The circuit can be implemented by $t$ successive applications of the single group action, once per each copy of the input Hilbert space. To apply this action in depth $O(\log t)$, we make use of $t$ ancillary registers which host regular representations of $G$. One can copy the group information from the first regular representation register onto all of these additional registers by a $O(\log t)$-depth binary tree of controlled two-register unitaries; then, each of these $t$ ancillary registers can control the group action on the $t$ input state copies in parallel. Finally, one uncomputes the binary tree of two-register unitaries in depth $O(\log t)$.
\end{proof}

We note that this simple bound uses no information about the specific group $G$; it is often the case that the characters decay rapidly in magnitude from the maximum value $\chi_\lambda(\mathrm{id})=d_\lambda$ across the group, so even tighter bounds might be possible. Similarly, improvements can be obtained if the specific StateHSP problem presents additional information about the inner products of the coset states\footnote{This comment captures the different behavior of our algorithm for the hidden cut problem with a constant entanglement promise versus a Haar-random promise, as will be described in later sections.}.

The key takeaway is that by increasing the number of copies $t$, we can naturally enhance the orthogonality of coset states and make the StateHSP instance behave like the equivalent HSP problem from the point of view of Fourier sampling. The number of copies required to make the non-orthogonality correction negligible in this way will depend on the specifics of the problem: for a problem parameter $n$, as long as $\epsilon\geq\Omega(1/\poly(n))$ and\footnote{This is because irrep dimensions cannot be larger than $d_\lambda\leq \sqrt{\abs{G}}$.} $\abs{G}\leq O(\mathsf{exp}(\poly(n)))$, then there is a choice of $t=\poly(n)$ which would ensure the corrections are negligibly small in $n$.

As we will show in the next section, this framework applies to the hidden cut problem, and we will be able to use an efficient number of copies to enhance orthogonality such that a standard HSP can be applied to the hidden cut problem via Fourier sampling. We leave it as an open question for future work to find meaningful instances of StateHSP which cannot be efficiently amplified by a polynomial number of copies for purposes of Fourier sampling.

We end this section by mentioning an immediate corollary which follow from the above connection between HSP and StateHSP when multiple state copies are available.
Namely, the general non-Abelian StateHSP is information-theoretically solvable using few copies of the state, just as the standard HSP is information-theoretically solvable with few queries to the function \cite{ettinger2004quantum}.
\begin{corollary}\label{cor:StateHSPInformationTheoretical}
The general non-Abelian HSP over a group $G$ with can be information-theoretically determined with a number of copies $O(\poly(\log\abs{G}, \,\epsilon^{-1}))$ of the input state $\ket{\psi}$.
 \end{corollary}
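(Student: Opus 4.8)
The plan is to reduce this to the known information-theoretic solvability of the \emph{standard} HSP and then transfer the solution back using the orthogonality amplification of \Cref{fact:StateHSPFourierSamplingAmplified}. Recall that Ettinger, H{\o}yer, and Knill~\cite{ettinger2004quantum} show that for every finite group $G$ there is a value $k = O(\log\abs{G})$ and a single (computationally unbounded) POVM $M_G$ such that, for every subgroup $H \leq G$, measuring the $k$-fold mixed coset state $\rho_H^{\otimes k}$ with $M_G$ outputs $H$ with probability at least $2/3$, where $\rho_H \coloneqq \frac{1}{\abs{G/H}}\sum_{c\in G/H}\ketbra{cH}$ and $\ket{cH} \coloneqq \frac{1}{\sqrt{\abs{H}}}\sum_{h\in H}\ket{ch}$. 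So it suffices to prepare, from $\poly(\log\abs{G},\epsilon^{-1})$ copies of $\ket\psi$, a state within small trace distance of $\rho_H^{\otimes k}$ for the true hidden subgroup $H$.

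For this I would first use $t$ copies of $\ket\psi$ to build the StateHSP analogue of the ``standard method'' state, following the state-preparation step underlying \Cref{fact:StateHSPFourierSamplingAmplified}: apply $\calF_G^{-1}$ to an ancilla in the trivial-irrep basis state $\ket{\mathbf 0}$ to obtain a uniform superposition over $G$, apply the $t$-fold controlled action $\sum_{g\in G}\ketbra{g}\otimes R(g)^{\otimes t}$ to $\ket\psi^{\otimes t}$, and discard the $(\C^d)^{\otimes t}$ register. The reduced state on the group register is
\[
  \rho_t \;=\; \frac{1}{\abs{G}}\sum_{g,g'\in G}\mel{\psi}{R(g'^{-1}g)}{\psi}^{t}\;\ketbra{g}{g'}.
\]
By the StateHSP promise this coefficient equals $1$ whenever $g'^{-1}g\in H$ (using $R(h)\ket\psi=\ket\psi$) and has magnitude at most $(1-\epsilon)^t$ otherwise; comparing with $\rho_H = \frac{1}{\abs{G}}\sum_{g'^{-1}g\in H}\ketbra{g}{g'}$, every one of the at most $\abs{G}^2$ matrix entries of $\rho_t-\rho_H$ has magnitude at most $(1-\epsilon)^t/\abs{G}$, so a crude rank/Frobenius-norm bound gives $\norm{\rho_t-\rho_H}_1 \leq \poly(\abs{G})\,(1-\epsilon)^t$. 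The crucial point is that the ambient dimension $d$ never enters: the promise bounds each matrix element individually, which is exactly the statement that the exponentially many, possibly non-orthogonal ``coset states'' $\{R(c)^{\otimes t}\ket\psi^{\otimes t}\}_{c\in G/H}$ become effectively orthogonal at a rate set only by $\epsilon$ and $\abs{G}$.

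To finish, preparing $\rho_t^{\otimes k}$ costs $kt$ copies of $\ket\psi$, and the telescoping bound gives $\norm{\rho_t^{\otimes k}-\rho_H^{\otimes k}}_1 \leq k\,\norm{\rho_t-\rho_H}_1$. Choosing $t = O(\epsilon^{-1}\log(\abs{G}\,k)) = O(\epsilon^{-1}\log\abs{G})$ (using $-\ln(1-\epsilon)\geq\epsilon$ and $k=\poly\log\abs{G}$) makes this distance at most $1/6$, so running $M_G$ on $\rho_t^{\otimes k}$ returns the true $H$ with probability at least $2/3-1/6$; this is boosted to $1-o(1)$ by $O(\log\abs{G})$ independent repetitions. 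The total number of copies is then $O(kt\log\abs{G}) = \poly(\log\abs{G},\epsilon^{-1})$, as claimed. I expect the one genuinely non-routine ingredient to be the dimension-independence of the coset-state approximation in the middle step; the rest is the Ettinger--H{\o}yer--Knill theorem invoked as a black box, plus the triangle inequality for trace distance and its subadditivity under tensor powers.
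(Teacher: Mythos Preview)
Your proposal is correct and follows the same underlying idea the paper sketches: amplify orthogonality with $t$ copies so that the group-register state approximates the standard coset mixture $\rho_H$, then invoke Ettinger--H{\o}yer--Knill. The paper's one-line justification points to \Cref{fact:StateHSPFourierSamplingAmplified} (closeness of weak Fourier sampling distributions) plus \cite{ettinger2004quantum}, whereas you instead bound $\norm{\rho_t-\rho_H}_1$ directly; this is arguably the cleaner route, since the EHK measurement is not weak Fourier sampling and one needs trace-distance closeness of the full state (or at least of the POVM statistics) rather than just of the irrep-label marginal. Your entrywise Frobenius bound and the observation that the ambient dimension $d$ drops out are exactly the content that makes the paper's sketch rigorous.
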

Of course, whether computationally efficient algorithms exist for non-Abelian groups is an open problem, and the non-existence of such algorithms underlies hardness of post-quantum cryptographic schemes \cite{regev2004quantum}.
This corollary follows immediately from the information-theoretic feasibility of HSP \cite{ettinger2004quantum}, combined with \Cref{fact:StateHSPFourierSamplingAmplified} outlined above.
Therefore, we know a StateHSP is information-theoretically solvable with enough copies and a large enough orthogonality allowance; the question is when it is computationally efficiently solvable.

\section{An algorithm for the hidden cut problem}\label{sec:HiddenCut}

In the hidden cut problem, the global state is a $tn$-qubit state of the form $\ket{\psi}^{\otimes t}$, where $\ket{\psi}\in (\C^{2})^n$ is separable across an unknown cut $C\subset[n]$, denoted $\ket{\psi}=\ket{\phi_1}_C\otimes\ket{\phi_2}_{\overline{C}}$, where $\overline{C}\equiv [n]\setminus C$. Aligning the global state in an imaginary qubit grid with $t$ rows and $n$ columns (see \Cref{fig:example}), let us define the relevant column permutation operations:
\begin{definition}[Permutation operations]
    The state $\ket{\psi}^{\otimes t} \in \left(\C^{2^{n}}\right)^{\otimes t}=\C^{2^{tn}}$ is a state on $t \times n$ qubits. Let the standard basis of this space be of the form $\bigotimes_{i\in[t],j\in[n]}\ket{e_{i,j}}$. Let us define the action of $\bbS_t$ on the $k$-th column of $t$ qubits as the operators $R_k(\pi)$:
    \begin{equation}
        \text{for }\pi\in \bbS_t, \,k\in[n]:\quad R_k(\pi)\bigotimes_{i\in[t],j\in[n]}\ket{e_{i,j}} \equiv \bigotimes_{i\in[t],j\in[n]\setminus\{k\}}\ket{e_{i,j}}\;\bigotimes_{i\in[t]}\ket{e_{\pi^{-1}(i),k}}\,.
    \end{equation}
\end{definition}
In what follows, we will choose appropriate group actions on the global state $\ket{\psi}^{\otimes t}$ expressed by groups of `column-wise' permutation operators $\{R_k(\pi)\}_{k\in[n],\pi\in\bbS_t}$, i.e. the permutation of the $k$'th qubit across the $t$ copies. 

\subsection{The full permutational symmetries defy Fourier sampling}\label{sec:symmetricgroupHSP}

In order to apply the StateHSP framework from \Cref{sec:StateHSP} to the hidden cut problem, the key first step is to choose an appropriate group action on the state. A first natural choice is to take advantage of all the manifest permutational symmetries of the global state $\ket{\psi}^{\otimes t}$ containing the $t$ copies of the input state. Such a $t$-fold state is trivially symmetric under the $\bbS_t$ group which permutes the copies. When additionally the input state is separable across a cut $C\subset[n]$, then the $t$-fold global state $\ket{\psi}^{\otimes t}=\ket{\phi_1}_C^{\otimes t}\otimes\ket{\phi_2}_{\overline{C}}^{\otimes t}$ has a larger $\bbS_t\times\bbS_t$ permutational symmetry group which acts by permuting the individual factor state copies (see \Cref{fig:example}). The natural parent group which accommodates all of these cut-specific symmetry subgroups is $G=\bbS_t^{\times n}$ acting as single-column permutations. In the language of StateHSP, the corresponding group action is $R:(\sigma_1,\dots,\sigma_n)\mapsto R_1(\sigma_1)\otimes\dots\otimes R_n(\sigma_n)$, and the hidden cut subgroup $H_C<G$ preserving the state under this action is\footnote{As elsewhere in this paper, the notation $\sigma^{C}\mu^{\overline{C}}$ means the vector in $\bbS_t^{\times n}$ with $\sigma$ in the $C$ positions and $\mu$ elsewhere.} $H_C=\{\sigma^{C}\mu^{\overline{C}}\}_{\sigma,\mu \in \bbS_t}$.

In \Cref{sec:StateHSP} we imported the Fourier sampling approach from HSP as a possible algorithm to solve StateHSP. We now briefly outline an obstacle to applying Fourier sampling to find the hidden cut with the permutation group action introduced above. While the Fourier sampling circuit can be implemented efficiently,\footnote{This is because of the known efficient quantum Fourier transform constructions for $\bbS_t$ \cite{beals1997quantum}.} the difficulty comes from the information-theoretic properties of the equivalent standard HSP:

\begin{fact}
    Assume $t\geq\Omega(\poly(n))$. Given a cut $C\subset[n]$ which determines a symmetry subgroup $H_C$ isomorphic to $\bbS_t\times \bbS_t$ inside $\bbS_{t}^{\times n}$ as defined above, then performing non-Abelian weak Fourier sampling over $\bbS_{t}^{\times n}$ gives the output probabilities:
    \begin{align}
        \label{eq:almostplancherel}
        \mathrm{P}_{\mathrm{HSP}}[\lambda_1,\dots,\lambda_n] &= \frac{d_{\lambda_1}^2\dots d_{\lambda_n}^2}{t!^n}\left(1 + O(t^2\,b^n)\right) & \text{for }\frac{\lambda_{i,1}}{t},\frac{\lambda_{i,1}'}{t}<o(1),\;i\in[n],
    \end{align}
    for some constant $0<b<1$, with a negligible probability mass outside of this regime. Here, $\lambda_i$ are irreducible representations of $\bbS_{t}$ of dimensions $d_{\lambda_i}$, and $\lambda_{i,1}$, $\lambda_{i,1}'$ are the lengths of the first row and first column of the Young diagram $\lambda_i$. As a consequence, the probability of observing irreps $\lambda_i$ outside of the range $\frac{\lambda_{i,1}}{t},\frac{\lambda_{i,1}'}{t}<o(1)$ is negligibly small. Inside the typical observable range, the result means that all cuts result in the same ``Plancherel distribution'' of outcomes to within negligible relative corrections $O(t^2b^n)$.
\end{fact}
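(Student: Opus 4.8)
The plan is to evaluate $\mathrm{P}_{\mathrm{HSP}}$ directly from the general weak-Fourier-sampling formula of \Cref{fact:HSPFourierSampling} and to show that its dependence on the cut $C$ collapses into a negligible multiplicative correction once $t$ is polynomially large. Applying \Cref{fact:HSPFourierSampling} with $G=\bbS_t^{\times n}$ and $H=H_C=\{\sigma^{C}\mu^{\overline C}\}$, and using that the irreps of $G$ are tuples $\lambda=(\lambda_1,\dots,\lambda_n)$ with $d_\lambda=\prod_k d_{\lambda_k}$ and $\chi_\lambda(g_1,\dots,g_n)=\prod_k\chi_{\lambda_k}(g_k)$, the subgroup sum factorizes over $\sigma$ and $\mu$:
\[
\mathrm{P}_{\mathrm{HSP}}[\lambda]=\frac{\prod_k d_{\lambda_k}}{(t!)^{n}}\;A_C(\lambda)\,A_{\overline C}(\lambda),\qquad A_S(\lambda):=\sum_{\sigma\in\bbS_t}\;\prod_{k\in S}\chi_{\lambda_k}(\sigma).
\]
The whole statement thus reduces to estimating the single-group character sum $A_C(\lambda)$, and $A_{\overline C}(\lambda)$ by the identical argument.

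To estimate $A_C(\lambda)$ I would separate the identity term, which contributes exactly $\prod_{k\in C}d_{\lambda_k}$, and bound the contribution of all $\sigma\neq e$. The key analytic input is a classical symmetric-group character bound (Roichman; see also Larsen–Shalev): for a diagram $\mu$ with $\mu_1/t,\mu_1'/t<o(1)$ there is a constant $0<b<1$ with $|\chi_\mu(\sigma)|\le b^{\,\|\sigma\|}d_\mu$ for every $\sigma\in\bbS_t$, where $\|\sigma\|$ is the number of non-fixed points of $\sigma$. Since every $\lambda_k$ lies in this balanced regime by hypothesis and $|C|=n/2$, combining this with the crude count $|\{\sigma:\|\sigma\|=s\}|\le t^{s}$ gives
\[
\Bigl|A_C(\lambda)-\prod_{k\in C}d_{\lambda_k}\Bigr|\le \prod_{k\in C}d_{\lambda_k}\sum_{\sigma\neq e}b^{\,|C|\,\|\sigma\|}\le \prod_{k\in C}d_{\lambda_k}\sum_{s\ge 2}\bigl(t\,b^{\,n/2}\bigr)^{s}=\prod_{k\in C}d_{\lambda_k}\cdot O\!\bigl(t^{2}b^{\,n}\bigr),
\]
the geometric series converging since $t=\poly(n)\ll b^{-n/2}$. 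Hence $A_C(\lambda)=\prod_{k\in C}d_{\lambda_k}\bigl(1+O(t^2b^n)\bigr)$, and likewise for $A_{\overline C}$; multiplying out and using $\prod_{k\in C}d_{\lambda_k}\cdot\prod_{k\in\overline C}d_{\lambda_k}=\prod_{k}d_{\lambda_k}$ yields $\mathrm{P}_{\mathrm{HSP}}[\lambda]=\frac{\prod_k d_{\lambda_k}^2}{(t!)^n}\bigl(1+O(t^2b^n)\bigr)$, in which the cut $C$ has disappeared except through $|C|=|\overline C|=n/2$.

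It remains to confirm that the complement of the balanced window carries negligible probability. Since $\mathrm{P}_{\mathrm{HSP}}$ is a bona fide probability distribution (\Cref{fact:HSPFourierSampling}), it is nonnegative and sums to $1$, so it suffices to lower bound the mass on the balanced set. Summing the estimate of the previous paragraph over all $\lambda$ with every $\lambda_k$ balanced and using $\sum_{\mu\text{ balanced}}d_\mu^2/t!=q$ where $q:=\mathrm{Pl}_{\bbS_t}(\text{balanced})$, this mass is at least $\bigl(1-O(t^2b^n)\bigr)\,q^{n}\ge \bigl(1-O(t^2b^n)\bigr)\bigl(1-n(1-q)\bigr)$. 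Taking the window $\lambda_{k,1},\lambda_{k,1}'\le C\sqrt t$ (which indeed forces $\lambda_{k,1}/t=o(1)$), standard large-deviation bounds for the longest increasing / decreasing subsequence of a uniformly random permutation — equivalently, for the first row/column length under the Plancherel measure — make $1-q$ super-polynomially small in $t$, hence $\negl(n)$ once $t\ge\poly(n)$. So the balanced mass is $1-\negl(n)$ and the atypical mass is $\negl(n)$, which gives the stated conclusion.

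The crux is the character-sum estimate for $A_C(\lambda)$: one must verify that the Roichman-type bound applies uniformly over all $\sigma\neq e$ — including low-support elements such as transpositions, where one needs only that the per-non-fixed-point factor be a constant $<1$ but must have this for all balanced $\lambda$ — and that it composes cleanly across the $\Theta(n)$ columns so that what survives is a convergent geometric series. This is what fixes the quantitative form $O(t^2 b^n)$ and what ties the conclusion to $t=\poly(n)$ and to $|C|$ being a constant fraction of $n$. The entanglement hypothesis on the factor states enters the argument only through the premise that the hidden symmetry subgroup is exactly $H_C\simeq\bbS_t\times\bbS_t$ rather than something larger; the estimate itself is purely character-theoretic.
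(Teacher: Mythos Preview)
Your proposal is correct and follows essentially the same route as the paper's proof sketch: apply the weak Fourier sampling formula, factorize the subgroup character sum over the two sides of the cut, invoke Roichman's character bound to show the non-identity terms contribute only $O(t^2 b^n)$ relative to the identity term, and use Plancherel concentration (first row/column $\Theta(\sqrt t)$) to confine almost all mass to the balanced window. The only cosmetic difference is that you parameterize Roichman's bound by the support $\|\sigma\|$ whereas the paper uses $t-\cyc(\sigma)$; both versions of the bound are standard and yield the same conclusion, and your treatment is in fact more detailed than the paper's sketch.
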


{\bf\em Proof sketch.}
    We merely outline the argument, which relies on technical aspects of the representation theory of the symmetric group; a similar proof is detailed in \cite{moore2008symmetric} to show that Fourier sampling cannot solve the generic HSP for the symmetric group. The goal is to estimate the benchmark HSP outcome distribution \eqref{eq:HSPFourierDistr}. Let $\cyc(\sigma)$ denote the number of cycles in the permutation $\sigma$. The starting observation is that a subgroup element $\sigma_1^C\sigma_2^{\overline{C}}$ has a number of cycles equal to $n\cyc(\sigma_1)/2 + n\cyc(\sigma_2)/2$ as a member of $\bbS_{t}^{\times n}$. Using Roichman's bounds on the characters of the symmetric group \cite{roichman1996upper}, this gives us that there exists some $0<b<1$ such that:
    \begin{equation}
        \abs{\chi_{\lambda_{1},\dots,\lambda_n}(\sigma_1^C\sigma_2^{\overline{C}})} \leq d_{\lambda_1}\dots d_{\lambda_n}\,b^{n(t - \cyc(\sigma_1)-\cyc(\sigma_2))/2}\,.
    \end{equation}
    This holds true as long as the Young diagrams $\lambda_1,\dots,\lambda_n \vdash t$ have a first row or column shorter than a $o(1)$ fraction of $t$. This condition is satisfied with high probability due to arguments such as \cite{baik1999distribution} about the typical Young diagrams concentrating towards $\Theta(\sqrt{t})$ rows and columns. Therefore, the chance of ever seeing any diagram outside the scope of this typical regime is negligibly small when $t\geq\Omega(\poly(n))$. Within the typical regime, the bound above is enough to control the non-identity terms in the sum over the subgroup $H$ in \eqref{eq:HSPFourierDistr}, which results in the claim. \hfill \qed

The message of the above claim is that the corresponding HSP problem becomes information-theoretically harder with increasing number of copies $t$, which is counter-productive if we expect to use the number of copies $t$ as a method of signal amplification. Additionally, it is unclear whether in the low-$t$ regime Fourier samples can be efficiently analyzed to detect the cut. This choice of group action has the disadvantage that the same parameter $t$ defines both the accuracy of the HSP approximation via orthogonality amplification, as well as the complexity of the HSP problem. In the next section, we find a much simpler permutation action such that the parent group depends on the state size $n$ but not on $t$, with the added benefit of the group being Abelian.

\subsection{A first Abelian HSP algorithm for the hidden cut problem}\label{sec:mainproof}

It turns out we can restrict the full permutation group $\bbS_t^{\times n}$ to a smaller group which still supports the mapping of cuts to subgroups, but whose size grows only with the number of qubits $n$, but not with the number of copies $t$. Crucially, the group is the simple Abelian group $G=\Z_2^n$; since HSP is known to allow efficient Fourier sampling algorithms in the Abelian case, this opens up the possibility of an efficient algorithm for the hidden cut problem by the technique in \Cref{fact:StateHSPFourierSamplingAmplified}.

Assume $t$ is even and define the following permutation in $\bbS_t$:    
    \begin{equation}\label{eq:xipermutation}
        \xi \equiv (1\;\;2)\,(3\;\;4)\,\dots\,(t-1\;\;t)\,.
    \end{equation}
Let us allow $G=\Z_2^{n}$ to act on the space of $t \times n$ qubits as:
    \begin{equation}
        (x_1,\dots,x_{n}) \circ \ket{\psi}^{\otimes t} \equiv R_1(\xi)^{x_1}R_2(\xi)^{x_2}\dots R_{n}(\xi)^{x_{n}}\ket{\psi}^{\otimes t}.
    \end{equation}
This is a well-defined $\Z_2^{n}$ group action since the $R_i(\xi)$ operators mutually commute (Abelian) and $R_i(\xi)^2=I$ (order two since $\xi^2=\text{id}$). Unless specified otherwise, we can simply denote $R_k\equiv R_k(\xi)$ and $R(\mathbf{x})\equiv R_1^{x_1}\dots R_n^{x_n}$ to simplify notation from here onwards. See \Cref{fig:AbelianGroupAction} for an illustration.

A first algorithm to find the hidden cut $C\subset[n]$ can be laid out as follows:

\begin{algorithm}[H]\caption{Non-adaptive hidden cut algorithm}
\label{alg:main}
\SetStartEndCondition{ }{}{}%
\SetKwProg{Fn}{def}{\string:}{}
\SetKwInput{kwReqs}{Requirements}
\SetKwFunction{Range}{range}
\SetKw{KwTo}{in}\SetKwFor{For}{for}{\string:}{}%
\SetKwIF{If}{ElseIf}{Else}{if}{:}{elif}{else:}{}%
\SetKwFor{While}{while}{:}{fintq}%
\newcommand{\forcond}{$i$ \KwTo\Range{$n$}}
\AlgoDontDisplayBlockMarkers\SetAlgoNoEnd\SetAlgoNoLine%
\kwReqs{$n$ additional qubits, implementation of the $\Z_2^n$ group action $U_{\Z_2^n}$.}
\uIf{the factor states are promised to be $\epsilon$-far from separable (\Cref{thm:main})}{
        Let $t=O(n/\epsilon^2)$.
}
\uIf{the factor states are promised to be Haar-random (\Cref{thm:haar})}{
        Let $t=2$.
}
\For{sample count $k\in\{1,\dots,p\}$, where $p=O(n)$}{
    Prepare $t$ copies of the state $\ket{\psi}$.\\
    Run the Fourier sampling circuit (\Cref{subfig-3:HiddenCutCircuit}): $\left(H^{\otimes n}\otimes I\right)U_{\Z_2^n}\left(H^{\otimes n}\otimes I\right)\;\ket{0^{n}}\otimes \ket{\psi}^{\otimes t}$.\\
    Measure the group register to get sample $\mathbf{y}^{(k)}\in\Z_2^n$.
}
Classically solve for the nullspace of $Y=\left(\mathbf{y}^{(1)},\dots,\mathbf{y}^{(p)}\right)^T \in \Z_2^{p\times n}$ which is $\mathrm{span}\{1^C0^{\overline{C}}, 0^C1^{\overline{C}}\}$.
\end{algorithm}

The above algorithm succeeds in efficiently finding the hidden cut:

\begin{theorem}[Non-adaptive hidden cut algorithm]\label{thm:nonadaptive}
    For $\epsilon>0$ and an $n$-qubit input state $\ket{\psi}=\ket{\phi_1}_C\otimes \ket{\phi_2}_{\overline{C}}$ separable across a cut $C\in\binom{[n]}{n/2}$, assume that the factor states $\ket{\phi_{1,2}}$ are at least $\epsilon$-far from all separable $(n/2)$-qubit states. Then, \Cref{alg:main} succeeds in finding the hidden cut $C$ with high probability using $O(n^2/\epsilon^2)$ copies of the input state $\ket{\psi}$. The algorithm requires coherent access to $O(n/\epsilon^2)$ copies at a time, on which it acts with circuits of depth $O(\log(n/\epsilon^2))$ given $O(n^2/\epsilon^2)$ ancillary qubits.
\end{theorem}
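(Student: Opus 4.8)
The plan is to instantiate the StateHSP framework of \Cref{sec:StateHSP} with the Abelian group $G=\Z_2^n$ and the column-swap action $R(\mathbf{x})=R_1^{x_1}\cdots R_n^{x_n}$ defined above, show that the resulting Fourier-sampling distribution on $\ket{\psi}^{\otimes t}$ is negligibly close — in the strong multiplicative sense of \eqref{eq:relativenegligibleerror} — to the Simon-like distribution \eqref{eq:SimonslikeDistribution}, and then finish with the standard Simon's-algorithm linear algebra. First I would pin down the hidden subgroup and the orthogonality allowance. Grouping the (even) number of copies into $t/2$ consecutive pairs, the permutation $\xi=(1\,2)(3\,4)\cdots(t-1\,t)$ makes $R(\mathbf{x})$ act on each pair $\ket{\psi}\otimes\ket{\psi}$ exactly as $\swapjohn_S$, the swap of the registers in $S=\{k:x_k=1\}$ between the two copies; hence by the purity formula $\mel{\psi^{\otimes t}}{R(\mathbf{x})}{\psi^{\otimes t}}=\big(\Tr(\psi_S^2)\big)^{t/2}$. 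By \Cref{fact:epsilonpurity} this equals $1$ exactly when $S\in\{\emptyset,C,\overline C,[n]\}$, i.e.\ when $\mathbf{x}\in H_C:=\{0^n,1^C0^{\overline C},0^C1^{\overline C},1^n\}\simeq\Z_2^2$, and is at most $(1-\epsilon^2)^{t/2}\le e^{-\epsilon^2 t/2}$ otherwise. So $\ket{\psi}^{\otimes t}$ is an Abelian StateHSP instance over $\Z_2^n$ with hidden subgroup $H_C$ and orthogonality allowance $\epsilon^2$, and $\mel{\psi^{\otimes t}}{R(c)}{\psi^{\otimes t}}$ is well-defined on cosets $c\in G/H_C$.

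Next I would compute the Fourier-sampling output distribution. Specializing \Cref{fact:StateHSPFourierSampling} to $\Z_2^n$ (all $d_\lambda=1$, characters $\chi_{\mathbf{y}}(\mathbf{x})=(-1)^{\mathbf{y}\cdot\mathbf{x}}$) and using $\sum_{h\in H_C}(-1)^{\mathbf{y}\cdot h}=4\cdot\mathbf{1}[\mathbf{y}\perp H_C]$, one gets
\[\mathrm{P}_{\mathrm{StateHSP}_{\psi,t}}[\mathbf{y}]=\tfrac{4}{2^n}\,\mathbf{1}[\mathbf{y}\perp H_C]\Big(1+\textstyle\sum_{c\in G/H_C:\,c\neq 0}(-1)^{\mathbf{y}\cdot c}\,\mel{\psi^{\otimes t}}{R(c)}{\psi^{\otimes t}}\Big),\]
the identity coset supplying the leading $1$. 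Each of the remaining $2^{n-2}-1$ cosets contributes at most $e^{-\epsilon^2 t/2}$ in magnitude, so choosing (even) $t=O(n/\epsilon^2)$ — concretely any $t\ge\frac{2\ln 2}{\epsilon^2}(n+\omega(\log n))$ — makes the bracket equal to $1+\negl(n)$. This is exactly \eqref{eq:relativenegligibleerror} with $\mathrm{P}_{\mathrm{HSP}}$ the distribution \eqref{eq:SimonslikeDistribution}, uniform over the $(n-2)$-dimensional subspace $H_C^\perp=\{\mathbf{y}\in\Z_2^n:\mathbf{y}\cdot 1^C0^{\overline C}=\mathbf{y}\cdot 0^C1^{\overline C}=0\}$. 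Crucially the error is multiplicative — the bracket multiplies $\mathbf{1}[\mathbf{y}\perp H_C]$ — so a sample outside $H_C^\perp$ is never observed, and inside $H_C^\perp$ the distribution is uniform up to $(1\pm\negl(n))$ factors.

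It then remains to run Simon's endgame. After $p=O(n)$ independent Fourier samples $\mathbf{y}^{(1)},\dots,\mathbf{y}^{(p)}$, whose joint distribution is $\negl(n)$-close to $p$ i.i.d.\ uniform draws from the $(n-2)$-dimensional $\F_2$-space $H_C^\perp$, these span $H_C^\perp$ except with probability $2^{-\Omega(n)}+\negl(n)$ by the standard coupon-collecting bound over $\F_2$. Solving for the $\F_2$-nullspace of $Y=(\mathbf{y}^{(1)},\dots,\mathbf{y}^{(p)})^{T}$ then returns exactly $H_C$, whose three nonzero elements $1^C0^{\overline C}$, $0^C1^{\overline C}$, $1^n$ reveal $C$ (up to the unavoidable $C\leftrightarrow\overline C$ relabeling). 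For the resources: $pt=O(n^2/\epsilon^2)$ copies total, with $t=O(n/\epsilon^2)$ of them accessed coherently per round; and the circuit of \Cref{subfig-3:HiddenCutCircuit} is two layers of Hadamards around the controlled action $U_{\Z_2^n}$, which — parallelizing as in \Cref{fact:StateHSPFourierSamplingAmplified}, by fanning each control qubit $y_k$ out to the $t/2$ paired copies, performing the $t/2$ controlled single-qubit swaps per column in parallel across all $n$ columns, and then uncomputing the fan-out — has depth $O(\log t)=O(\log(n/\epsilon^2))$ using $O(nt)=O(n^2/\epsilon^2)$ ancilla qubits.

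The main obstacle is the middle step. The identity $\mel{\psi^{\otimes t}}{R(c)}{\psi^{\otimes t}}=\big(\Tr(\psi_S^2)\big)^{t/2}$ together with the uniform purity bound $\Tr(\psi_S^2)\le 1-\epsilon^2$ over all $2^n$ false cuts (\Cref{fact:epsilonpurity}) is what collapses this problem onto an Abelian HSP whose Fourier distribution is exactly the Simon-like \eqref{eq:SimonslikeDistribution}; and it matters that the resulting error is multiplicative rather than merely additive (so that no stray $\mathbf{y}\notin H_C^\perp$ ever pollutes the nullspace computation), which here is automatic because the subgroup character sum $\sum_{h\in H_C}(-1)^{\mathbf{y}\cdot h}$ annihilates the entire coefficient off $H_C^\perp$. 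Everything downstream — the coupon-collector estimate, the $\F_2$ linear algebra, and the circuit parallelization — is routine.
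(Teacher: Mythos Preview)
Your proposal is correct and follows essentially the same approach as the paper: identify $H_C\simeq\Z_2^2$ via the purity identity $\mel{\psi^{\otimes t}}{R(\mathbf{x})}{\psi^{\otimes t}}=\Tr(\psi_S^2)^{t/2}$, bound the non-identity cosets uniformly by $(1-\epsilon^2)^{t/2}$ using \Cref{fact:epsilonpurity}, choose $t=O(n/\epsilon^2)$ to make the multiplicative correction negligible, and finish with Simon's linear algebra and the fan-out circuit parallelization. The only cosmetic difference is that the paper further factors the coset sum across the two sides of the cut (writing the bound via $\Delta_{\phi_1,t}$ and $\Delta_{\phi_2,t}$), which it does to set up the many-cut generalization in \Cref{sec:HiddenManyCut}; your direct $2^{n-2}-1$ coset count is equally valid for the single-cut statement and yields the same $t$.
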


As outlined above, \Cref{alg:main} addresses both the generic hidden cut problem, which promises that the factor states are $\epsilon$-far from separable, as well as the case of Haar-random factor states. The difference between the two is in the choice of the number of state copies $t$ used to produce each Fourier sample. This section focuses on the first, generic case. In the later \Cref{sec:Haar}, we will prove why the stronger promise of Haar-random factor states improves the requirements of \Cref{alg:main}, such that useful Fourier samples can be produced by constant-depth circuits acting on only two state copies at a time (see \Cref{thm:haar}).

As stated in \Cref{thm:nonadaptive}, the above \Cref{alg:main} succeeds in finding the cut with $O(n^2/\epsilon^2)$ total state copies. Crucially, at the end of this section we will introduce an adaptive modification of \Cref{alg:main}, which will allow us to find the hidden cut with only $O(n/\epsilon^2)$ state copies, thus further reducing the requirement by a factor of $n$. This second, adaptive hidden cut algorithm will be described in Section \ref{sec:AdaptiveSubspaceAlgorithm} and will build upon the analysis of \Cref{alg:main}, which follows below.

\subsection{Analyzing the hidden cut algorithm: proof of \Cref{thm:nonadaptive} 
}

To show that the algorithm succeeds in finding the cut with high probability, we organize the analysis across the following facts. The first two facts are straightforward:
\begin{fact}[Circuit size]\label{fact:CircuitSizeFanOut}
    With $n$ ancillary qubits to represent the group $\Z_2^n$, the Fourier sampling circuit for the $\Z_2^n$ action defined above on the $t$-fold input state $\ket{\psi}^{\otimes t} \in (\C^{2^n})^{\otimes t}$ can be implemented efficiently with a circuit of depth $O(\log t)$ involving $nt/2$ ancillary qubits.
\end{fact}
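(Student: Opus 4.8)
The plan is to unpack the Fourier sampling circuit $\left(H^{\otimes n}\otimes I\right)U_{\Z_2^n}\left(H^{\otimes n}\otimes I\right)$ into elementary gates and bound the depth of each piece. The two outer group Fourier transforms over $\Z_2^n$ are just $H^{\otimes n}$, which has depth $1$, so all the work is in the controlled group action $U_{\Z_2^n}=\sum_{\mathbf{x}\in\Z_2^n}\dyad{\mathbf{x}}\otimes R(\mathbf{x})$. First I would observe that since $R(\mathbf{x})=R_1^{x_1}\cdots R_n^{x_n}$ and the column operators $R_k\equiv R_k(\xi)$ act on pairwise disjoint sets of qubits (namely the $k$-th column across the $t$ copies), $U_{\Z_2^n}$ factors as a product of $n$ mutually commuting operators $\prod_{k=1}^n\Lambda_k(R_k)$, where $\Lambda_k(R_k)$ denotes $R_k$ controlled on the $k$-th ancilla qubit. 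Because these $n$ controlled operations act on disjoint registers (distinct ancillas, distinct columns), they can all be executed in parallel.

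Next I would analyze a single $\Lambda_k(R_k)$. Since $\xi=(1\;2)(3\;4)\cdots(t-1\;t)$ is a product of $t/2$ disjoint transpositions, $R_k$ is a product of $t/2$ SWAP gates on pairwise disjoint qubit pairs, so $\Lambda_k(R_k)$ is a product of $t/2$ Fredkin (controlled-SWAP) gates that all share the single control qubit $k$. The one genuine subtlety — and the step I would flag as the heart of the argument — is that a naive serial implementation of these $t/2$ Fredkin gates has depth $\Theta(t)$ because they all reuse the same control, defeating the goal of logarithmic depth. This is circumvented by a standard fan-out gadget: copy the control qubit $k$ onto $t/2-1$ fresh ancillas via a binary tree of CNOTs of depth $\lceil\log_2(t/2)\rceil=O(\log t)$, apply the $t/2$ Fredkin gates in parallel (depth $O(1)$, since once each has its own control their supports are disjoint), and finally uncompute the fan-out tree in depth $O(\log t)$.

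Putting the pieces together: the total depth is $O(\log t)+O(1)+O(\log t)=O(\log t)$, and the ancilla count is the $n$ qubits of the group register plus $n\cdot(t/2-1)$ fan-out qubits (one batch per column, with all $n$ fan-outs performed simultaneously), giving exactly $n+n(t/2-1)=nt/2$ ancillary qubits. Efficiency is then immediate, since the circuit consists of $O(nt)$ Hadamard, CNOT, and Fredkin gates, which is polynomial in the problem size (recall $t=O(n/\epsilon^2)$ in \Cref{thm:nonadaptive}). I do not anticipate any real obstacle beyond carefully bookkeeping the fan-out ancillas and verifying the disjoint-support claims that license the parallelization.
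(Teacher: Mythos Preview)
Your proposal is correct and follows essentially the same approach as the paper: both identify the $\Z_2^n$ QFT as depth-$1$ Hadamards, factor $U_{\Z_2^n}$ into $n$ parallel column-wise controlled-$R_k$'s, observe each is $t/2$ Fredkin gates sharing one control, and use a CNOT fan-out tree of depth $O(\log t)$ to parallelize them with $nt/2$ total ancillas. Your bookkeeping of the ancilla count $n+n(t/2-1)=nt/2$ is slightly more explicit than the paper's, but the argument is the same.
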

\begin{proof}
    The circuit is sketched in \Cref{subfig-3:HiddenCutCircuit}. The group quantum Fourier transform over $\Z_2^n$ is simply $n$ parallel Hadamard gates, while the controlled group action $U_{\Z_2^n}=\sum_{\mathbf{x}\in \Z_2^n}\dyad{\mathbf{x}}\otimes R(\mathbf{x})$ is efficient to implement as $n$ parallel sequences of $t/2$ controlled-SWAPs. With a single ancllary register of $n$ qubits, this can be implemented in depth $O(t)$, by having each of the $n$ ancillary qubits control a sequence of $t/2$ SWAPs in parallel. The depth can be lowered to $O(\log t)$ by a fan-out construction at the expense of using $t/2$ ancillary $n$-qubit registers; one copies the information from the first ancillary register onto all $t$ registers by a $O(\log t)$-depth binary tree of CNOT gates, after which each of the $nt/2$ ancillary qubits controls a SWAP in parallel, followed by uncomputing the CNOT fan-out in depth $O(\log t)$.
\end{proof}

\begin{fact}[Hidden cut as Abelian StateHSP]\label{fact:hiddensubgroup}
    To each cut $C\subset[n]$ corresponds a hidden subgroup $H_C<\Z_2^n$ isomorphic to $\Z_2^2$ which preserves the state under the group action, given by:
    \begin{equation}\label{eq:hiddensubgroupHC}
        H_C=\{0^n,\;1^C0^{\overline{C}},\;0^C1^{\overline{C}},\;1^n\}\,,
    \end{equation}
    where by $1^C0^{\overline{C}}$ we mean the $n$-bit string with $1$'s on the positions in $C$ and $0$'s everywhere else. Similarly, denote by $\mathbf{y}_1^{C}\mathbf{y}_2^{\overline{C}}$ the $n$-bit string whose restriction to the positions in $C$ is the sub-string $\mathbf{y}_1\in\Z_2^{\abs{C}}$, and whose restriction to the positions in $\overline{C}=[n]\setminus C$ is the sub-string $\mathbf{y}_2\in\Z_2^{n-\abs{C}}$. Then, performing Fourier sampling on the standard HSP over $\Z_2^n$ with the hidden subgroup $H_C$ produces $n$-bit string samples from the probability distribution:
    
    \begin{equation}\label{eq:benchmarHSPdistr}
        \mathrm{P}_{\mathrm{HSP}}[\mathbf{y}_1^{C}\mathbf{y}_2^{\overline{C}}]=\frac{\delta_{\abs{\mathbf{y}_1}\text{ even}}\,\delta_{\abs{\mathbf{y}_2}\text{ even}}}{2^{n-2}}=\left\{
        \begin{array}{lr}
            2^{-n+2} & \text{if }\mathbf{y}_1\cdot1^{n/2}=\mathbf{y}_2\cdot1^{n/2}=0\;\mathrm{mod}\;2 \\
            0 & \text{otherwise.}
        \end{array}
        \right.
    \end{equation}
    
    This distribution is uniform over the $(n-2)$-dimensional Boolean subspace in $\Z_2^n$ defined by the cut $C$:
    \begin{equation}\label{eq:Sigma_C_subspace_def}
        H_C^\perp\equiv\left\{\mathbf{y}\in\Z_2^n\,:\,\mathbf{y}\cdot1^C0^{\overline{C}}=\mathbf{y}\cdot0^C1^{\overline{C}}=0\;\mathrm{mod}\;2\right\}.
    \end{equation}
\end{fact}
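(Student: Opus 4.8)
The statement has two parts: (i) identifying the hidden subgroup $H_C$ associated to a cut $C$, and (ii) computing the standard Abelian HSP Fourier-sampling distribution for this subgroup, and recognizing it as uniform over $H_C^\perp$. The plan is to verify (i) directly from the definition of the group action, and to obtain (ii) either by specializing \Cref{fact:HSPFourierSampling} to the Abelian case or by a short self-contained calculation with additive characters. Neither step should be genuinely difficult; the content is mostly bookkeeping with the permutation $\xi = (1\,2)(3\,4)\cdots(t-1\,t)$ and the encoding of cuts as bit strings.

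\textbf{Step 1: $H_C$ preserves the state.} I would show that for $\mathbf{x} = 1^C0^{\overline C}$, the operator $R(\mathbf{x}) = \prod_{j\in C} R_j(\xi)$ fixes $\ket{\psi}^{\otimes t}$. Write $\ket{\psi}^{\otimes t} = \ket{\phi_1}_C^{\otimes t}\otimes\ket{\phi_2}_{\overline C}^{\otimes t}$ by regrouping the $tn$ qubits so that the $C$-columns are collected into $\ket{\phi_1}^{\otimes t}$ and the $\overline C$-columns into $\ket{\phi_2}^{\otimes t}$. Then $\prod_{j\in C} R_j(\xi)$ acts on the $\ket{\phi_1}^{\otimes t}$ factor as the simultaneous permutation $\xi$ of the $t$ copies, i.e.\ as the operator swapping copy $2i-1$ with copy $2i$ for every $i$; this manifestly fixes $\ket{\phi_1}^{\otimes t}$ since swapping any two identical tensor factors is the identity on $\ket{\phi_1}^{\otimes t}$. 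It acts trivially on $\ket{\phi_2}^{\otimes t}$. Hence $R(1^C0^{\overline C})\ket{\psi}^{\otimes t} = \ket{\psi}^{\otimes t}$, and symmetrically for $0^C1^{\overline C}$; since these two strings generate the order-four group $\{0^n, 1^C0^{\overline C}, 0^C1^{\overline C}, 1^n\}\simeq \Z_2^2$ under bitwise addition, all of $H_C$ preserves the state. (One also checks $H_C$ is exactly this group, not larger, but for the statement we only need that $H_C$ as defined is a valid hidden subgroup; that it is the \emph{full} stabilizer, and that elements outside act with bounded overlap, is the content of later facts.)

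\textbf{Step 2: the HSP Fourier-sampling distribution.} For the standard Abelian HSP over $G = \Z_2^n$ with hidden subgroup $H = H_C$, the irreps are indexed by $\mathbf{y}\in\Z_2^n$ with character $\chi_{\mathbf{y}}(\mathbf{x}) = (-1)^{\mathbf{x}\cdot\mathbf{y}}$, each of dimension $1$, so \Cref{fact:HSPFourierSampling} gives
\begin{equation*}
  \mathrm{P}_{\mathrm{HSP}}[\mathbf{y}] = \frac{1}{2^n}\sum_{\mathbf{h}\in H_C}(-1)^{\mathbf{h}\cdot\mathbf{y}}.
\end{equation*}
This is the standard coset-state / character-orthogonality computation: the sum equals $|H_C| = 4$ when $\mathbf{y}\in H_C^\perp$ and $0$ otherwise, so $\mathrm{P}_{\mathrm{HSP}}[\mathbf{y}] = 2^{-n+2}\cdot\mathbf{1}[\mathbf{y}\in H_C^\perp]$. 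Finally, $\mathbf{y}\in H_C^\perp$ means $\mathbf{y}\cdot 1^C0^{\overline C} = 0$ and $\mathbf{y}\cdot 0^C1^{\overline C} = 0$ mod $2$; writing $\mathbf{y} = \mathbf{y}_1^C\mathbf{y}_2^{\overline C}$, the first condition says $\mathbf{y}_1$ has even Hamming weight and the second says $\mathbf{y}_2$ has even Hamming weight, which is exactly \eqref{eq:benchmarHSPdistr}. Since $1^C0^{\overline C}$ and $0^C1^{\overline C}$ are linearly independent over $\Z_2$ (as $C$ is a proper nonempty subset), $H_C^\perp$ has dimension $n-2$ and $|H_C^\perp| = 2^{n-2}$, confirming the claimed normalization and establishing \eqref{eq:Sigma_C_subspace_def}.

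\textbf{Main obstacle.} There is no serious obstacle here — this fact is essentially a sanity check that the chosen $\Z_2^n$ action fits the Abelian StateHSP template and that its associated HSP is Simon-like. The only place to be careful is the re-indexing of the $tn$ qubits in Step 1 so that ``$R_j(\xi)$ for $j\in C$'' is correctly seen as the copy-permutation $\xi$ acting on the regrouped factor $\ket{\phi_1}^{\otimes t}$; once that identification is made, invariance under swapping equal tensor factors is immediate, and everything else is the textbook Abelian-HSP character sum. The genuinely substantive work — bounding $|\mel{\psi}{R(\mathbf{x})}{\psi}|$ for $\mathbf{x}\notin H_C$ and showing the StateHSP distribution converges multiplicatively to $\mathrm{P}_{\mathrm{HSP}}$ as in \eqref{eq:relativenegligibleerror} — is deferred to the subsequent facts and is not needed for this statement.
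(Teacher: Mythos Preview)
Your proposal is correct and takes essentially the same approach as the paper, which simply states that ``the proof of the above fact is straightforward, given the manifest permutational symmetries of the $t$-fold input state'' and points to \Cref{fig:AbelianGroupAction}. You have faithfully filled in the two details the paper leaves implicit: recognizing $\prod_{j\in C}R_j(\xi)$ as a copy-permutation on the regrouped factor $\ket{\phi_1}^{\otimes t}$, and specializing \Cref{fact:HSPFourierSampling} to $\Z_2^n$ via the additive characters.
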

This formalizes the hidden cut problem as an instance of StateHSP, and provides the benchmark Fourier sampling probability of the equivalent standard HSP. The proof of the above fact is straightforward, given the manifest permutational symmetries of the $t$-fold input state (see \Cref{fig:AbelianGroupAction}). We remark that the resulting HSP is a minor variation of the well-known Simon's algorithm \cite{simon1997power}, which provides uniform samples from the $(n-1)$-dimensional subspace of $\Z_2^n$ orthogonal to the secret string $\mathbf{s}$: $H_{\mathbf{s}}^{\perp}\equiv\{\mathbf{y}\in\Z_2^n\,:\,\mathbf{y}\cdot\mathbf{s}=0\}$. Just like in Simon's algorithm, learning the orthogonal subspace based on samples from this distribution is the same as learning the secret, which in this case means learning the $n$-bit string $1^C0^{\overline{C}}$ which describes the hidden cut $C$:
\begin{fact}[Linear system]\label{fact:linearsystems}
    A complete spanning set $(\mathbf{y}^{(1)},\dots,\mathbf{y}^{(p)})$ for the cut subspace $H_C^\perp$ can be obtained with $p=O(n)$ independent samples from the HSP distribution \eqref{eq:benchmarHSPdistr} with high probability. Given such a spanning set, the string encoding the cut $1^C0^{\overline{C}}$ or its equivalent mirror opposite $0^C1^{\overline{C}}$ can be determined by solving for the nullspace of the matrix $Y=(\mathbf{y}^{(1)},\dots,\mathbf{y}^{(p)})^T$ with the samples as rows, which can be done in $\poly(n)$ time, e.g.\  by Gaussian elimination.
\end{fact}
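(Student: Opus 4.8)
The plan is to recognize \Cref{fact:linearsystems} as the standard post-processing step of Simon's algorithm, now carried out over $G=\Z_2^n$, and to feed in \Cref{fact:hiddensubgroup}, which tells us that each Fourier sample is an independent uniform draw from the $\F_2$-linear subspace $H_C^\perp\subseteq\Z_2^n$ of dimension $n-2$. (The two generators $1^C0^{\overline{C}}$ and $0^C1^{\overline{C}}$ of $H_C$ are distinct nonzero strings, hence linearly independent over $\F_2$, so $\dim H_C^\perp=n-2$ and $\lvert H_C^\perp\rvert=2^{n-2}$.)

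First I would prove the ``coupon-collector'' half: that $p=O(n)$ uniform samples from an $m$-dimensional $\F_2$-subspace $V$ span all of $V$ with high probability. The cleanest route is a union bound over hyperplanes. The samples $\mathbf{y}^{(1)},\dots,\mathbf{y}^{(p)}$ fail to span $V$ precisely when they all lie in some proper subspace of $V$, equivalently inside one of the at most $2^m-1$ codimension-one subspaces of $V$ (every proper subspace is contained in a hyperplane). For any fixed hyperplane, each sample lands in it with probability $1/2$ independently, so the chance all $p$ do is $2^{-p}$; the union bound then gives a total failure probability at most $(2^m-1)\,2^{-p}<2^{\,m-p}=2^{\,n-2-p}$, which is exponentially small once $p=n+\Theta(1)$. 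Hence $p=O(n)$ samples yield a complete spanning set of $H_C^\perp$ with high probability. (Alternatively one may run the sequential argument: conditioned on the current partial span having dimension $k<m$, a fresh sample enlarges it with probability $1-2^{k-m}\ge 1/2$, so among $p=O(n)$ samples a Chernoff bound gives at least $m$ dimension-increasing ones with high probability.)

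Next I would identify the nullspace of $Y=(\mathbf{y}^{(1)},\dots,\mathbf{y}^{(p)})^T$ with $H_C$. A string $\mathbf{z}\in\Z_2^n$ lies in the kernel of $Y$ over $\F_2$ iff $\mathbf{z}\cdot\mathbf{y}^{(i)}=0$ for all $i$, and since the $\mathbf{y}^{(i)}$ span $H_C^\perp$ this is equivalent to $\mathbf{z}\in(H_C^\perp)^\perp$. Because the standard bilinear form on $\Z_2^n$ is non-degenerate, $(H_C^\perp)^\perp=\mathrm{span}\{1^C0^{\overline{C}},0^C1^{\overline{C}}\}=H_C$, so the kernel is exactly the four-element subgroup in \eqref{eq:hiddensubgroupHC}. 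Enumerating these four kernel elements and discarding $0^n$ and $1^n$ leaves the two complementary strings $1^C0^{\overline{C}}$ and $0^C1^{\overline{C}}$, either of which determines the cut; the $C\leftrightarrow\overline{C}$ ambiguity is unavoidable, since $\ket{\psi}$ is product across $C$ exactly when it is product across $\overline{C}$.

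Finally, the runtime: reducing $Y$ (a $p\times n$ matrix with $p=O(n)$) to row echelon form over $\F_2$ by Gaussian elimination costs $O(p\,n^2)=O(n^3)$ bit operations, after which the row space $H_C^\perp$ and its two-dimensional orthogonal complement $H_C$ are read off directly, so the post-processing runs in $\poly(n)$ time. I do not anticipate a genuine obstacle here — the two points needing care are choosing the union bound so that only $O(n)$ (rather than $O(n^2)$) samples are required, and invoking non-degeneracy of the $\F_2$ inner product to be sure the kernel is exactly $H_C$ and nothing larger.
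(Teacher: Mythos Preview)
Your proposal is correct and follows the same Simon's-algorithm reasoning the paper invokes, supplying the explicit hyperplane union bound and double-orthogonal identification $(H_C^\perp)^\perp=H_C$ that the paper leaves as ``straightforward.'' The paper's proof additionally sketches an equivalent alternative to the direct nullspace computation: for each $j\in\{2,\dots,n\}$, decide whether coordinate $j$ lies on the same side of the cut as coordinate $1$ by checking feasibility of the linear system $Y\mathbf{x}=e_1+e_j$ over $\F_2$.
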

\begin{proof}
The proof of this fact is straightforward and follows the same logic as Simon's algorithm. As an alternative to solving for the nullspace of the matrix $Y$, we mention here a slightly slower, but more illustrative equivalent procedure of analyzing the collected samples. The idea is to iteratively learn the members of each side of the cut by solving a number of $n-1$ linear equations involving the matrix $Y$, in the following way: first, ask whether positions 1 and 2 (out of $n$) are on the same side of the cut (i.e.\  whether they are both in $C$ or both in $\overline{C}=[n]\setminus C$). This is answered by solving for $\mathbf{x}$ in the linear system $Y\mathbf{x} = (1,1,0,\dots,0)^T$; there is a solution $\mathbf{x}$ if 1 and 2 are on the same side, otherwise the system is infeasible. Continue in this way for each of the remaining positions $3,\dots,n$ by solving the same pairwise membership check of each position against position 1, thus determining the cut allocation of all coordinates in $\poly(n)$ time.
\end{proof}

To understand when this Simon-like HSP algorithm can be applied to the Fourier samples coming from the associated StateHSP hidden cut problem, we need to apply the framework from \Cref{sec:StateHSP} to bound the difference between the two distributions. This will inform the number of copies necessary for orthogonality amplification such that the two probability distributions become negligibly close at the level of each outcome.
\begin{fact}[Output distribution over $\Z_2^{n}$]\label{fact:HCPOutput}
    Assume $C\in\binom{[n]}{n/2}$ is the true cut and the factorization of the input state is $\ket{\psi}=\ket{\phi_1}^C\otimes \ket{\phi_2}^{\overline{C}}$. Given $t$ state copies for each sample, \Cref{alg:main} returns strings in $\{0,1\}^{n}$ according to the probability distribution $\mathrm{P}_{\mathrm{StateHSP}_{\psi,t}}$, which respects:
    \begin{align}\label{eq:cutdistribution}
        \abs{\mathrm{P}_\mathrm{StateHSP_{\psi,t}}[\mathbf{y_1}^C\mathbf{y_2}^{\overline{C}}] - \mathrm{P}_\mathrm{HSP}[\mathbf{y_1}^C\mathbf{y_2}^{\overline{C}}]} &\leq \frac{1}{2^{n-2}}\left[\prod_{k\in\{1,2\}}\left(1+\Delta_{\phi_k,t}\right)-1\right]\,,
    \end{align}
    where:
    \begin{equation}
        \Delta_{\phi,t}\equiv \sum_{\substack{S\subset[n/2]\\1\notin S\\S\neq\varnothing}}\Tr[\phi_{S}^2]^{t/2}\,.
    \end{equation}
\end{fact}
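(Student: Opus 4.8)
The plan is to apply the generic StateHSP Fourier sampling formula (\Cref{fact:StateHSPFourierSampling}) specialized to the Abelian group $G = \Z_2^n$ with the hidden subgroup $H_C$ from \eqref{eq:hiddensubgroupHC}, and then carefully bound the non-identity coset contributions using the internal entanglement structure of the factor states. The starting point is that for the Abelian group $\Z_2^n$, all irreps are one-dimensional, indexed by characters $\chi_{\mathbf{y}}(\mathbf{x}) = (-1)^{\mathbf{x}\cdot\mathbf{y}}$, so \eqref{eq:StateHSPFourierDistr} reads
\begin{equation*}
    \mathrm{P}_{\mathrm{StateHSP}_{\psi,t}}[\mathbf{y}] = \frac{1}{2^n}\sum_{c\in G/H_C}\sum_{h\in H_C}(-1)^{(c+h)\cdot\mathbf{y}}\,\mel{\psi^{\otimes t}}{R(c)^{\otimes t}}{\psi^{\otimes t}}\,.
\end{equation*}
The $c = 0^n$ term, after summing over $h \in H_C$, reproduces exactly $\mathrm{P}_{\mathrm{HSP}}[\mathbf{y}]$ from \eqref{eq:benchmarHSPdistr}, since $\sum_{h\in H_C}(-1)^{h\cdot\mathbf{y}} = 4\cdot\mathbf{1}[\mathbf{y}\in H_C^\perp]$ and $\langle\psi|\psi\rangle = 1$. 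So the left-hand side of \eqref{eq:cutdistribution} is the absolute value of the remaining sum over nontrivial cosets $c \neq 0^n$, which after triangle inequality and $|(-1)^{(c+h)\cdot\mathbf{y}}| = 1$ is bounded by $\frac{4}{2^n}\sum_{c\in G/H_C,\,c\neq 0^n}|\mel{\psi}{R(c)}{\psi}|^t$.

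The heart of the argument is identifying $\mel{\psi}{R(c)}{\psi}$ for a coset representative $c$. Writing $c = \mathbf{x}_1^C\mathbf{x}_2^{\overline{C}}$ with $\mathbf{x}_1\in\Z_2^{n/2}$, $\mathbf{x}_2\in\Z_2^{n/2}$, the group element $R(c)$ acts on $\ket{\psi}^{\otimes t}=\ket{\phi_1}_C^{\otimes t}\otimes\ket{\phi_2}_{\overline{C}}^{\otimes t}$ factoring across the cut, so $\mel{\psi^{\otimes t}}{R(c)^{\otimes t}}{\psi^{\otimes t}} = \mel{\phi_1^{\otimes t}}{R_{\mathbf{x}_1}^{\otimes t}}{\phi_1^{\otimes t}}\cdot\mel{\phi_2^{\otimes t}}{R_{\mathbf{x}_2}^{\otimes t}}{\phi_2^{\otimes t}}$ where $R_{\mathbf{x}_i}$ swaps, within each of the $t/2$ pairs, the qubits indexed by the support of $\mathbf{x}_i$. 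The key fact (analogous to the purity computation in \Cref{prop:swap-is-projector} and the purity formula $\Tr(\swapjohn\cdot\rho^{\otimes 2}) = \Tr(\rho^2)$) is that $\mel{\phi^{\otimes 2}}{\swapjohn_S}{\phi^{\otimes 2}} = \Tr(\phi_S^2)$, the purity of the reduced state on the subset $S = \mathrm{supp}(\mathbf{x}_i)$; tensoring over the $t/2$ pairs gives $\mel{\phi^{\otimes t}}{R_{\mathbf{x}_i}^{\otimes t}}{\phi^{\otimes t}} = \Tr(\phi_S^2)^{t/2}$, which is real and nonnegative. Choosing coset representatives: since $H_C = \{0^n, 1^C0^{\overline C}, 0^C1^{\overline C}, 1^n\}$, each coset is uniquely represented by a pair $(S_1, S_2)$ with $S_1\subseteq C$, $S_2\subseteq\overline C$, $1\notin S_1$ (pinning position $1$ to break the $S_1\leftrightarrow C\setminus S_1$ redundancy), and similarly fixing one representative position in $\overline C$ — and the trivial coset corresponds to $S_1 = S_2 = \varnothing$. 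Then the non-identity-coset sum factors as a product over the two sides:
\begin{equation*}
    \sum_{\substack{c\neq 0^n}}\bigl|\mel{\psi}{R(c)}{\psi}\bigr|^t = \Bigl(\sum_{S_1}\Tr(\phi_{1,S_1}^2)^{t/2}\Bigr)\Bigl(\sum_{S_2}\Tr(\phi_{2,S_2}^2)^{t/2}\Bigr) - 1 = \prod_{k\in\{1,2\}}(1 + \Delta_{\phi_k,t}) - 1\,,
\end{equation*}
where the $-1$ removes the trivial $(\varnothing,\varnothing)$ term and $\Delta_{\phi,t} = \sum_{S\subseteq[n/2],\,1\notin S,\,S\neq\varnothing}\Tr(\phi_S^2)^{t/2}$ collects the nontrivial subsets on one side. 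Dividing by $2^n$ and recalling the factor of $4$ gives the claimed bound $\frac{1}{2^{n-2}}\bigl[\prod_k(1+\Delta_{\phi_k,t})-1\bigr]$.

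The main obstacle is bookkeeping the coset representatives of $H_C$ correctly — in particular verifying that the map from cosets to pairs $(S_1, S_2)$ of subsets (one on each side of the cut, with one position pinned on each side to kill the order-two redundancy coming from the generators $1^C0^{\overline C}$ and $0^C1^{\overline C}$) is a bijection, so that the sum over $G/H_C$ genuinely factors into an independent product over the two halves. A secondary technical point is confirming that $\mel{\phi^{\otimes t}}{R_{\mathbf{x}}^{\otimes t}}{\phi^{\otimes t}} = \Tr(\phi_S^2)^{t/2}$ exactly (not just up to absolute value), which follows because the $t/2$ pairs are independent and each contributes the real nonnegative factor $\Tr(\phi_S^2)$ via the purity formula; this is why no absolute-value signs are needed inside $\Delta_{\phi,t}$. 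Everything else is a routine application of \Cref{fact:StateHSPFourierSampling} together with the triangle inequality.
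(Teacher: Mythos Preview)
Your proposal is correct and follows essentially the same approach as the paper: both identify the inner products $\mel{\psi^{\otimes t}}{R(\mathbf{x})}{\psi^{\otimes t}}$ as amplified purities $\Tr[\psi_{\mathbf{x}}^2]^{t/2}$, factor these across the cut, choose coset representatives of $H_C$ by pinning the first coordinate on each side, and apply a triangle inequality to obtain the product bound. The only minor difference is that the paper first derives an exact multiplicative formula $\mathrm{P}_{\mathrm{StateHSP}_{\psi,t}}[\mathbf{y}] = \mathrm{P}_{\mathrm{HSP}}[\mathbf{y}]\cdot\prod_k\big[\sum_{\mathbf{z}:\,z_1=0}(-1)^{\mathbf{z}\cdot\mathbf{y}_k}\Tr[\phi_{k,\mathbf{z}}^2]^{t/2}\big]$ before bounding (which also yields the exact support statement $\mathrm{P}_{\mathrm{StateHSP}}[\mathbf{y}]=0$ when $\mathbf{y}\notin H_C^\perp$), whereas you go straight to the additive bound via the crude estimate $|\sum_{h\in H_C}(-1)^{h\cdot\mathbf{y}}|\leq 4$; both routes reach the stated inequality.
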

\begin{proof}
The starting point is the observation that powers of the purity enter naturally as the inner products of the coset states from \Cref{sec:StateHSP}:
\begin{equation}
    \mel{\psi^{\otimes t}}{R(\mathbf{x})}{\psi^{\otimes t}} = \Tr[\psi_{\mathbf x}^2]^{t/2}\,,
\end{equation}
where the $\Tr[\psi_{\mathbf x}^2]$ denotes the purity across the cut $C_{\mathbf{x}}\equiv\{i\in[n]\,:\:x_i=1\}$ represented by the bit-string $\mathbf{x} \in \Z_2^n$. Importing the StateHSP analysis \eqref{eq:StateHSPFourierDistr} from \Cref{sec:StateHSP}, it follows that our Fourier sampling circuit effectively performs a Boolean Fourier transform on this set of amplified purities. Specifically, the output distribution is:
\begin{equation}\label{eq:entanglementfeature}
    \mathrm{P}_\mathrm{StateHSP_{\psi,t}}[\mathbf{y}]=\frac{1}{2^n}\sum_{\mathbf{x}\in\Z_2^n}(-1)^{\mathbf{x}\cdot \mathbf{y}}\;\Tr[\psi_{\mathbf x}^2]^{t/2},\quad\text{ where }\mathbf{y}\in\Z_2^n.
\end{equation}
Operations in the hidden subgroup $H_C$ \eqref{eq:hiddensubgroupHC} preserve the state, so we can split the above sum over the group $\Z_2^n$ into a sum over the subgroup $H_C$ and a sum over the coset representatives $\Z_2^n/H_C$:
\begin{equation}
   \mathrm{P}_\mathrm{StateHSP_{\psi,t}}[\mathbf{y}]=\frac{1}{2^n}\sum_{\mathbf{h}\in H_C}(-1)^{\mathbf{h}\cdot \mathbf{y}}\times\sum_{\mathbf{x}\in\Z_2^n/H_C}(-1)^{\mathbf{x}\cdot \mathbf{y}}\;\Tr[\psi_{\mathbf x}^2]^{t/2}\,.
\end{equation}
The first term is precisely the $\mathrm{P}_\mathrm{HSP}[\mathbf{y}]$ distribution corresponding to the standard HSP problem with the same specifications. Given the internal structure of the state, the purity factors into the two separate contributions from each substate:
\begin{equation}\label{eq:NonAdaptiveOutcomeDistribution}
    \mathrm{P}_\mathrm{StateHSP_{\psi,t}}[\mathbf{y_1}^C\mathbf{y_2}^{\overline{C}}]=\mathrm{P}_\mathrm{HSP}[\mathbf{y}]\times\prod_{k\in\{1,2\}}\left[\sum_{\substack{\mathbf{z}\in\Z_2^{n/2}\\z_1=0}}(-1)^{\mathbf{z}\cdot \mathbf{y}_k}\;\Tr[\phi_{k,\mathbf{z}}^2]^{t/2}\right]\,.
\end{equation}
Here, we have chosen coset representatives $\Z_2^n/H_C=\{\mathbf{z}^C\mathbf{\overline{z}}^{\overline{C}}\;:\;\mathbf{z},\mathbf{\overline{z}}\in\Z_2^{n/2}, \;z_1=\overline{z}_1=0\}$. The conclusion follows from a triangle inequality (i.e.\  ignoring the $\pm 1$ phases) on all the terms on the right hand side except the leading term from $\mathbf{z}=0^{n/2}$, which corresponds to a trivial void cut with purity one.
\end{proof}

A sufficient condition for the algorithm to work is to ensure that the two distributions are negligibly close in a relative sense at the level of each outcome, meaning $\mathrm{P}_\mathrm{StateHSP_{\psi,t}}[\mathbf{y}]=\mathrm{P}_\mathrm{HSP}[\mathbf{y}]\;(1+\negl(n))$. Given the above fact, it is enough to choose the number of copies per sample $t$ such that $\Delta_{\phi_{1,2},t}=\negl(n)$.
\begin{fact}\label{fact:internalentanglementbounds}
    If a state $\ket{\phi}$ on $n/2$ qubits is at least $\epsilon$-far from any separable state along any internal cut, then: $\Delta_{\phi,t}\leq 2^{n/2}(1-\epsilon^2)^{t/2}$. Therefore, a choice of $t=O(n/\epsilon^2)$ copies per sample makes the relative correction in \Cref{fact:HCPOutput} negligible in $n$.
     
\end{fact}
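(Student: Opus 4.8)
The plan is to bound $\Delta_{\phi,t}$ termwise using the robust purity estimate already established, and then control the number of terms by a crude counting argument. First I would note that the summation index in $\Delta_{\phi,t}=\sum_{S\subset[n/2],\,1\notin S,\,S\neq\varnothing}\Tr[\phi_S^2]^{t/2}$ runs over subsets $S\subseteq[n/2]$ that are nonempty and omit qubit $1$; each such $S$ is in particular a \emph{proper} nonempty subset, so $(S,[n/2]\setminus S)$ is a genuine bipartition of the $n/2$ qubits carrying $\ket{\phi}$. Since $\ket{\phi}$ is $\epsilon$-far from every separable state, it is in particular $\epsilon$-far from every state that is a product across the cut $(S,[n/2]\setminus S)$, and \Cref{prop:robust-purity} then gives $\Tr[\phi_S^2]\le 1-\epsilon^2$, hence $\Tr[\phi_S^2]^{t/2}\le(1-\epsilon^2)^{t/2}$, for each such term.

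Next I would count the terms: there are exactly $2^{n/2-1}-1<2^{n/2}$ nonempty subsets of $\{2,\dots,n/2\}$, so summing the uniform per-term bound yields $\Delta_{\phi,t}\le 2^{n/2}(1-\epsilon^2)^{t/2}$, which is the first assertion. For the ``negligible'' part I would use $1-\epsilon^2\le e^{-\epsilon^2}$ to write $\Delta_{\phi,t}\le 2^{n/2}e^{-\epsilon^2 t/2}=\exp\!\big(\tfrac{n\ln 2}{2}-\tfrac{\epsilon^2 t}{2}\big)$, so choosing $t=Kn/\epsilon^2$ with any fixed constant $K>\ln 2$ (for concreteness $K=2$) makes $\Delta_{\phi,t}\le\exp(-\Omega(n))$, which is negligible in $n$. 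Feeding $\Delta_{\phi_1,t},\Delta_{\phi_2,t}\le\negl(n)$ into \Cref{fact:HCPOutput}, the relative correction obeys $\prod_{k\in\{1,2\}}(1+\Delta_{\phi_k,t})-1\le 3\max_{k}\Delta_{\phi_k,t}=\negl(n)$; combined with $\mathrm{P}_\mathrm{HSP}[\mathbf{y}]\in\{0,2^{-n+2}\}$ this gives $\mathrm{P}_\mathrm{StateHSP_{\psi,t}}[\mathbf{y}]=\mathrm{P}_\mathrm{HSP}[\mathbf{y}]\,(1+\negl(n))$ on the support of $\mathrm{P}_\mathrm{HSP}$ and $\mathrm{P}_\mathrm{StateHSP_{\psi,t}}[\mathbf{y}]\le\negl(n)\cdot 2^{-n+2}$ off its support, recovering \eqref{eq:relativenegligibleerror}.

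I do not expect a real obstacle here; the only points requiring care are (i) matching the hypothesis ``$\epsilon$-far from separable along any internal cut'' to the bipartite statement of \Cref{prop:robust-purity} --- i.e.\ verifying that every index $S$ contributing to $\Delta_{\phi,t}$ names an honest bipartition across which $\ket{\phi}$ is far from product --- and (ii) taking the constant in $t=O(n/\epsilon^2)$ large enough to beat the $2^{n/2}$ prefactor, since the bound is vacuous when $t$ is too small. One should also be explicit that ``negligible'' is meant in the instance size $n$ (not in $\epsilon$ or $t$), and note that the argument actually delivers an exponentially small error.
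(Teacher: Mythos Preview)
Your proposal is correct and follows essentially the same approach as the paper: bound each purity in the sum by $1-\epsilon^2$ via the robust purity estimate, count the at most $2^{n/2}$ terms, and then convert to an exponential using $1-\epsilon^2\le e^{-\epsilon^2}$ to read off $t=O(n/\epsilon^2)$. The paper invokes \Cref{fact:epsilonpurity} rather than \Cref{prop:robust-purity} directly, but the underlying purity bound is the same, and your extra paragraph making explicit how the $\negl(n)$ bound on $\Delta_{\phi_k,t}$ feeds back into the relative error of \Cref{fact:HCPOutput} is a helpful elaboration that the paper leaves implicit.
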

\begin{proof}
    The result follows from a straightforward binomial sum argument. Using \Cref{fact:epsilonpurity}, we have that each nontrivial purity is upper-bounded by $\Tr[\phi_S^2]\leq 1-\epsilon^2$, therefore:
    \begin{align}
        \Delta_{\phi,t} &= \sum_{\substack{S\subset[n/2]\\1\notin S\\S\neq\varnothing}}\Tr[\phi_{S}^2]^{t/2}\\
        &\leq 2^{n/2}(1-\epsilon^2)^{t/2}\\
        &\leq e^{\frac{\ln 2}{2}n - \frac{t\epsilon^2}{2}}\,.
    \end{align}
    Therefore, a choice of $t=O(n/\epsilon^2)$ is enough to make this quantity negligible in $n$.
\end{proof}

This completes the proof of the main theorem.

\subsection{Improving the Abelian HSP algorithm by adaptive subspace preparations}\label{sec:AdaptiveSubspaceAlgorithm}

In this section, we describe an adaptive modification of \Cref{alg:main} which improves the number of state copies required to determine the hidden cut by a factor of $n$, from $O(n^2/\epsilon^2)$ down to $O(n/\epsilon^2)$. This achieves an optimal asymptotic in terms of the number of state copies (up to logarithmic factors) as announced in the introduction, given the related decision lower bound of Jones and Montanaro \cite{jones2024testing}. The adaptive algorithm operates as follows:

\begin{algorithm}[H]
\caption{Hidden cut algorithm with adaptive subspaces}
\label{alg:adaptive}
\SetStartEndCondition{ }{}{}%
\SetKwProg{Fn}{def}{\string:}{}
\SetKwInput{kwReqs}{Requirements}
\SetKwInput{kwParams}{Parameters}
\SetKwFunction{Range}{range}
\SetKw{KwTo}{in}\SetKwFor{For}{for}{\string:}{}%
\SetKwIF{If}{ElseIf}{Else}{if}{:}{elif}{else:}{}%
\SetKwFor{While}{while}{:}{fintq}%
\newcommand{\forcond}{$i$ \KwTo\Range{$n$}}
\AlgoDontDisplayBlockMarkers\SetAlgoNoEnd\SetAlgoNoLine
\kwParams{$n$ qubits, factor states promised to be $\epsilon$ away from product states.}
\kwReqs{$2n$ additional qubits, implementation of the $\Z_2^n$ group action $U_{\Z_2^n}$.}
\For{sample count $k\in\{1,\dots,n-2\}$}{
    Define the Boolean subspaces $\Sigma_k^\perp\equiv \mathrm{span}\{\mathbf{y}^{(1)},\dots,\mathbf{y}^{(k-1)}\}$ and $\Sigma_k\equiv \left(\Sigma_k^\perp\right)^\perp$, defined as $\Sigma_k=\left\{\mathbf{z}\in\Z_2^n\,:\,\mathbf{z}\cdot \mathbf{y}^{(j)}=0\;\mathrm{mod}\;2,\;\forall j\in[k-1]\right\}$. If $k=1$, then set $\Sigma_k=\Z_2^n$.\\
    In the ancillary group register, prepare the superposition $\ket{\Sigma_k}=\frac{1}{\sqrt{2^{n-k+1}}}\sum_{\mathbf{z}\in \Sigma_k} \ket{\mathbf{z}}$.\\
    Prepare $t=O(1/\epsilon^2)$ copies of the state $\ket{\psi}$.\\
    Run the Fourier sampling circuit: $\left(H^{\otimes n}\otimes I\right)U_{\Z_2^n}\;\ket{\Sigma_k}\otimes \ket{\psi}^{\otimes t}$.\\
    Measure the group register to obtain a new sample $\mathbf{y}^{(k)}\in\Z_2^n$.\\
    Keep the sample if it is nonzero and outside $\Sigma_k^\perp$, otherwise repeat.\label{line:RejectionSampling}
}
Classically solve for the nullspace of $Y=\left(\mathbf{y}^{(1)},\dots,\mathbf{y}^{(n-2)}\right)^T \in \Z_2^{p\times n}$ which is $\mathrm{span}\{1^C0^{\overline{C}}, \,0^C1^{\overline{C}}\}$.
\end{algorithm}

Compared to \Cref{alg:main}, the key difference is a different initial state in the ancillary register which hosts the regular representation of the parent group $G=\Z_2^n$. The previous \Cref{alg:main} followed a standard Fourier sampling procedure which initialized the group register in a uniform superposition over all group elements, i.e. over all of $\Z_2^n$. By comparison, the adaptive \Cref{alg:adaptive} introduced here will instead initialize the group register in a uniform superposition over the Boolean subspace which is orthogonal to previously collected samples. We will show that this serves to boost the probability that new samples will be linearly independent, such that a smaller number of copies is needed at every step for amplification purposes. Our main result is the analysis of this algorithm, showing that it succeeds in finding the hidden cut with constant probability:

\begin{customtheorem}{\ref{thm:main}}[Hidden cut algorithm --- restated]
    For $\epsilon>0$ and an $n$-qubit input state $\ket{\psi}=\ket{\phi_1}_C\otimes \ket{\phi_2}_{\overline{C}}$ separable across a cut $C\in\binom{[n]}{n/2}$, assume that the factor states $\ket{\phi_{1,2}}$ are at least $\epsilon$-far from all separable $(n/2)$-qubit states. Then, \Cref{alg:adaptive} succeeds in finding the hidden cut $C$ with constant probability using $O(n/\epsilon^2)$ copies of the input state $\ket{\psi}$. The algorithm requires coherent access to $O(1/\epsilon^2)$ copies at a time, on which it acts with circuits of depth $O(n^2)+O(\log\epsilon^{-1})$, and polynomial-time classical processing.
\end{customtheorem}

\begin{proof}
    Since \Cref{alg:adaptive} is a direct modification of the Fourier sampling approach of \Cref{alg:main}, the proof of this theorem proceeds along similar lines. Three key technical points need to be added to the analysis, which we prove in the rest of this section. First, we show that the `subspace states' $\ket{\Sigma_k}$ can indeed be efficiently prepared on the group register at the beginning of each sampling round (this is shown in \Cref{fact:EfficientSubspaceStateConstruction} below). The efficient circuits involved in preparing these states rely on finding basis vectors for the corresponding subspaces, which can be efficiently obtained classically. Second, we show that all samples lie inside the cut subspace $H_C^\perp$ with probability one, which is a consequence of the hidden cut StateHSP instance admitting the subgroup $H_C$ as the hidden symmetry subgroup; we show this in \Cref{fact:NewSamplesAreInCutSubspace} below. Finally, we show that consuming a number of $t=O(1/\epsilon^2)$ state copies per sample results in a constant probability of the new sample being linearly independent with respect to previous samples (see \Cref{fact:NewSamplesLinearlyIndependent} below). This suffices for an overall constant probability of success of \Cref{alg:adaptive} due to the rejection sampling procedure on \Cref{line:RejectionSampling}, since at every sampling round we reject new outcomes until they are linearly independent.
\end{proof}

\begin{fact}\label{fact:EfficientSubspaceStateConstruction}
    If $\Sigma$ is a $d$-dimensional subspace of $\Z_2^n$, then the $n$-qubit subspace state $\ket{\Sigma}\equiv \frac{1}{2^{d/2}}\sum_{\mathbf{z}\in \Sigma}\ket{\mathbf{z}}$ can be efficiently prepared with circuits of size $O(nd)\leq O(n^2)$.
\end{fact}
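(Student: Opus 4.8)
The plan is to use the standard ``Hadamards plus CNOTs'' construction for preparing a uniform superposition over a binary linear subspace, which amounts to packaging linear algebra over $\F_2$ with the two elementary facts that a layer of Hadamards creates a uniform superposition and that CNOTs implement $\F_2$-linear maps reversibly. First I would, classically, compute a basis of $\Sigma$ by Gaussian elimination over $\F_2$ (the subspace is supplied either explicitly as a span of vectors or as the orthogonal complement of such a span, so this costs $\poly(n)$ classical time). Row-reducing the basis to reduced row echelon form picks out a set $P\subseteq[n]$ of $d$ pivot coordinates; after permuting coordinates so that $P=\{1,\dots,d\}$, the basis of $\Sigma$ is the row space of a $d\times n$ generator matrix of the form $G=[\,I_d \mid M\,]$ with $M\in\F_2^{d\times(n-d)}$.

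The circuit then proceeds as follows. Apply a Hadamard to each of the first $d$ qubits to obtain $\tfrac{1}{2^{d/2}}\sum_{\mathbf a\in\{0,1\}^d}\ket{\mathbf a}\otimes\ket{0^{n-d}}$. Next, for each $j\in[n-d]$ and each $i\in[d]$ with $M_{i,j}=1$, apply a CNOT with control qubit $i$ and target qubit $d+j$; since every control lies in the first block and every target in the second block, these gates all commute and order is irrelevant. This writes $\bigoplus_{i}M_{i,j}\,\mathbf a_i$ into qubit $d+j$, so the state becomes $\tfrac{1}{2^{d/2}}\sum_{\mathbf a}\ket{\mathbf a}\otimes\ket{\mathbf a M}=\tfrac{1}{2^{d/2}}\sum_{\mathbf a}\ket{\mathbf a G}$. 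Because $\rank G=d$, the map $\mathbf a\mapsto \mathbf a G$ is injective on $\F_2^d$ and its image has size $2^d=\abs{\Sigma}$, hence equals $\Sigma$; so the state is exactly $\ket{\Sigma}$ up to the coordinate permutation. Finally I would undo the coordinate permutation, which is free as a relabeling of wires, or realized by $O(n)$ SWAP gates if one insists on physical gates. The total count is $d$ Hadamards, at most $d(n-d)\le nd$ CNOTs, and $O(n)$ SWAPs, i.e.\ $O(nd)=O(n^2)$, with $\poly(n)$ classical preprocessing.

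There is no genuine obstacle here; the one point deserving care is the interaction between the $\F_2$-linear bijection and the coordinate permutation, namely verifying that after bringing a basis to the form $[I_d\mid M]$ and permuting coordinates back we really recover $\Sigma$ and hit each of its elements exactly once — but this is immediate from $\rank G=d$ and $\abs{\Sigma}=2^d$. I would also remark in passing that the same circuit has depth $O(d)$ if the CNOTs targeting distinct qubits are parallelized, though only the size bound is needed for the statement.
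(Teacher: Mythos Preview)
Your proof is correct and takes a genuinely different route from the paper. The paper uses $d$ ancillary qubits: it applies Hadamards to the ancillas, then a sequence of controlled-additions $U_{\mathbf{z}_i}$ to write $a_1\mathbf{z}_1\oplus\cdots\oplus a_d\mathbf{z}_d$ into the $n$-qubit register, and finally an ``inverse'' map $V_\Sigma$ to uncompute the ancillas, leaving $\ket{\mathbf{0}}\otimes\ket{\Sigma}$. Your construction instead works entirely in place: by row-reducing the generator matrix to $[I_d\mid M]$ you identify $d$ pivot coordinates that can themselves serve as the ``control'' qubits, so no ancillas and no uncomputation are needed. Both approaches achieve $O(nd)$ gates; yours is the more standard encoding-circuit construction and is slightly leaner in space, while the paper's version has the minor advantage of not requiring the basis to be brought to echelon form or any coordinate permutation.
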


\begin{proof}
    Given an $n$-bit string $\mathbf{z}\in\Z_2^n$, one can easily implement the $(1+n)$-qubit controlled addition unitary $U_{\mathbf{z}}: \;\ket{a}\ket{\mathbf{x}}\mapsto \ket{a}\ket{\mathbf{x}\oplus a\mathbf{z}}$ for any $a\in\B,\,\mathbf{x}\in\Z_2^n$. Specifically, this can be implemented with a number $\abs{\mathbf{z}}=O(n)$ of sequential CNOT gates controlled on the $a$ register, which act on the $\mathbf{x}$ registers in the locations on which the string $\mathbf{z}$ has entries equal to one.
    
    Let $\mathbf{z}_1,\dots,\mathbf{z}_d \in \Z_2^n$ be a basis of the subspace $\Sigma$. Then, by a sequence of unitaries $U_{\mathbf{z}_1},\dots,U_{\mathbf{z}_d}$ of the kind described above, one can efficiently implement the $(d+n)$-qubit unitary:
    \begin{align}
        U_\Sigma:\;\ket{\mathbf{a}}\ket{\mathbf{x}} &\mapsto \ket{\mathbf{a}}\ket{\mathbf{x}\oplus a_1\mathbf{z}_1 \oplus\dots \oplus a_d\mathbf{z}_d},
    \end{align}
    with a circuit of total depth $O(nd)$. Similarly, with the same gate count one can implement the `inverse' $(d+n)$-qubit unitary which acts as:
    \begin{align}
        V_\Sigma:\;\ket{\mathbf{b}}\ket{a_1\mathbf{z}_1\oplus\dots\oplus a_d\mathbf{z}_d} &\mapsto \ket{\mathbf{b}\oplus \mathbf{a}}\ket{a_1\mathbf{z}_1\oplus\dots\oplus a_d\mathbf{z}_d},
    \end{align}
    for any $\mathbf{b},\,\mathbf{a}=(a_1,\dots,a_d)\in\Z_2^d$.

    Then, starting from the zero state on $d+n$ qubits, the substate state can be prepared as:
    \begin{align*}
        \ket{\mathbf{0}}\otimes \ket{\mathbf{0}}
        & \longrightarrow  \frac{1}{2^{d/2}}\sum_{\mathbf{a}\in\Z_2^d} \ket{\mathbf{a}}\otimes\ket{\mathbf{0}} \tag{applying $d$ Hadamard gates on the first $d$ qubits}\\
        & \longrightarrow \frac{1}{2^{d/2}}\sum_{\mathbf{a}\in\Z_2^d} \ket{\mathbf{a}}\otimes\ket{a_1\mathbf{z}_1\oplus\dots\oplus a_d\mathbf{z}_d} \tag{applying the $U_S$ circuit defined above}\\
        & \longrightarrow \ket{\mathbf{0}}\otimes \frac{1}{2^{d/2}}\sum_{\mathbf{a}\in\Z_2^d} \ket{a_1\mathbf{z}_1\oplus\dots\oplus a_d\mathbf{z}_d} \tag{applying the $V_S$ circuit defined above}\\
        & = \ket{\mathbf{0}}\otimes\ket{\Sigma}.
    \end{align*}
    This procedure prepares the substate state $\ket{\Sigma}=2^{-d/2}\sum_{\mathbf{a}\in\Z_2^d}\ket{a_1\mathbf{z}_1\oplus\dots\oplus a_d\mathbf{z}_d}$ on the last $n$ qubits, with a circuit of overall size $O(nd)$.
\end{proof}

\begin{fact}\label{fact:NewSamplesAreInCutSubspace}
    Each new sample $\mathbf{y}^{(k)}$ is always in the cut subspace $H_C^\perp$.
\end{fact}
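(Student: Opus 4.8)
The plan is to trace the adaptive Fourier-sampling circuit of \Cref{alg:adaptive} symbolically and show that its measurement distribution on the group register is supported entirely inside $H_C^\perp$, regardless of which subspace state $\ket{\Sigma_k}$ initializes the round. The only structural input I would use is \Cref{fact:hiddensubgroup}: the global state $\ket{\psi}^{\otimes t}$ is invariant under $R(\mathbf{h})$ for every $\mathbf{h}\in H_C$. Since $R$ is a representation of the Abelian group $\Z_2^n$ and $R(\mathbf{h})^2=I$, this at once gives that $R(\mathbf{z})\ket{\psi}^{\otimes t}$ depends only on the coset $\mathbf{z}+H_C$, because $R(\mathbf{z}+\mathbf{h})\ket{\psi}^{\otimes t}=R(\mathbf{z})R(\mathbf{h})\ket{\psi}^{\otimes t}=R(\mathbf{z})\ket{\psi}^{\otimes t}$. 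This coset-invariance is the crux of the argument.

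I would run the proof by induction on the round index $k$, with induction hypothesis $\mathbf{y}^{(1)},\dots,\mathbf{y}^{(k-1)}\in H_C^\perp$ (vacuous for $k=1$). Under this hypothesis $\Sigma_k^\perp=\mathrm{span}\{\mathbf{y}^{(1)},\dots,\mathbf{y}^{(k-1)}\}\subseteq H_C^\perp$, and taking orthogonal complements inside $\Z_2^n$ yields $\Sigma_k=(\Sigma_k^\perp)^\perp\supseteq(H_C^\perp)^\perp=H_C$ (for $k=1$ one has $\Sigma_1=\Z_2^n\supseteq H_C$ trivially), so $\Sigma_k$ is a disjoint union of cosets of $H_C$. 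Then I trace the circuit: $\ket{\Sigma_k}\otimes\ket{\psi}^{\otimes t}$ maps under $U_{\Z_2^n}$ to $\frac{1}{\sqrt{\abs{\Sigma_k}}}\sum_{\mathbf{z}\in\Sigma_k}\ket{\mathbf{z}}\otimes R(\mathbf{z})\ket{\psi}^{\otimes t}$, and the final $H^{\otimes n}$ spreads each $\ket{\mathbf{z}}$ into $\frac{1}{\sqrt{2^n}}\sum_{\mathbf{y}}(-1)^{\mathbf{y}\cdot\mathbf{z}}\ket{\mathbf{y}}$, so the unnormalized amplitude attached to outcome $\mathbf{y}$ is the copy-register vector $\sum_{\mathbf{z}\in\Sigma_k}(-1)^{\mathbf{y}\cdot\mathbf{z}}R(\mathbf{z})\ket{\psi}^{\otimes t}$. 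Splitting this sum over $\Sigma_k$ as $\sum_{\mathbf{c}\in\Sigma_k/H_C}\sum_{\mathbf{h}\in H_C}$ and invoking coset-invariance, it factors as $\big(\sum_{\mathbf{h}\in H_C}(-1)^{\mathbf{y}\cdot\mathbf{h}}\big)\sum_{\mathbf{c}\in\Sigma_k/H_C}(-1)^{\mathbf{y}\cdot\mathbf{c}}R(\mathbf{c})\ket{\psi}^{\otimes t}$; the inner character sum over the order-four subgroup is $\abs{H_C}$ when $\mathbf{y}\in H_C^\perp$ and $0$ otherwise. Hence every $\mathbf{y}\notin H_C^\perp$ carries zero amplitude and therefore zero probability, which closes the induction and shows that every measured $\mathbf{y}^{(k)}$ lies in $H_C^\perp$.

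I do not expect a real obstacle: the content is just the familiar Abelian-HSP mechanism — a Fourier transform over $\Z_2^n$ restricted to a subspace that contains the hidden subgroup still annihilates everything not orthogonal to that subgroup. The one point that genuinely needs care is the induction: the step $\Sigma_k\supseteq H_C$, which licenses the coset grouping, is not automatic and relies on all earlier samples already being in $H_C^\perp$, so the statement must be established for all rounds simultaneously rather than for a single round in isolation. It is also worth stating explicitly that the conclusion concerns the raw measurement outcome, hence holds a fortiori for the sample that survives the rejection step on \Cref{line:RejectionSampling}.
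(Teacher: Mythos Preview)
Your proposal is correct and follows essentially the same route as the paper: induction on $k$ to establish $H_C\subseteq\Sigma_k$, then grouping the sum over $\Sigma_k$ into $H_C$-cosets so that the character sum $\sum_{\mathbf{h}\in H_C}(-1)^{\mathbf{y}\cdot\mathbf{h}}$ factors out and kills every $\mathbf{y}\notin H_C^\perp$. The only cosmetic difference is that you work directly at the amplitude level whereas the paper first squares to the probability distribution and then factors; the underlying mechanism is identical.
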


\begin{proof}
    The state prepared at the $k$-th round of \Cref{alg:adaptive} is of the form:
    \begin{equation}
        \frac{1}{2^{n-(k-1)/2}}\sum_{\mathbf{y}\in Z_2^n}\ket{\mathbf{y}}\otimes \sum_{\mathbf{z}\in \Sigma_k}(-1)^{\mathbf{z}\cdot \mathbf{y}}R(\mathbf{z})\ket{\psi}^{\otimes t},
    \end{equation}
    on which measuring the first register returns an outcome $\mathbf{y}\in\Z_2^n$ with probability:
    \begin{align}
        \mathrm{P}[\mathbf{y}] &= \frac{1}{2^{2n-k+1}}\sum_{\mathbf{z},\mathbf{z}'\in \Sigma_k} (-1)^{\mathbf{y}\cdot(\mathbf{z}\oplus\mathbf{z}')}\mel{\psi^{\otimes t}}{R(\mathbf{z}')^\dagger\,R(\mathbf{z})}{\psi^{\otimes t}} \nonumber \\
        &= \frac{1}{2^{2n-k+1}}\sum_{\mathbf{z},\mathbf{z}'\in \Sigma_k} (-1)^{\mathbf{y}\cdot(\mathbf{z}\oplus\mathbf{z}')}\mel{\psi^{\otimes t}}{R(\mathbf{z}'\oplus \mathbf{z})}{\psi^{\otimes t}} \tag{as $R$ is $\Z_2^n$-representation} \nonumber \\
        &= \frac{1}{2^n}\sum_{\mathbf{z}\in \Sigma_k} (-1)^{\mathbf{y}\cdot\mathbf{z}}\mel{\psi^{\otimes t}}{R(\mathbf{z})}{\psi^{\otimes t}} \tag{by summing over $\Sigma_k$}
    \end{align}
    where we used the fact that $\Sigma_k$ is a $(n-k+1)$-dimensional subspace of $\Z_2^n$, so it also operates as a subgroup of $\Z_2^n$ under addition.

    We notice that if a string $\mathbf{z}$ is in the subspace $\Sigma_k$, then we must have that all elements in the associated hidden coset are also in $\Sigma_k$. The argument proceeds by induction. Specifically, $\mathbf{z}\in \Sigma_k$ if it is orthogonal to previous samples: $\mathbf{z}\cdot \mathbf{y}^{(j)}=0\;\mathrm{mod}\,2$, for $j\in [k-1]$. Assume that previous samples are in the cut subspace $H_C^\perp$, meaning that $1^C0^{\overline{C}}\cdot \mathbf{y}^{(j)}=0^C1^{\overline{C}}\cdot \mathbf{y}^{(j)}=0$ for $j\in[k-1]$. Then, we also have that $(\mathbf{z}\oplus 1^C0^{\overline{C}})\cdot\mathbf{y}^{(j)} = (\mathbf{z}\oplus 0^C1^{\overline{C}})\cdot\mathbf{y}^{(j)}=(\mathbf{z}\oplus 1^n)\cdot\mathbf{y}^{(j)}=0\;\mathrm{mod}\,2$ for $j\in[k-1]$ --- in other words, if $\mathbf{z}\in \Sigma_k$, then also the rest of the coset $\mathbf{z}\oplus 1^C0^{\overline{C}},\;\mathbf{z}\oplus 0^C1^{\overline{C}},\;\mathbf{z}\oplus 1^n$ are in $\Sigma_k$. The base case for the induction is true due to the argument of the previous section which underlies \Cref{alg:main}. Another way of stating this fact is that $H_C$ remains a subgroup of all intermediate subspaces $\Sigma_k$, when viewing $\Sigma_k$ as subgroups of $Z_2^n$.

    Finally, we use the fact that $H_C$ is the hidden subgroup defining this StateHSP, which means that the inner product $\mel{\psi^{\otimes t}}{R(\mathbf{z})}{\psi^{\otimes t}}$ remains invariant when taking $\mathbf{z} \longrightarrow \mathbf{z} + \mathbf{h}$, for $\mathbf{h}\in H_C = \{0^n, 1^C0^{\overline{C}}, 0^C1^{\overline{C}},1^n\}$. Therefore we can reformulate the outcome distribution derived above in terms of the cosets of $\Sigma_k$ by the hidden subgroup $H_C$:
    \begin{align*}
        \mathrm{P}[\mathbf{y}] &= \frac{1}{2^n} \sum_{\mathbf{h}\in H_C} (-1)^{\mathbf{y}\cdot\mathbf{h}}\sum_{\mathbf{z}\in \Sigma_k/H_C}(-1)^{\mathbf{y}\cdot\mathbf{z}}\mel{\psi^{\otimes t}}{R(\mathbf{z})}{\psi^{\otimes t}} \\
        &= \frac{1+(-1)^{\mathbf{y}\cdot 1^C0^{\overline{C}}}}{2}\frac{1+(-1)^{\mathbf{y}\cdot 0^C1^{\overline{C}}}}{2}\frac{1}{2^{n-2}}\sum_{\mathbf{z}\in \Sigma_k/H_C}(-1)^{\mathbf{y}\cdot\mathbf{z}}\mel{\psi^{\otimes t}}{R(\mathbf{z})}{\psi^{\otimes t}}\\
        &=\frac{\delta_{\mathbf{y}\in H_C^\perp}}{2^{n-2}}\sum_{\mathbf{z}\in \Sigma_k/H_C}(-1)^{\mathbf{y}\cdot\mathbf{z}}\mel{\psi^{\otimes t}}{R(\mathbf{z})}{\psi^{\otimes t}},
    \end{align*}
    such that all measurement outcomes lie in the cut subspace $H_C^\perp=\{\mathbf{y}\in\Z_2^n\,:\,\mathbf{y}\cdot 1^C0^{\overline{C}}=\mathbf{y}\cdot 0^C1^{\overline{C}}=0\;\mathrm{mod}\;2\}$ by a similar mechanism as in the previous algorithm.
\end{proof}

\begin{fact}\label{fact:NewSamplesLinearlyIndependent}
    When using $t=O(1/\epsilon^2)$ state copies per sample, each new sample $\mathbf{y}^{(k)}$ is outside of the subspace $\Sigma_k^\perp$ with constant probability.
\end{fact}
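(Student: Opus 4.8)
The plan is to compute the probability $\Pr[\mathbf{y}^{(k)}\in\Sigma_k^\perp]$ directly from the measurement-outcome formula already derived in the proof of \Cref{fact:NewSamplesAreInCutSubspace}, and to show it is bounded away from $1$ by an absolute constant once $t$ is a sufficiently large multiple of $\epsilon^{-2}$; since $\Sigma_k^\perp$ already contains $0^n$ and the span of the previous samples, a sample landing outside $\Sigma_k^\perp$ is automatically nonzero and linearly independent of $\mathbf{y}^{(1)},\dots,\mathbf{y}^{(k-1)}$, which is exactly what the rejection step in \Cref{line:RejectionSampling} needs.

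First I would recall that, at round $k$, measuring the group register yields $\mathbf{y}\in\Z_2^n$ with probability $\mathrm{P}[\mathbf{y}] = \tfrac{1}{2^n}\sum_{\mathbf{z}\in\Sigma_k}(-1)^{\mathbf{y}\cdot\mathbf{z}}\,\mel{\psi^{\otimes t}}{R(\mathbf{z})}{\psi^{\otimes t}}$, and that $\mel{\psi^{\otimes t}}{R(\mathbf{z})}{\psi^{\otimes t}} = \Tr[\psi_{\mathbf{z}}^2]^{t/2}$ as in \Cref{fact:HCPOutput}. Summing over $\mathbf{y}\in\Sigma_k^\perp$ and exchanging the order of summation, the inner sum $\sum_{\mathbf{y}\in\Sigma_k^\perp}(-1)^{\mathbf{y}\cdot\mathbf{z}}$ collapses to $\abs{\Sigma_k^\perp}=2^{k-1}$ for every $\mathbf{z}\in\Sigma_k=(\Sigma_k^\perp)^\perp$, because then $\mathbf{y}\cdot\mathbf{z}=0$ identically. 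This gives the clean identity
\begin{equation*}
\Pr[\mathbf{y}^{(k)}\in\Sigma_k^\perp] = \frac{2^{k-1}}{2^n}\sum_{\mathbf{z}\in\Sigma_k}\Tr[\psi_{\mathbf{z}}^2]^{t/2}.
\end{equation*}

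Next I would split the sum over $\Sigma_k$ by membership in the hidden subgroup $H_C=\{0^n,1^C0^{\overline{C}},0^C1^{\overline{C}},1^n\}$, using that $H_C\subseteq\Sigma_k$ (established in \Cref{fact:NewSamplesAreInCutSubspace}) and that $\dim\Sigma_k=n-k+1$ exactly, since the rejection step keeps $\mathbf{y}^{(1)},\dots,\mathbf{y}^{(k-1)}$ linearly independent. For the four strings in $H_C$ the state $\psi_{\mathbf{z}}$ is pure, so $\Tr[\psi_{\mathbf{z}}^2]=1$ ($0^n$ and $1^n$ correspond to the trivial cuts $\varnothing$ and $[n]$, and $1^C0^{\overline{C}},0^C1^{\overline{C}}$ to the true cut). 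For each of the remaining $2^{n-k+1}-4$ elements, $\mathbf{z}$ encodes a nontrivial subset other than $C,\overline{C}$, so \Cref{fact:epsilonpurity} gives $\Tr[\psi_{\mathbf{z}}^2]\le 1-\epsilon^2$. Therefore
\begin{equation*}
\Pr[\mathbf{y}^{(k)}\in\Sigma_k^\perp] \le \frac{2^{k-1}}{2^n}\Big(4 + (2^{n-k+1}-4)(1-\epsilon^2)^{t/2}\Big) = 2^{k+1-n} + \big(1-2^{k+1-n}\big)(1-\epsilon^2)^{t/2}.
\end{equation*}
For $k\le n-2$ the first term is at most $\tfrac12$, and choosing $t=\lceil 4\ln 2/\epsilon^2\rceil = O(1/\epsilon^2)$ makes $(1-\epsilon^2)^{t/2}\le e^{-\epsilon^2 t/2}\le \tfrac14$, so the whole probability is at most $\tfrac34$; hence $\mathbf{y}^{(k)}\notin\Sigma_k^\perp$ with probability at least $\tfrac14$, as claimed.

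The step I expect to be the main obstacle is controlling the contribution of $\Sigma_k\setminus H_C$: this set is exponentially large ($2^{n-k+1}-4$ elements), so a naive bound on each term would be useless. The point that makes it work is that the normalization prefactor $2^{k-1}/2^n$ exactly cancels the count $2^{n-k+1}$, leaving only the harmless coefficient $1-2^{k+1-n}\le 1$ in front of $(1-\epsilon^2)^{t/2}$; this is why a constant $t=O(1/\epsilon^2)$ — rather than $t=O(n/\epsilon^2)$ as in \Cref{alg:main} — suffices per sample. The remaining checks are routine: that $\dim\Sigma_k=n-k+1$ (guaranteed inductively by the rejection sampling), that $H_C\subseteq\Sigma_k$ throughout the loop (from \Cref{fact:NewSamplesAreInCutSubspace}), and that the four strings of $H_C$ are precisely the $\mathbf{z}$ with $\Tr[\psi_{\mathbf{z}}^2]=1$, using \Cref{fact:epsilonpurity} together with the trivial purity of the empty and full subsets.
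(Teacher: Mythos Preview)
Your proof is correct and follows essentially the same approach as the paper: both sum $\mathrm{P}[\mathbf{y}]$ over $\Sigma_k^\perp$, collapse the character sum using $\mathbf{y}\cdot\mathbf{z}=0$, and then split the resulting purity sum over $\Sigma_k$ into the four elements of $H_C$ (purity $1$) versus the rest (purity $\le 1-\epsilon^2$), arriving at the identical bound $\Pr[\mathbf{y}\notin\Sigma_k^\perp]\ge\tfrac12\bigl(1-(1-\epsilon^2)^{t/2}\bigr)$. The only cosmetic difference is that the paper first reorganizes the sum by $H_C$-cosets of $\Sigma_k$ before separating the trivial coset, whereas you partition $\Sigma_k$ directly into $H_C$ and $\Sigma_k\setminus H_C$; the arithmetic is the same.
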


\begin{proof}
    We can use the derived outcome distribution from the previous \Cref{fact:NewSamplesAreInCutSubspace} to express the probability that a new sample is not in the subspace $\Sigma_k^\perp$ by its complement:
    \begin{align*}
        \mathrm{P}\left[\mathbf{y}\notin \Sigma_k^\perp\right] &= 1 - \sum_{\mathbf{y}\in \Sigma_k^\perp} \mathrm{P}[\mathbf{y}] \\
        &= 1 - \frac{1}{2^n} \sum_{\mathbf{y}\in \Sigma_k^\perp} \sum_{\mathbf{z}\in \Sigma_k} (-1)^{\mathbf{y}\cdot\mathbf{z}}\Tr[\psi_\mathbf{z}]^{t/2} \tag{In terms of purities, as in \Cref{fact:HCPOutput}}\\
        &= 1 - \frac{1}{2^{n-k+1}}\sum_{\mathbf{z}\in \Sigma_k}\Tr[\psi_\mathbf{z}]^{t/2} \tag{Since $\mathbf{y}\cdot\mathbf{z}=0$}\\
        &= 1 - \frac{1}{2^{n-k-1}}\sum_{\mathbf{z}\in \Sigma_k/H_C}\Tr[\psi_\mathbf{z}]^{t/2} \tag{Organizing the sum by cosets},
    \end{align*}
where in the last line we split the sum over the $H_C$ cosets of $\Sigma_k$, using the findings from the proof of \Cref{fact:NewSamplesAreInCutSubspace} outlined above. Next, using \Cref{fact:epsilonpurity} to bound all nontrivial purities leads to: 
    \begin{align*}
       \mathrm{P}\left[\mathbf{y}\notin \Sigma_k^\perp\right] &= 1 - \frac{1}{2^{n-k-1}}\left(1 + \sum_{0^n\neq \mathbf{z}\in \Sigma_k/H_C}\Tr[\psi_\mathbf{z}]^{t/2}\right)\tag{Separating the zero term}\\
        &\geq 1 - \frac{1}{2^{n-k-1}}\left(1 + (2^{n-k-1}-1)(1-\epsilon^2)^{t/2}\right) \tag{Using \Cref{fact:epsilonpurity}}\\
        &\geq \frac{1}{2}\left(1-(1-\epsilon^2)^{t/2}\right) \tag{Since $k\in[n-2]$}.
    \end{align*}
Therefore, a choice of $t=O(1/\epsilon^2)$ makes this probability at least a constant, which suffices for the purpose of \Cref{alg:adaptive}.
\end{proof}

\section{The special case of Haar-random states: proof of \Cref{thm:haar}}\label{sec:Haar}

In this section, we will study the hidden cut problem when the factor states are promised to be sampled independently from the Haar measure. Intuitively, Haar-random states would be at least a constant distance away from product states with high probability, such that \Cref{alg:adaptive} of the previous section can be applied to find the cut given $O(n)$ state copies. While we do not improve on the number of state copies required (and beyond a possible factor of $\log n$, no improvement should be possible at all, given the decision lower bound of \cite{jones2024testing}), in this section we show how a careful analysis can reduce the other algorithmic requirements. Specifically, instead of running our adaptive algorithm (\Cref{alg:adaptive}), we show that our first, conceptually simpler non-adaptive algorithm (\Cref{alg:main}) suffices in this case. By taking advantage of the properties of the Haar measure, we show that in this case \Cref{alg:main} finds the cut with minimal requirements, involving circuits of constant depth (as opposed to depth $O(n^2)+O(\log\epsilon^{-1})$ as required by \Cref{alg:adaptive}) acting on only two state copies at a time:

\begin{customtheorem}{\ref{thm:haar}}[Hidden cut algorithm with Haar-random states --- restated]
    Under the stronger promise of Haar-random factor states, the hidden cut can be found by the version of \Cref{alg:main} using only $O(n)$ copies of the input state, by running circuits of constant depth which coherently access only $t=2$ state copies at a time.
\end{customtheorem}

To prove this result, we will have to further analyze the details of the StateHSP Fourier sampling distribution. In particular, we will relax the strong requirement of negligible relative error between the StateHSP and HSP Fourier sampling distributions \eqref{eq:relativenegligibleerror} used in the previous sections. The Haar measure toolkit will nonetheless provide enough analytic control over the resulting distributions. Many of the technical details will be delegated to Appendices \ref{app:selfaverage} and \ref{app:NonUniformSimons}, but this section will contain the main workflow behind the proof of \Cref{thm:haar}.

To start, define the Fourier purity probabilities generated by an $n$-qubit state $\ket{\phi}$ as:
\begin{equation}\label{eq:FourierPurityProbabilities}
        \mathrm{P}(\mathbf{y};\,\phi) \equiv \frac{1}{2^n}\sum_{\mathbf{x}\in\Z_2^n}(-1)^{\mathbf{y}\cdot \mathbf{x}}\Tr[\phi_\mathbf{x}^2],\quad\text{ where }\mathbf{y}\in\Z_2^n\,,
\end{equation}
where $\Tr[\phi_\mathbf{x}^2]$ is the purity of $\ket{\phi}$ across the cut represented by the binary vector $\mathbf{x}\in\Z_2^n$, i.e.\  when tracing out the qubits in the set $S_\mathbf{x}=\{i\in[n]\,:\,x_i=1\}$. Notice that this is similar to the probabilities studied in the previous section, except that the number of copies is fixed to $t=2$.

The central fact is that these quantities self-average in a strong sense under Haar-random states, as formalized in the following lemma: 
\begin{lemma}[Self-averaging of Fourier sampling distribution]\label{lemma:fourierselfconcentrate}
    With high probability over the choice of a Haar-random state $\ket{\phi}$, the Fourier probabilities self-concentrate:
    \begin{align}
        \mathrm{P}(\mathbf{y};\,\phi) &= \E_{\psi\sim{\Haar[(\C^2)^{\otimes n}]}} \mathrm{P}(\mathbf{y};\,\psi)\left(1+\negl(n)\right)\\
                  &= \delta_{\mathbf{y}\cdot1^n=0\;\mathrm{mod}\;2}\;\frac{2\cdot 3^{n - \abs{\mathbf{y}}}}{2^n(2^n+1)}\;\left(1+\negl(n)\right)\,.
    \end{align}
    Specifically, with probability at least $1-\delta$:
    \begin{equation}
        \forall \mathbf{y}\in\Z_2^n:\quad \abs{\frac{\mathrm{P}(\mathbf{y};\,\phi)}{\E\limits_{\psi\sim{\Haar[(\C^2)^{\otimes n}]}} \mathrm{P}(\mathbf{y};\,\psi)} - 1}\leq \delta^{-1/2}\cdot 3^{-n/2+o(1)}\,.
    \end{equation}
    Therefore, we can choose $\delta=\kappa^{-n}$ for any constant $\kappa \in (1,3)$ to satisfy the conclusion.
\end{lemma}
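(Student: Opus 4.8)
The plan is to reduce the statement to exact Haar moment computations followed by a concentration argument. The starting point is that each internal purity is a $\swapjohn$-test expectation, $\Tr[\phi_\mathbf{x}^2]=\Tr[\swapjohn_{S_\mathbf{x}}\phi^{\otimes 2}]$ with $S_\mathbf{x}=\{i\in[n]:x_i=1\}$ and $\swapjohn_S=\bigotimes_{i\in S}\swapjohn_i$; plugging this into \eqref{eq:FourierPurityProbabilities}, the Boolean Fourier sum over $\mathbf x$ telescopes qubit by qubit and collapses to a tensor product of one-qubit (anti)symmetrizers,
\begin{equation*}
  \mathrm{P}(\mathbf{y};\phi)=\Tr\!\left[\Pi_\mathbf{y}\,\phi^{\otimes 2}\right],\qquad
  \Pi_\mathbf{y}=\bigotimes_{i=1}^{n}\frac{I+(-1)^{y_i}\swapjohn_i}{2},
\end{equation*}
whose $i$-th factor is the projector onto the symmetric subspace ($y_i=0$, rank $3$) or the antisymmetric subspace ($y_i=1$, rank $1$) of the two copies of qubit $i$. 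In particular $\Pi_\mathbf{y}$ is a projector and $\sum_\mathbf{y}\Pi_\mathbf{y}=I$, so for every fixed $\phi$ the numbers $\mathrm{P}(\mathbf{y};\phi)$ form a genuine probability distribution on $\Z_2^n$. It then suffices to (i) compute the mean $\mu_\mathbf{y}\coloneqq\E_\phi\mathrm{P}(\mathbf{y};\phi)$, (ii) bound $\Var_\phi[\mathrm{P}(\mathbf{y};\phi)]$, and (iii) turn the variance bound into the claimed relative deviation bound.

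For (i), the second Haar moment $\E_\phi\phi^{\otimes 2}=\Pi_{\mathrm{sym}}/\binom{2^n+1}{2}=\tfrac{I+F}{2^n(2^n+1)}$ (with $F$ the full swap of the two $n$-qubit copies) and the per-qubit factorization give $\Tr[\Pi_\mathbf{y}]=3^{\,n-\abs{\mathbf y}}$ and $\Tr[\Pi_\mathbf{y}F]=(-1)^{\abs{\mathbf y}}3^{\,n-\abs{\mathbf y}}$ (the swap acts as $\pm1$ on the symmetric/antisymmetric one-qubit factors), so $\mu_\mathbf{y}=\frac{3^{n-\abs{\mathbf y}}(1+(-1)^{\abs{\mathbf y}})}{2^n(2^n+1)}$, which is exactly the stated closed form; and when $\abs{\mathbf y}$ is odd, $\mathrm{range}(\Pi_\mathbf{y})$ lies in the antisymmetric subspace of $\mathcal{H}\otimes\mathcal{H}$, forcing $\mathrm{P}(\mathbf{y};\phi)\equiv 0$, consistent with the indicator $\delta_{\mathbf y\cdot 1^n=0}$.

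For (ii), the fourth moment $\E_\phi\phi^{\otimes 4}=\frac{1}{\prod_{j=0}^{3}(2^n+j)}\sum_{\pi\in\bbS_4}W_\pi$ (with $W_\pi$ permuting the four copies) and the same factorization yield
\begin{equation*}
  \E_\phi\mathrm{P}(\mathbf{y};\phi)^2=\frac{1}{\prod_{j=0}^{3}(2^n+j)}\sum_{\pi\in\bbS_4}\prod_{i=1}^{n}f_{y_i}(\pi),\qquad
  f_b(\pi)=\tfrac14\!\left(2^{c(\pi)}+(-1)^b2^{c((12)\pi)}+(-1)^b2^{c((34)\pi)}+2^{c((12)(34)\pi)}\right),
\end{equation*}
where $c(\cdot)$ is the number of cycles and the $24$ pairs $(f_0(\pi),f_1(\pi))$ are worked out by hand. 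Splitting $\bbS_4$ into the Young subgroup $\mathcal Y=\{e,(12),(34),(12)(34)\}$ and the remaining $20$ permutations: the $\mathcal Y$-part equals $(\Tr[\Pi_\mathbf{y}]+\Tr[\Pi_\mathbf{y}F])^2=(2^n(2^n+1))^2\mu_\mathbf{y}^2$, hence contributes $\mu_\mathbf{y}^2\bigl(1-O(2^{-n})\bigr)$ after normalization; every $\pi\notin\mathcal Y$ has $f_0(\pi)\le\tfrac12 f_0(e)=\tfrac92$ and $\abs{f_1(\pi)}\le 1$, so each such $\pi$ contributes a term that is smaller than the leading $\mu_\mathbf{y}^2$ term by an explicit exponential factor. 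Collecting everything yields a bound of the shape $\Var_\phi[\mathrm{P}(\mathbf{y};\phi)]\le\mu_\mathbf{y}^2\bigl(O(2^{-n})+O(3^{\abs{\mathbf y}-n})\bigr)$, the rate improving as $\abs{\mathbf y}$ decreases.

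For (iii), Chebyshev gives $\Pr_\phi[\,\abs{\mathrm{P}(\mathbf{y};\phi)/\mu_\mathbf{y}-1}>\eta\,]\le\Var_\phi[\mathrm{P}(\mathbf{y};\phi)]/(\eta^2\mu_\mathbf{y}^2)$, and for low-weight $\mathbf y$ — which carry essentially all of the probability mass — the relative variance is $O(2^{-n})$, so such strings are handled comfortably. The main obstacle is uniformity over \emph{all} $\mathbf y$: the relative variance degrades to $\Theta(1)$ when $\abs{\mathbf y}$ is within $O(1)$ of $n$ (the permutations $(13)(24),(14)(23)$, with $f_1=1$, are responsible), and since $\sum_{\mathbf y\in\Z_2^n}3^{\abs{\mathbf y}-n}=(4/3)^n$, a crude Chebyshev-plus-union bound over the $2^n$ strings is far too lossy. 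The remedy, as anticipated in the discussion preceding the lemma, is to treat the random vector of internal purities $\bigl(\Tr[\phi_\mathbf{x}^2]\bigr)_{\mathbf x\in\Z_2^n}$ as a whole: its covariance matrix is computed by the same $\bbS_4$ reduction and then approximately diagonalized in the Boolean Fourier basis, which controls $\Var_\phi[\mathrm{P}(\mathbf{y};\phi)]$ for all $\mathbf y$ at once and — modulo a careful treatment of the exponentially small high-weight probabilities, e.g.\ via a higher-moment or measure-concentration input — lets the $\mathbf y$-overhead be absorbed into the $o(1)$ in the target exponent, giving the stated $\delta^{-1/2}\cdot 3^{-n/2+o(1)}$ bound with $\delta=\kappa^{-n}$ admissible for any constant $\kappa\in(1,3)$. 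I would relegate this covariance diagonalization to an appendix and expect it to be the most calculation-intensive step of the proof.
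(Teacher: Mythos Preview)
Your steps (i) and (ii) are exactly the paper's approach: it too computes the mean and variance of $\mathrm{P}(\mathbf{y};\phi)$ via second and fourth Haar moments (its Facts~A.1 and~A.2) and then applies Chebyshev followed by a union bound over $\mathbf{y}\in\Z_2^n$. Your projector identity $\mathrm{P}(\mathbf{y};\phi)=\Tr[\Pi_\mathbf{y}\,\phi^{\otimes 2}]$ is a clean repackaging the paper leaves implicit. Where you diverge is step (iii): you compute $\Var_\phi[\mathrm{P}(\mathbf{y};\phi)]/\mu_\mathbf{y}^2=\Theta(3^{|\mathbf{y}|-n})$, note that $\sum_\mathbf{y}3^{|\mathbf{y}|-n}=(4/3)^n$ blows up, and defer to an unexecuted higher-moment fix. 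The paper, by contrast, asserts that Chebyshev plus union bound already yields $\sum_\mathbf{y}\Var/\mu^2\le 3^{-n+o(1)}$. That assertion traces to the variance formula in Fact~A.2, which carries a spurious extra factor of $4^{-n}$ (the prefactor $2^{2-4n}$ should read $2^{2-2n}$; one checks this at $n=2$, $\mathbf{y}=00$, where direct computation gives $\Var=\tfrac{1}{4}\Var[\Tr\phi_A^2]=3/700$ while the stated formula gives $3/11200$). With the corrected variance your obstacle is real and the paper's union-bound step does not go through as written.

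More than that, the uniform-in-$\mathbf{y}$ statement of the lemma is genuinely too strong, so your proposed fix cannot succeed either. For $\mathbf{y}=1^n$ (even $n$) the projector $\Pi_{1^n}$ has rank one, projecting onto $\ket{\Omega}=\bigotimes_i\ket{\mathrm{singlet}_i}$, and hence $\mathrm{P}(1^n;\phi)=\lvert\langle\Omega\vert\phi^{\otimes 2}\rangle\rvert^2$ is the squared modulus of a mean-zero quadratic form in $\phi$; under the Haar measure this has $\Theta(1)$ relative fluctuations, and the same holds whenever $|\mathbf{y}|=n-O(1)$. No concentration input can force the relative error to $\negl(n)$ there. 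What \emph{is} provable---and what the downstream Fact~5.2 and Appendix~B actually use---is the multiplicative bound restricted to $|\mathbf{y}|\le(1-\Omega(1))n$, for which your variance estimate $\Var/\mu^2=O(3^{|\mathbf{y}|-n})$ together with Chebyshev and a union bound over that restricted range already suffices without any further machinery.
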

\begin{proof}
    The proof follows from second-order tail bounds applied to the covariance of the internal purities of a Haar-random state. We delegate the proof details to \Cref{app:selfaverage}.
\end{proof}

Applying this fact to the factor states making up the separable input state immediately leads to the following modification of \Cref{fact:HCPOutput} in the case of Haar-random states, when we restrict the number of copies to $t=2$:
\begin{fact}\label{fact:HaarTwoCopyOutcomeDistr}
    Consider the hidden cut problem with input state $\ket{\psi}=\ket{\phi_1}_C\otimes\ket{\phi_2}_{\overline{C}}$ separable across cut $C\in\binom{[n]}{n/2}$, and assume the factor states $\ket{\phi_{1,2}}$ are independent Haar-random states on $n/2$ qubits. Then, with high probability over the Haar-random samples $\ket{\phi_{1,2}}$, the Fourier sampling probability distribution with $t=2$ copies of the state is:
    \begin{align}\label{eq:TwoCopyHaarOutcomeDistribution}
        \mathrm{P}_{\mathrm{StateHSP}_{\psi,2}}[\mathbf{y}_1^C\mathbf{y}_2^{\overline{C}}] &=  \mathrm{P}(\mathbf{y}_1;\,\phi_1)\;\mathrm{P}(\mathbf{y}_2;\,\phi_2)\\
        &= \frac{\delta_{\abs{\mathbf{y}_1}\text{ even}}\,\delta_{\abs{\mathbf{y}_2}\text{ even}}}{2^{n-2}}\times\frac{3^{n-\abs{\mathbf{y}_1}-\abs{\mathbf{y}_2}}}{\left(2^{n/2} + 1\right)^2}\;(1+\negl(n))\,,
    \end{align}
    where the first factor is the associated HSP Fourier distribution $\mathrm{P}_\mathrm{HSP}[\mathbf{y}]$ defined in \eqref{eq:benchmarHSPdistr}, which is uniform over the $(n-2)$-dimensional Boolean subspace $H_C^\perp$ induced by the cut.

    This distribution is equivalent (up to the negligible relative correction) to producing outcomes $\mathbf{y}$ by the following rejection sampling protocol: produce a sample $\mathbf{y}\in\Z_2^n$ by independently sampling each bit from a Bernoulli distribution $y_1,\dots,y_n\sim \Ber(1/4)$; keep the sample $\mathbf{y}$ if it lies inside the cut subspace $H_C^\perp$, and sample again otherwise.
\end{fact}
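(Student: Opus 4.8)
The plan is to reduce the statement to the single-system self-averaging estimate of \Cref{lemma:fourierselfconcentrate} by exploiting the product structure of the input. First I would start from the general two-copy StateHSP Fourier-sampling distribution: specializing \eqref{eq:entanglementfeature} to $t=2$ gives $\mathrm{P}_{\mathrm{StateHSP}_{\psi,2}}[\mathbf{y}] = 2^{-n}\sum_{\mathbf{x}\in\Z_2^n}(-1)^{\mathbf{x}\cdot\mathbf{y}}\Tr[\psi_\mathbf{x}^2]$. Writing $\mathbf{x}=\mathbf{x}_1^C\mathbf{x}_2^{\overline{C}}$ and $\mathbf{y}=\mathbf{y}_1^C\mathbf{y}_2^{\overline{C}}$, the product structure $\ket{\psi}=\ket{\phi_1}_C\otimes\ket{\phi_2}_{\overline{C}}$ forces the reduced states, hence the purities, to factor, $\Tr[\psi_\mathbf{x}^2]=\Tr[\phi_{1,\mathbf{x}_1}^2]\,\Tr[\phi_{2,\mathbf{x}_2}^2]$, while the phase splits as $\mathbf{x}\cdot\mathbf{y}=\mathbf{x}_1\cdot\mathbf{y}_1+\mathbf{x}_2\cdot\mathbf{y}_2$. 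Hence the double sum factors into a product of two $\Z_2^{n/2}$-Fourier transforms, giving the \emph{exact} identity $\mathrm{P}_{\mathrm{StateHSP}_{\psi,2}}[\mathbf{y}_1^C\mathbf{y}_2^{\overline{C}}]=\mathrm{P}(\mathbf{y}_1;\,\phi_1)\,\mathrm{P}(\mathbf{y}_2;\,\phi_2)$ in the notation of \eqref{eq:FourierPurityProbabilities}, using no randomness so far.

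Next I would invoke \Cref{lemma:fourierselfconcentrate} separately for each of the two $(n/2)$-qubit Haar-random factor states. For $k\in\{1,2\}$, with probability at least $1-\kappa^{-n/2}$ over $\ket{\phi_k}$ (any constant $\kappa\in(1,3)$), every outcome satisfies $\mathrm{P}(\mathbf{y}_k;\,\phi_k)=\delta_{\mathbf{y}_k\cdot 1^{n/2}=0}\cdot\frac{2\cdot 3^{n/2-\abs{\mathbf{y}_k}}}{2^{n/2}(2^{n/2}+1)}\,(1+\negl(n))$, where I use that a $\negl(n/2)$ error is still $\negl(n)$. By independence of $\ket{\phi_1},\ket{\phi_2}$ (a union bound suffices), both estimates hold simultaneously except with probability at most $2\kappa^{-n/2}=\negl(n)$, and the product of two per-outcome relative errors is again $1+\negl(n)$. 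Multiplying the two expressions and simplifying $\frac{2\cdot 3^{n/2-\abs{\mathbf{y}_1}}}{2^{n/2}(2^{n/2}+1)}\cdot\frac{2\cdot 3^{n/2-\abs{\mathbf{y}_2}}}{2^{n/2}(2^{n/2}+1)}=\frac{3^{n-\abs{\mathbf{y}_1}-\abs{\mathbf{y}_2}}}{2^{n-2}(2^{n/2}+1)^2}$ yields exactly \eqref{eq:TwoCopyHaarOutcomeDistribution}; the leading factor $\delta_{\abs{\mathbf{y}_1}\text{ even}}\,\delta_{\abs{\mathbf{y}_2}\text{ even}}/2^{n-2}$ is precisely the associated HSP distribution $\mathrm{P}_\mathrm{HSP}$ of \eqref{eq:benchmarHSPdistr}, which is uniform on $H_C^\perp$.

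For the rejection-sampling reformulation, I would observe that drawing each bit i.i.d.\ from $\Ber(1/4)$ assigns a string $\mathbf{y}\in\Z_2^n$ probability $(1/4)^{\abs{\mathbf{y}}}(3/4)^{n-\abs{\mathbf{y}}}=3^{n-\abs{\mathbf{y}}}/4^n$, a function of the Hamming weight only; conditioning on $\mathbf{y}\in H_C^\perp$ therefore yields a distribution on $H_C^\perp$ proportional to $3^{n-\abs{\mathbf{y}_1}-\abs{\mathbf{y}_2}}$. Since \eqref{eq:TwoCopyHaarOutcomeDistribution} has the same support and the same $\mathbf{y}$-dependence, the two coincide; the normalization can be checked directly from $\sum_{w\text{ even}}\binom{n/2}{w}3^{n/2-w}=\tfrac12(4^{n/2}+2^{n/2})$, which also re-derives the constant $\tfrac{1}{2^{n-2}(2^{n/2}+1)^2}$.

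The genuine difficulty is entirely contained in \Cref{lemma:fourierselfconcentrate} (whose proof is deferred to the appendix), which controls the covariance of the internal purities of a Haar-random state via a second-moment tail bound; given that lemma the present fact is a short computation, and the only points needing mild care are that $\negl(n/2)=\negl(n)$, that per-outcome relative errors multiply cleanly, and that the union bound over the two factor states costs only another $\negl(n)$.
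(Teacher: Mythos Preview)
Your proposal is correct and follows exactly the approach the paper intends: the paper simply states that this fact follows ``immediately'' by applying \Cref{lemma:fourierselfconcentrate} to the two factor states, and you have correctly fleshed out that one-line justification (exact factorization via the product structure of the purities, per-factor application of the lemma on $n/2$ qubits, union bound over the two independent Haar samples, and the elementary Bernoulli computation for the rejection-sampling reformulation). There is nothing to add.
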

We observe that, while the HSP distribution is uniform over the cut subspace $H_C^\perp$, the distribution \eqref{eq:TwoCopyHaarOutcomeDistribution} derived above is still supported inside the cut subspace $H_C^\perp$, however it is non-uniform since it skews towards smaller-weight outcomes. It remains to show that this modification does not significantly impact the number of samples required to accumulate a complete basis of the cut subspace $H_C^\perp$:
\begin{customtheorem}{\ref{thm:haar}}[Hidden cut algorithm with Haar-random states --- restated]
    When the input state $\ket{\psi}=\ket{\phi_1}_C\otimes\ket{\phi_2}_{\overline{C}}$ is a product of two $(n/2)$-qubit Haar-random factor states $\ket{\phi_1},\ket{\phi_2}$, a variation of the hidden cut algorithm finds the hidden cut with constant probability and using only $O(n)$ copies of the state, involving circuits of constant depth which coherently access only two state copies at a time.
\end{customtheorem}
{\bf\em Proof outline.}
    We delegate the full proof to \Cref{app:NonUniformSimons}, but we outline the argument here. The standard form of Simon's algorithm, including the modification relevant for \Cref{thm:main} above, relies on output distributions which are uniformly supported inside the hidden subspace. This uniformity condition makes it easy to compute the probabilities involved in the ``basis coupon collection'' process, showing that $n-k$ independent random samples can form a complete basis for an $(n-k)$-dimensional hidden subspace with constant probability. However, we are interested in the special case of purity Fourier sampling with a hidden cut state made of Haar-random factor states, which produces $n$-bit string outputs from the $\mathrm{P}_{\mathrm{StateHSP}_{\psi,2}}$ distribution defined above in \eqref{eq:TwoCopyHaarOutcomeDistribution}. As mentioned in \Cref{fact:HaarTwoCopyOutcomeDistr}, the resulting samples are not uniformly distributed inside the subspace, but they are skewed towards shorter-weight strings; the distribution is equivalent to rejection-sampling from an entrywise Bernoulli with $1/4$ probability of returning one, and keeping the sample if it lies inside the cut subspace $H_C^\perp$. The simple mathematics of Simon's coupon collection does not work anymore since the uniform assumption is violated. Our goal is to show that nonetheless, a similar conclusion still holds in this non-uniform case, such that a basis for the cut subspace can be collected in $O(n)$ samples.
    
    In fact, we will prove a slightly weaker form of the necessary basis coupon collection, but one which places us within an $O(1)$ distance to the full answer. Recall that the hidden cut subspace $H_C^\perp\subset \Z_2^n$ is of dimension $n-2$. Specifically, we will show that $n-3$ independent samples from the desired distribution \eqref{eq:TwoCopyHaarOutcomeDistribution} are linearly independent with a probability of at least one half. This is only one false cut direction away from finding the true cut. Specifically, the nullspace of the matrix $Y\in\Z_2^{(n-3)\times n}$ whose rows are the linearly independent samples has dimension three, and will contain eight vectors; two of them are the trivial cuts $0^n$ and $1^n$, leaving a number of six non-trivial candidate cuts. Two of the six candidate cuts are the two equivalent cut strings $1^C0^{\overline{C}}$ and $1^{\overline{C}}0^{C}$. Checking the six candidate cuts can be done by cut-specific SWAP tests, each SWAP test requiring a constant number of copies for a constant success probability guarantee. This is enough to show that the true cut can be found with a constant probability using $O(n)$ state copies.

    We remark that numerical evidence strongly suggests that $n-2$ i.i.d. samples from the distribution \eqref{eq:TwoCopyHaarOutcomeDistribution} form a complete basis for the cut subspace with constant probability, such that in practice the standard Simon's basis coupon collection routine succeeds in this case as well without the need to explicitly find the last basis vector via SWAP tests. We leave a formal proof of this technical conjecture about Boolean random matrix theory to future work.
\hfill\qed

\section{The many-cut case}\label{sec:HiddenManyCut}

The hidden cut algorithms from the previous section also apply naturally to the more general setting in which the input state is separable into two unequal subsystems, or indeed into more than two subsystems across an arbitrary set partition of the qubits, which we will refer to as the `hidden many-cut problem'. Finding the cut means identifying the set partition $C_1\sqcup \dots \sqcup C_m=[n]$ (where the $\sqcup$ symbol stands for disjoint union). Just as before, the number of copies used coherently to generate each Fourier sample will be chosen such that the corresponding StateHSP problem produces a similar outcome as the benchmark HSP distribution, up to negligible relative corrections in $n$. The Simon-like target distribution will now be supported on an $(n-m)$-dimensional subspace of $\Z_2^n$ orthogonal to the cut strings, where $m$ is the number of parts in the cut:
\begin{align}\label{eq:SimonManyCut}
        \mathrm{P}_{\mathrm{HSP}}[\mathbf{y}_1^{C_1}\dots\mathbf{y}_m^{C_m}]=\left\{
        \begin{array}{lr}
            2^{-n+m} & \text{if }\mathbf{y}_k\cdot1^{\abs{C_k}}=0\;\mathrm{mod}\;2,\;\text{for all}\;k\in[m] \\
            0 & \text{otherwise.}
        \end{array}
        \right.
\end{align}
Here, the notation $\mathbf{y}_1^{C_1}\dots\mathbf{y}_m^{C_m}$ represents the $n$-bit string with $\mathbf{y}_1\in\Z_2^{\abs{C_1}}$ in the positions in $C_1$, $\mathbf{y}_2\in\Z_2^{\abs{C_2}}$ in the positions in $C_2$, etc.\  A similar analysis bounding the contributions from possible false cuts applies at the level of the $m$ factor states.

Additionally, we note that the case of interest for our algorithm involves many-cuts for which all the parts are bigger than a constant. Otherwise, a naïve brute-force approach involving sequential SWAP tests of all possible qubit combinations of constant size can be peformed to iteratively discover the parts of the cut in polynomial time.
\begin{customcorollary}{\ref{cor:manycut}}[Algorithm for the many-cut problem --- restated]
Assume an $n$-qubit state $\ket{\psi}$ is separable across an unknown set partition into $m$ parts $C_1\sqcup\dots\sqcup C_m=[n]$:
\begin{equation}
    \ket{\psi} = \ket{\phi_1}_{C_1}\otimes\dots\otimes\ket{\phi_m}_{C_m}\,.
\end{equation}
Then the set partition/`many-cut' $\{C_1,\dots,C_m\}$ can be identified in polynomial time:
\begin{enumerate}
    \item[(a)] If the factor states are promised to be $\epsilon$-far in trace distance from separable, then \Cref{alg:adaptive} can identify the many-cut with $O(n/\epsilon^2)$ total number of state copies, with constant success probability. The algorithm runs circuits of depth $O(n^2)+O(\log\epsilon^{-1})$ on $O(1/\epsilon^2)$ state copies at a time.
    \item[(b)] If the factor states are Haar-random, then \Cref{alg:main} can identify the many-cut with $O(n)$ state copies with high probabulity, provided the additional constraint that the cut parts are superlogarithmic in size: $\min_{k\in[m]}\abs{C_k}> \omega(\log n)$. The algorithm runs circuits of constant depth, acting on two state copies at a time.
\end{enumerate}
\end{customcorollary}
{\bf\em Proof.}
    The proof follows the same logic as \Cref{thm:main} (see \Cref{sec:AdaptiveSubspaceAlgorithm}) and \Cref{thm:haar} (see \Cref{sec:Haar}). The outcome probability of the non-adaptive Fourier sampling circuit (i.e. the updated version of \eqref{eq:NonAdaptiveOutcomeDistribution} in the case of $m$ partitions) becomes:
    \begin{align}
        \mathrm{P}_{\mathrm{StateHSP}_{\psi,t}}[\mathbf{y}_1^{C_1}\dots\mathbf{y}_m^{C_m}] =  \mathrm{P}_{\mathrm{HSP}}[\mathbf{y}_1^{C_1}\dots\mathbf{y}_m^{C_m}]\prod_{k\in[m]}\left[\sum_{\substack{\mathbf{x}_k\in\Z_2^\abs{C_k}\\x_{k,1}=0}}(-1)^{\mathbf{y}_k\cdot\mathbf{x}_k}\Tr[\phi_{k,\mathbf{x}_k}^2]^{t/2}\right]\,,
    \end{align}
    where the benchmark HSP distribution is the one defined in \eqref{eq:SimonManyCut}. A similar triangle inequality as in \Cref{fact:HCPOutput} gives us that:
    \begin{equation}
        \abs{\mathrm{P}_{\mathrm{StateHSP}_{\psi,t}}[\mathbf{y}_1^{C_1}\dots\mathbf{y}_m^{C_m}] - \mathrm{P}_{\mathrm{HSP}}[\mathbf{y}_1^{C_1}\dots\mathbf{y}_m^{C_m}]}\leq\frac{1}{2^{n-m}}\left[\prod_{k\in[m]}\left(1 + \Delta_{\phi_k,t}\right)-1\right]\,,
    \end{equation}
    where we define, as before:
    \begin{equation}
        \Delta_{\phi_k,t} \equiv \sum_{\substack{\mathbf{x}_k\in\Z_2^\abs{C_k}\\\mathbf{x}_k\neq0^{\abs{C_k}}\\x_{k,1}=0}}\Tr[\phi_{k,\mathbf{x}_k}^2]^{t/2}\,.
    \end{equation}
    For the non-adaptive \Cref{alg:main} to efficiently find the cut, it is sufficient to choose $t$ such that all $\Delta_{\phi_k,t}$ are negligible in $n$.
    \begin{enumerate}
        \item[(a)]  If $\ket{\phi}_k$ is at least $\epsilon$-far from all separable states, then \Cref{fact:epsilonpurity} gives us that $\Tr[\phi_{k,\mathbf{x}}^2]\leq 1-\epsilon^2$ for any nontrivial cut $\mathbf{x}$, leading to an upper bound via triangle inequality:
        \begin{align}
            \Delta_{\phi_k,t}&\leq(2^{\abs{C_k}}-1)(1-\epsilon^2)^{t/2}\\
            &\leq2^{\abs{C_*}-O(t\epsilon^2)}\,.
        \end{align}
        Here, we defined the largest cut component size as $\abs{C_*}\equiv \max_{k\in[m]}\abs{C_k}$. The above can be made negligible in $n$ if $t=O(\abs{C_*}/\epsilon^2)+O(\log^2 n)$. This suffices to show how the non-adaptive \Cref{alg:main} applies to finding the many-cut with high probability with $O(n/\epsilon^2)$.

        Applying the adaptive \Cref{alg:adaptive} to the many-cut case is similarly straightforward. Just as in section \Cref{sec:AdaptiveSubspaceAlgorithm}, the adaptive algorithm will accumulate $n-m$ linearly independent vectors in the $(n-m)$-dimensional cut subspace $H_C^\perp$ by the adaptive Fourier sampling method. New samples are always in the cut subspace by the StateHSP subgroup symmetry; a number of copies $t=O(1/\epsilon^2)$ suffices to lower-bound the probability that a new sample is linearly independent with respect to previously collected samples by a constant. This in turn is enough to make sure that the algorithm terminates and succeeds to find the cut with a constant overall success probability.
        
        \item[(b)] The results of \Cref{sec:Haar} apply in this case as well, since each part $C_k$ of the partition incurs a relative error of size $O(2^{-\abs{C_k}})$. Since there are at most $n$ parts, the overall corrections remain negligible as long as each individual correction remains negligible, i.e.\  if all parts are more than logarithmic in size, i.e.\  $\min_{k\in[m]}\abs{C_k}>\omega(\log n)$.
        \hfill\qed
    \end{enumerate}

We conjecture that the stricter requirement of superlogarithmic part size in the case of Haar-random factor states can be removed with a more careful accounting of the concentration properties of Haar-random states.

Finally, we remark that one does not necessarily need to  know the number of unentangled parts which make up the input state a priori, since this is not a parameter in our algorithms. In fact, one can efficiently infer the number of parts in the many-cut by analyzing the linear independence of the obtained Fourier samples: if the input state is a product of $m$ factor states, then the rank of the accumulated samples will plateau at a value of $n-m$.

\section{Discussion, applications, and open questions}\label{sec:Discussion}

\subsection{Applications: cryptography and pseudorandomness}

As mentioned in the introduction, our hidden cut algorithm provides a no-go result for certain recursive constructions of pseudorandom states. In particular, our algorithm shows that a product of pseudorandom states across a random cut is not itself pseudorandom.
More broadly, our hidden cut algorithm prohibits pseudorandom state constructions with zero entanglement across any partition of the qubits.
However, certain generalizations of these constructions are not ruled out by our algorithm.
For example, consider a nonzero-entropy hidden cut state, for example a rank-two state of the form:
\begin{equation}
\ket{\psi} \approxeq \frac{1}{\sqrt{2}} \left(\ket{\alpha_1}_C\otimes \ket{\beta_1}_{\overline{C}} +  \ket{\alpha_2}_C\otimes \ket{\beta_2}_{\overline{C}} \right),
\end{equation}
where the tensor products are taken across a random cut $C \subset [n]$, and the sub-states $\ket{\alpha_{1,2}}$ and $\ket{\beta_{1,2}}$ are pairs of orthogonal pseudorandom states.
Such a state would have constant, but nonzero entanglement entropy across the cut; the cut remains information-theoretically detectable, and verifying the cut can still be achieved with only a constant number of copies via a standard SWAP test. Interestingly, running our algorithm on such an input state would fail to identify the hidden cut $C$. Specifically, the resulting Fourier sampling distribution obtained becomes a noisy version of the Simon's problem, such that the samples now have a constant, nonzero probability of lying outside of the cut subspace $H_C^\perp$. In other words, solving for the cut subspace would require solving a noisy system of linear equations over $\Z_2^n$ with a constant noise rate, i.e. it would require solving a version of the learning parity with noise ($\mathsf{LPN}$) problem, which is conjectured to be cryptographically hard. We leave open the question of designing an algorithm for finding the hidden cut in the nonzero cut entropy scenario, or alternatively of producing further evidence that the problem is computationally hard.

We also note the hidden cut problem might be useful for constructions of quantum money.
Here, the goal is to produce `banknote' states which are difficult to copy but easy to verify. 
One could imagine a quantum money scheme based on the hidden cut problem, in which the cut serves as the secret key, and the banknote would be composed of only a constant number of copies of the separable state.
This means our algorithm cannot be used to find the cut, since it would require a linear number of copies.
Therefore, it is possible the cut could be cryptographically protected.
On the other hand, verifying the cut only requires only a constant number of copies by a standard fixed-cut SWAP test. It remains an open question whether such a quantum money construction can be made public-key compatible. We remark that the factor states themselves, not just the location of the cut, could potentially serve a cryptographic function.

\subsection{The StateHSP framework}

Motivated by the hidden cut problem, in \Cref{sec:StateHSP} we introduced a state version of the hidden subgroup problem as a flexible framework for problems with state input which feature a hidden symmetry subgroup. 
An natural question is whether the StateHSP framework can be used to derive quantum algorithms for other quantum information tasks. 
One source of inspiration could be tasks in unitary complexity theory \cite{rosenthal2021interactive,metger2023stateqip,bostanci2023unitary}.
Alternatively, in the other direction there is the question as to whether StateHSP can give rise to cryptographic primitives via information-computation gaps.
As a corollary of our work is that StateHSP is information-theoretically solvable with enough copies and orthogonality allowance (see \Cref{cor:StateHSPInformationTheoretical}), which opens the possibility of information-computation gaps in the general case. 
Another potential avenue is to consider cases where the mechanism for orthogonality amplification by preparing poly-many state copies is not available --- in such cases, the StateHSP might become hard, while also resisting the reduction to HSP via orthogonality amplification.

\subsection{Entanglement features}

We note that a combinatorial view considering all internal purities of a given state $\psi$ organized as a so-called {\em entanglement feature} vector $\ket{W_\psi}\propto \sum_{\mathbf{x}\in\Z_2^n}\Tr[\psi_\mathbf{x}^2]\ket{\mathbf{x}}$ appears in the condensed matter literature\footnote{We thank Matteo Ippoliti for pointing out this connection.} \cite{you2018machine,you2018entanglement,fan2021self}.
While constructing the entanglement feature state $\ket{W_\psi}$ from the input state $\ket{\psi}$ seems to be generally hard, our hidden cut algorithm is able to indirectly manipulate this quantity. In particular, we remark that our \Cref{alg:main} is able to perform Fourier sampling on the moments of the entanglement feature vector (see equation \eqref{eq:entanglementfeature}). We leave it as an open question to further explore this connection.

\section*{Acknowledgments}

We thank Roozbeh Bassirian, Bill Fefferman, Soumik Ghosh, Patrick Hayden, Matteo Ippoliti, Fernando Jeronimo, Benjamin Jones, Ashley Montanaro, Henry Yuen, Chenyi Zhang, and Jack Zhou for insightful discussions.
A.B. and T.G.T. were supported in part by the U.S. DOE Office of Science under Award Number DE-SC0020377. A.B. was supported in part by the DOE QuantISED grant DE-SC0020360 and by the AFOSR under grants FA9550-21-1-0392 and FA9550-24-1-0089. J.W. is supported in part by NSF CAREER award CCF-2339711.

\newpage
\bibliographystyle{alpha}
\bibliography{arxiv}

\newpage
\appendix

\section{Internal purity covariance of Haar-random states and the self-averaging of Fourier sampling distributions}\label{app:selfaverage}

\subsection{Proof of \Cref{lemma:fourierselfconcentrate}}

    The proof of \Cref{lemma:fourierselfconcentrate} will make use of the two auxiliary results \Cref{fact:Haar1} and \Cref{fact:FourierProbStats}, detailed below. Here, we import these facts to show how they lead to the conclusion of \Cref{lemma:fourierselfconcentrate}.

    The key quantity is the collection of purity Fourier sampling probabilities induced by a state $\ket{\phi}$ \eqref{eq:FourierPurityProbabilities}, defined as:
    \begin{equation}\label{eq:FourierPurityProbabilities2}
        \mathrm{P}(\mathbf{y};\,\phi) \equiv \frac{1}{2^n}\sum_{\mathbf{x}\in\Z_2^n}(-1)^{\mathbf{y}\cdot \mathbf{x}}\Tr[\phi_\mathbf{x}^2],\quad\text{ where }\mathbf{y}\in\Z_2^n\,.
    \end{equation}
    The goal is to study how these probabilities self-average when the state $\ket{\phi}$ is fixed to a typical sample from the Haar measure on $n$-qubit states, which we will denote by $\Haar_n$ to condense notation.

    \Cref{fact:FourierProbStats} gives us explicit expression for the mean and variance of a single purity Fourier sampling probability $\mathrm{P}(\mathbf{y};\,\phi)$, when $\ket{\phi}$ is sampled from the Haar measure on $n$-qubit states. Chebyshev's inequality then allows us to show how one of these quantities concentrates:
    \begin{align}
         \Pr_{\phi\sim\Haar_n}\left[\abs{\mathrm{P}(\mathbf{y};\,\phi) - \E_{\psi\sim\Haar_n} \mathrm{P}(\mathbf{y};\,\psi)} \geq \beta \E_{\psi\sim\Haar_n} \mathrm{P}(\mathbf{y};\,\psi)\right]  &\leq \frac{\Var_{\psi\sim\Haar_n}\mathrm{P}(\mathbf{y};\,\psi)}{\beta^2  \left(\E_{\psi\sim\Haar_n}\mathrm{P}(\mathbf{y};\,\psi)\right)^2}\\
         &\leq \frac{3^{\abs{\mathbf{y}}-n} \;(1+2^{-n})\;-2^{-2n+1}}{\beta^2(6 + 5\cdot 2^n + 4^n)}\,.
    \end{align}
    Here, the states are implicitly understood to be sampled from the Haar measure on $n$-qubit states. The notation $\abs{\mathbf{y}}$ denotes the Hamming weight of the bit-string $\mathbf{y}\in\Z_2^n$. We want a typicality statement about all of the probabilities $(P_\phi(\mathbf{y}))_{\mathbf{y}\in\Z_2^n}$ induced by a single state $\ket{\phi}$ sampled from the Haar measure. This can be obtained by a simple union bound over all the outcomes $\mathbf{y}\in\Z_2^n$, resulting in:
    \begin{align}
        \Pr_{\phi\sim\Haar_n}\left[\exists\, \mathbf{y}\in\Z_2^n\,:\,\abs{\mathrm{P}(\mathbf{y};\,\phi) - \E_{\psi\sim\Haar_n} \mathrm{P}(\mathbf{y};\,\psi)} \geq \beta \E_{\psi\sim\Haar_n} \mathrm{P}(\mathbf{y};\,\psi)\right] &\leq \frac{(2/3)^{n} + (4/3)^{n} - 2^{1-n}}{\beta^2(6 + 5 \cdot 2^n + 4^n)}\\
        &\leq \frac{3^{-n + o(1)}}{\beta^2}\,,
    \end{align}
    from which the conclusion follows. \hfill\qed

\subsection{The covariance of internal purities of Haar-random states}

The proof above invokes the following two facts involving properties of the second and fourth moments of the Haar measure. The first fact involves calculating the mean and covariance of the internal purities of a Haar-random state:

\begin{fact}\label{fact:Haar1} Let $\ket{\phi}$ be a state sampled from the $n$-qubit Haar measure. Let $\Tr[\phi_\mathbf{x}^2]$ denote the purity of $\ket{\phi}$ across the cut determined by the bit-string $\mathbf{x}$, i.e.\  by tracing out the qubits in $S_\mathbf{x}\equiv\{i\in[n]\;:\;x_i=1\}$. Then we have that the average purity across a cut is:
\begin{align}
    \mathfrak{p}_\mathbf{x} &\equiv \E_{\phi\sim\Haar_n}\Tr[\phi_\mathbf{x}^2]\\
    &=\frac{2^{-\abs{\mathbf{x}}} + 2^{-n+\abs{\mathbf{x}}}}{1+2^{-n}}\,.
\end{align}
Also, the covariance between pairs of purities is given by:
\begin{align}
     \Sigma_{\mathbf{x},\mathbf{x'}} &\equiv \E_{\phi\sim\Haar_n}\Tr[\phi_\mathbf{x}^2]\Tr[\phi_{\mathbf{x'}}^2] - \E_{\phi\sim\Haar_n}\Tr[\phi_\mathbf{x}^2]\E_{\phi\sim\Haar_n}\Tr[\phi_{\mathbf{x'}}^2]\\
     &= 2\frac{2^{\abs{\mathbf{x}\oplus \mathbf{x'}}} + 2^{n-\abs{\mathbf{x} \oplus \mathbf{x'}}}}{(2^n+1)(2^n+2) (2^n+3)} - \frac{2}{(2^n+2) (2^n+3)}\mathfrak{p}_\mathbf{x}\mathfrak{p}_\mathbf{x'}\,.\label{eq:covariancepurity}
\end{align}
\end{fact}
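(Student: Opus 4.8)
The plan is to reduce both moments in \Cref{fact:Haar1} to traces of permutation operators via the ``swap trick,'' and then to evaluate those traces qubit-by-qubit. Recall that for a bipartite state with reduced density matrix $\phi_\mathbf{x}$ (obtained by tracing out $S_\mathbf{x}$) one has $\Tr[\phi_\mathbf{x}^2] = \Tr[\dyad{\phi}^{\otimes 2}\, F_{S_\mathbf{x}}]$, where $F_{S_\mathbf{x}}$ swaps the two copies on the qubits of $S_\mathbf{x}$ and acts trivially on $\overline{S_\mathbf{x}}$. Since $\E_{\phi\sim\Haar_n}\dyad{\phi}^{\otimes k}$ is the normalized projector onto the symmetric subspace of $(\C^{2^n})^{\otimes k}$, namely $\tfrac{1}{\prod_{j=0}^{k-1}(2^n+j)}\sum_{\pi\in\bbS_k}P_\pi$ with $P_\pi$ the operator permuting the $k$ copies, the two computations reduce to evaluating $\Tr[P_\pi M]$ for suitable products $M$ of partial swaps.

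For the mean ($k=2$, $M=F_{S_\mathbf{x}}$) this needs only $\Tr[F_{S_\mathbf{x}}]=2^{2n-\abs{\mathbf{x}}}$ and $\Tr[P_{(12)}F_{S_\mathbf{x}}]=\Tr[F_{\overline{S_\mathbf{x}}}]=2^{n+\abs{\mathbf{x}}}$ (composing the full swap with the partial swap on $S_\mathbf{x}$ leaves the partial swap on the complement), which gives $\mathfrak{p}_\mathbf{x}=\tfrac{2^{2n-\abs{\mathbf{x}}}+2^{n+\abs{\mathbf{x}}}}{2^n(2^n+1)}$, equal to the stated closed form after dividing numerator and denominator by $2^{2n}$.

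For the covariance, take $k=4$, use copies $1,2$ for $\Tr[\phi_\mathbf{x}^2]$ and copies $3,4$ for $\Tr[\phi_{\mathbf{x}'}^2]$, and write their product as $\Tr[\dyad{\phi}^{\otimes 4}\, F^{(12)}_{S_\mathbf{x}}F^{(34)}_{S_{\mathbf{x}'}}]$; the two partial swaps commute since they touch disjoint copies. The essential simplification is that every operator here factorizes over the $n$ physical qubits, so the trace factorizes into a product over $i\in[n]$ of single-qubit traces; on one qubit $F^{(12)}$ and $F^{(34)}$ are just $P_{(12)}$ and $P_{(34)}$ on $(\C^2)^{\otimes 4}$, and $\Tr_{(\C^2)^{\otimes 4}}[P_\sigma]=2^{\cyc(\sigma)}$, so qubit $i$ contributes $2^{\cyc(\pi\,(12)^{x_i}(34)^{x_i'})}$. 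Grouping qubits by the value of $(x_i,x_i')\in\{0,1\}^2$ and writing $n_{ab}:=\#\{i: x_i=a,\ x_i'=b\}$, we get
\begin{equation*}
\E_{\phi\sim\Haar_n}\!\Tr[\phi_\mathbf{x}^2]\Tr[\phi_{\mathbf{x}'}^2]
= \frac{1}{\prod_{j=0}^{3}(2^n+j)}\sum_{\pi\in\bbS_4} 2^{\,n_{00}\cyc(\pi)\,+\,n_{10}\cyc(\pi(12))\,+\,n_{01}\cyc(\pi(34))\,+\,n_{11}\cyc(\pi(12)(34))}.
\end{equation*}
The remaining work is to enumerate the $24$ elements of $\bbS_4$ (by conjugacy class), tabulate the quadruples $(\cyc(\pi),\cyc(\pi(12)),\cyc(\pi(34)),\cyc(\pi(12)(34)))$, and substitute $n_{00}+n_{10}+n_{01}+n_{11}=n$, $\abs{\mathbf{x}}=n_{10}+n_{11}$, $\abs{\mathbf{x}'}=n_{01}+n_{11}$, $\abs{\mathbf{x}\oplus\mathbf{x}'}=n_{10}+n_{01}$, so that each exponent becomes an affine function of $n$, $\abs{\mathbf{x}}$, $\abs{\mathbf{x}'}$, $\abs{\mathbf{x}\oplus\mathbf{x}'}$. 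Collecting terms, only the four elements $(13)(24)$, $(14)(23)$, $(1324)$, $(1423)$ yield exponents depending on $\abs{\mathbf{x}\oplus\mathbf{x}'}$ (they contribute $2\cdot 2^{2n-\abs{\mathbf{x}\oplus\mathbf{x}'}}$ and $2\cdot 2^{n+\abs{\mathbf{x}\oplus\mathbf{x}'}}$, i.e.\ the $2^{\abs{\mathbf{x}\oplus\mathbf{x}'}}+2^{n-\abs{\mathbf{x}\oplus\mathbf{x}'}}$ numerator after normalization), while the remaining twenty terms reassemble, using the mean formula, into exactly $d(d+4)(d+1)^2\,\mathfrak{p}_\mathbf{x}\mathfrak{p}_{\mathbf{x}'}$ with $d=2^n$; dividing by $\prod_{j=0}^3(2^n+j)$ and subtracting $\mathfrak{p}_\mathbf{x}\mathfrak{p}_{\mathbf{x}'}$ then yields \eqref{eq:covariancepurity}.

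The only real obstacle is the bookkeeping in this last step: getting all $24\times 4$ cycle counts right and recognizing that a sum of ten distinct powers of $2$ collapses into the two-term-plus-$\mathfrak{p}_\mathbf{x}\mathfrak{p}_{\mathbf{x}'}$ form. Two consistency checks guard against arithmetic slips: invariance of the $\bbS_4$-sum under the copy relabeling $(13)(24)$, which must exchange $\mathbf{x}\leftrightarrow\mathbf{x}'$; and the diagonal case $\mathbf{x}=\mathbf{x}'$, which must reproduce the variance of a single Haar-random subsystem purity known from Page-type computations. No representation theory beyond the symmetric-subspace moment formula is needed.
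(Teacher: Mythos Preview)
Your proposal is correct and follows essentially the same approach as the paper: swap trick to write purities as traces of partial-swap operators on $\dyad{\phi}^{\otimes 2}$ or $\dyad{\phi}^{\otimes 4}$, Haar moment formula $\E\dyad{\phi}^{\otimes k}\propto\sum_{\pi\in\bbS_k}P_\pi$, then qubit-wise factorization reducing everything to $\sum_{\pi}2^{\sum_{ab} n_{ab}\,\cyc(\pi\,(12)^a(34)^b)}$. The paper stops at that expression and simply asserts ``the closed-form expressions follow from explicit calculation by summing over $\sigma\in\bbS_4$,'' whereas you go further and correctly identify that exactly the four permutations $(13)(24),(14)(23),(1324),(1423)$ carry the $\abs{\mathbf{x}\oplus\mathbf{x}'}$-dependence while the remaining twenty reassemble into $d(d+1)^2(d+4)\,\mathfrak{p}_\mathbf{x}\mathfrak{p}_{\mathbf{x}'}$---this is the right organization and your consistency checks are apt.
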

\begin{proof}
    Averages involving state purities can be turned into averages over the corresponding symmetric subspace starting with the following reformulation:
    \begin{equation}\label{eq:reformulationSym}
        \Tr[\phi_\mathbf{\mathbf{x}}^2] = \Tr[\bigotimes_{i:\,x_i=1}R_i((1\;2))\cdot \dyad{\phi}^{\otimes 2}]\,,
    \end{equation}
    which follows the notation from \Cref{sec:HiddenCut}. Specifically, $R_i(\sigma)$ applied the permutation $\sigma\in\bbS_t$ to the $i$-th qubit across the $t$ copies. The above equality involving the simple purity $\Tr[\phi_\mathbf{\mathbf{x}}^2]$ corresponds to the special case $t=2$. Next, we use the well-known equality between the Haar $t$-fold ensemble and the maximally mixed state over the $t$-fold symmetric subspace (see for example \cite{harrow2013church}), which in our notation translates to:
    \begin{equation}
        \E_{\phi\sim\Haar_n}\dyad{\phi}^{\otimes t} = \frac{(2^n - 1)!}{(2^n + t - 1)!}\sum_{\sigma \in \bbS_t}R_1(\sigma)\otimes \dots\otimes R_n(\sigma).
    \end{equation}
    Applying this fact, we have that the average purity is:
    \begin{align}
       \E_{\phi\sim\Haar_n}\Tr[\phi_\mathbf{\mathbf{x}}^2] &= \Tr[\bigotimes_{i:\,x_i=1}R_i((1\;2))\cdot  \E_{\phi\sim\Haar_n}\dyad{\phi}^{\otimes 2}] \\
       &= \frac{1}{2^n(2^n+1)} \sum_{\sigma\in\bbS_2} \Tr[\bigotimes_{i:\,x_i=1}R_i((1\;2)\cdot\sigma)\otimes\bigotimes_{j:\,x_j=0}R_j(\sigma)] \\
       &= \frac{1}{2^n(2^n+1)} \sum_{\sigma\in\bbS_2} \prod_{i:\,x_i=1}\Tr[R_i((1\;2)\cdot\sigma)]\prod_{j:\,x_j=0}\Tr[R_j(\sigma)] \\
       &= \frac{1}{2^n(2^n+1)}\sum_{\sigma\in\bbS_2}2^{\abs{\mathbf{x}}\cyc((1\;\;2)\cdot \sigma) + (n-\abs{\mathbf{x}})\cyc(\sigma)}\,,
    \end{align}
    where in the final line we used the fact that $\Tr[R_i(\sigma)]=2^{\cyc(\sigma)}$. Here, $\cyc(\sigma)$ denotes the number of cycles in the permutation $\sigma$. The explicit sum over $\sigma \in \bbS_2 = \{(1)(2),\;(1\;2)\}$ leads to the closed-form result:
    \begin{align}
    \mathfrak{p}_\mathbf{x} &\equiv \E_{\phi\sim\Haar_n}\Tr[\phi_\mathbf{x}^2]\\
    &=\frac{2^{-\abs{\mathbf{x}}} + 2^{-n+\abs{\mathbf{x}}}}{1+2^{-n}}\,.
    \end{align}
    To calculate the covariance entries we will need a fourth-moment calculation. This is because the same technique used in \eqref{eq:reformulationSym} above can be used to rewrite the product of two purities in terms of four copies of the state:
    \begin{equation}
         \Tr[\phi_\mathbf{\mathbf{x}}^2]\Tr[\phi_\mathbf{\mathbf{x'}}^2] = \Tr[\bigotimes_{i:\,x_i=1}R_i((1\;2))\bigotimes_{j:\,x'_j=1}R_i((3\;4))\cdot \dyad{\phi}^{\otimes 4}]\,,
    \end{equation}
    where in this case $t=4$ because the column-wise permutations $R_i(\sigma)$ represent $\sigma\in\bbS_4$. We apply the same workflow from above to translate the average over 4 copies of the Haar-random state $\ket{\phi}$ to a combinatorial sum over the symmetric group $\bbS_4$, leading to:
    \begin{equation}
        {\scriptstyle \E\limits_{\phi\sim\Haar_n} \Tr[\phi_\mathbf{x}^2]\Tr[\phi_{\mathbf{x'}}^2] = \frac{1}{2^n(2^n+1)(2^n+2)(2^n+3)}\sum\limits_{\sigma\in\bbS_4}2^{\abs{\mathbf{x}\setminus \mathbf{x'}}\cyc((1\;\;2)\sigma) + \abs{\mathbf{x}\cap \mathbf{x'}}\cyc((1\;\;2)(3\;\;4)\sigma)+\abs{\mathbf{x'}\setminus \mathbf{x}}\cyc((3\;\;4)\sigma)+(n-\abs{\mathbf{x}\cup \mathbf{x'}})\cyc(\sigma)}}\,.
    \end{equation}
    In the above, we introduced set notation for binary strings in a natural sense, meaning that $\mathbf{x}\setminus \mathbf{x'}\equiv\{i\in[n]\;:\;x_i=1\;\text{and}\;x'_i=0\}$, also $\mathbf{x}\cap \mathbf{x'}\equiv\{i\in[n]\;:\;x_i=1\;\text{and}\;x'_i=1\}$, as well as $\mathbf{x}\cup \mathbf{x'}=\{i\in[n]\;:\;x_i=1\;\text{or}\;x'_i=1\}$. From here, the closed-form expressions for the covariance matrix entries $\Sigma_{\mathbf{x},\mathbf{x'}}$ follow from explicit calculation by summing over the permutations $\sigma \in \bbS_4$.
\end{proof}

\begin{fact}\label{fact:FourierProbStats}
The average and variance of the purity Fourier probabilities $\mathrm{P}(\mathbf{y};\,\phi)$ when the state $\ket{\phi}$ is sampled from the Haar measure are:
    \begin{align}
        \E_{\phi\sim\Haar_n} \mathrm{P}(\mathbf{y};\,\phi) &=  \frac{3^{n - \abs{\mathbf{y}}} (1 + (-1)^y)}{2^n(2^n+1)} = \frac{2\cdot 3^{n-\abs{\mathbf{y}}}}{2^n(2^n+1)}\delta_{\abs{\mathbf{y}}\text{ even}}\\
        \Var_{\phi\sim\Haar_n} \mathrm{P}(\mathbf{y};\,\phi) &= \frac{2^{2-4 n} \cdot 3^{n-2 \abs{\mathbf{y}}} \left(2^n \left(2^n+1\right) 3^{\abs{\mathbf{y}}}-2\cdot 3^n\right)}{\left(2^n+1\right)^2\left(2^n+2\right) \left(2^n+3\right)}\delta_{\abs{\mathbf{y}}\text{ even}}.
    \end{align}
\end{fact}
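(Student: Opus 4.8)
The plan is to obtain both formulas by a direct computation, feeding the first- and second-moment data of \Cref{fact:Haar1} into the definition \eqref{eq:FourierPurityProbabilities} and evaluating the resulting Walsh--Hadamard sums. The single elementary identity that does essentially all the work is the coordinatewise factorization
\begin{equation*}
    \sum_{\mathbf{x}\in\Z_2^n}(-1)^{\mathbf{y}\cdot\mathbf{x}}\,a^{\abs{\mathbf{x}}} \;=\; \prod_{i=1}^n\bigl(1+(-1)^{y_i}a\bigr) \;=\; (1+a)^{\,n-\abs{\mathbf{y}}}\,(1-a)^{\abs{\mathbf{y}}},
\end{equation*}
valid for any scalar $a$, which I will use with $a=1/2$ and $a=2$. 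Note this is exactly where the $\delta_{\abs{\mathbf{y}}\text{ even}}$ will come from: the two pieces of $\mathfrak{p}_\mathbf{x}$ produce terms proportional to $1$ and to $(-1)^{\abs{\mathbf{y}}}$, which cancel for odd $\abs{\mathbf{y}}$.

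\emph{Mean.} By linearity of expectation, $\E_{\phi\sim\Haar_n}\mathrm{P}(\mathbf{y};\phi)=\frac{1}{2^n}\sum_{\mathbf{x}}(-1)^{\mathbf{y}\cdot\mathbf{x}}\,\mathfrak{p}_\mathbf{x}$ with $\mathfrak{p}_\mathbf{x}$ from \Cref{fact:Haar1}. Substituting $\mathfrak{p}_\mathbf{x}=(2^{-\abs{\mathbf{x}}}+2^{-n+\abs{\mathbf{x}}})/(1+2^{-n})$ and applying the identity above with $a=1/2$ to the first summand and $a=2$ to the second gives $\sum_\mathbf{x}(-1)^{\mathbf{y}\cdot\mathbf{x}}2^{-\abs{\mathbf{x}}}=3^{\,n-\abs{\mathbf{y}}}/2^n$ and $\sum_\mathbf{x}(-1)^{\mathbf{y}\cdot\mathbf{x}}2^{-n+\abs{\mathbf{x}}}=3^{\,n-\abs{\mathbf{y}}}(-1)^{\abs{\mathbf{y}}}/2^n$; collecting the prefactor $\tfrac{1}{2^n}\cdot\tfrac{1}{1+2^{-n}}=\tfrac{1}{2^n+1}$ yields $\E\mathrm{P}(\mathbf{y};\phi)=3^{\,n-\abs{\mathbf{y}}}\bigl(1+(-1)^{\abs{\mathbf{y}}}\bigr)/\bigl(2^n(2^n+1)\bigr)$, as claimed.

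\emph{Variance.} Expand $\Var_{\phi\sim\Haar_n}\mathrm{P}(\mathbf{y};\phi)=\frac{1}{2^{2n}}\sum_{\mathbf{x},\mathbf{x'}}(-1)^{\mathbf{y}\cdot(\mathbf{x}\oplus\mathbf{x'})}\,\Sigma_{\mathbf{x},\mathbf{x'}}$ and split $\Sigma_{\mathbf{x},\mathbf{x'}}$ into the two pieces of \eqref{eq:covariancepurity}. The first piece depends only on $\mathbf{w}\coloneqq\mathbf{x}\oplus\mathbf{x'}$, so summing over $\mathbf{x}$ at fixed $\mathbf{w}$ contributes a factor $2^n$ and reduces it to $\sum_\mathbf{w}(-1)^{\mathbf{y}\cdot\mathbf{w}}(2^{\abs{\mathbf{w}}}+2^{n-\abs{\mathbf{w}}})=3^{\,n-\abs{\mathbf{y}}}\bigl(1+(-1)^{\abs{\mathbf{y}}}\bigr)$ by the same identity. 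The second piece factorizes via $(-1)^{\mathbf{y}\cdot(\mathbf{x}\oplus\mathbf{x'})}=(-1)^{\mathbf{y}\cdot\mathbf{x}}(-1)^{\mathbf{y}\cdot\mathbf{x'}}$ into $\bigl(\sum_\mathbf{x}(-1)^{\mathbf{y}\cdot\mathbf{x}}\mathfrak{p}_\mathbf{x}\bigr)^2=\bigl(2^n\,\E\mathrm{P}(\mathbf{y};\phi)\bigr)^2$, already evaluated above. Assembling the two contributions over the common denominator $(2^n+1)^2(2^n+2)(2^n+3)$ and simplifying produces the stated closed form, with the factor $1+(-1)^{\abs{\mathbf{y}}}$ (respectively its square) again enforcing $\delta_{\abs{\mathbf{y}}\text{ even}}$.

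The main point is that there is no real conceptual obstacle here: the genuinely hard input — the fourth-moment Haar integral behind the covariance $\Sigma_{\mathbf{x},\mathbf{x'}}$ — has already been carried out in \Cref{fact:Haar1}, and what remains is bookkeeping of elementary binomial/Walsh sums. The only delicate step is correctly tracking the powers of $2^n$ and the factors $(2^n+1),(2^n+2),(2^n+3)$ when combining the two variance contributions over a common denominator and checking that the cross-term produces the advertised numerator; a convenient self-consistency check is that the resulting ratio $\Var\,\mathrm{P}(\mathbf{y};\phi)\big/\bigl(\E\,\mathrm{P}(\mathbf{y};\phi)\bigr)^2$ must reproduce the bound $\bigl(3^{\abs{\mathbf{y}}-n}(1+2^{-n})-2^{1-2n}\bigr)/(6+5\cdot2^n+4^n)$ used in the Chebyshev step of the proof of \Cref{lemma:fourierselfconcentrate}.
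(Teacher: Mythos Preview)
Your proposal is correct and takes essentially the same approach as the paper. The paper's own proof is a one-line stub (``explicit calculations with the results of \Cref{fact:Haar1}, applied to the definition \eqref{eq:FourierPurityProbabilities2}''), and your write-up simply fleshes out exactly those calculations via the coordinatewise Walsh--Hadamard identity and the two-term split of $\Sigma_{\mathbf{x},\mathbf{x'}}$.
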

\begin{proof}
    The proof follows from explicit calculations with the results of \Cref{fact:Haar1}, applied to the definition \eqref{eq:FourierPurityProbabilities2} of the probabilities $\mathrm{P}(\mathbf{y};\,\phi)$.
\end{proof}

\begin{remark}
    While numerical evidence suggests a much stronger self-averaging result applies at the level of each purity $\Tr[\phi_\mathbf{x}^2]$ of a Haar-random sample, applying the above Chebyshev tail + naïve union bound to individual purities fails to confirm this. However, one reason why this simple approach succeeds in \Cref{lemma:fourierselfconcentrate} to bound the Fourier probabilities $\mathrm{P}(\mathbf{y};\,\phi)$ is that, in fact, the purity covariance matrix $\Sigma_{\mathbf{x},\mathbf{x'}}=\Cov_{\phi}[\Tr\phi_\mathbf{x}^2,\;\Tr\phi_{\mathbf{x'}}^2]$ computed in \Cref{fact:Haar1} above is approximately diagonalized by the Boolean Fourier transform. Letting $F=H^{\otimes n}$ be the Boolean Fourier transform unitary:
    \begin{equation}
        F_{\mathbf{y},\mathbf{x}} = \frac{1}{2^{n/2}}(-1)^{\mathbf{y}\cdot\mathbf{x}}\,,
    \end{equation}
    then we have that in the Fourier basis the purity covariance is almost diagonal:
    \begin{equation}
        (F\Sigma F^\dagger)_{\mathbf{y},\mathbf{y'}} = \frac{4\cdot 3^{n-\abs{\mathbf{y}}}}{(2^n+1)(2^n+2)(2^n+3)}\left(\delta_{\mathbf{y},\mathbf{y'}} - \frac{2}{2^n(2^n+1)}3^{n-\abs{\mathbf{y'}}}\right)\delta_{\abs{\mathbf{y}}\text{ even}}\delta_{\abs{\mathbf{y'}} \text{ even}}.
    \end{equation}
    In other words, the covariance matrix $\Sigma_{\mathbf{x},\mathbf{x'}}$ is dominated by the first term in equation \eqref{eq:covariancepurity}.
\end{remark}

\newpage
\section{Simon's algorithm with non-uniform samples and the hidden cut algorithm for Haar product states}\label{app:NonUniformSimons}

Recall that sampling $\mathbf{y}\in\Z_2^n$ from the distribution $\mathrm{P}_{\mathrm{StateHSP}_{\psi,2}}$ (which we will denote $\mathrm{P}_{\psi,2}$ for ease of notation) defined in \eqref{eq:TwoCopyHaarOutcomeDistribution} is equivalent to sampling each of the $n$ entries i.i.d. from a Bernoulli with $1/4$ probability of yielding 1 and $3/4$ probability of yielding 0, and afterwards keeping the sample if it has even weight on both sides of the cut. To restate the explicit form of the distribution \eqref{eq:TwoCopyHaarOutcomeDistribution}:
\begin{equation}\label{eq:TwoCopyHaarOutcomeDistribution2}
    \mathrm{P}_{\psi,2}[\mathbf{y}_1^{C}\mathbf{y}_2^{\overline{C}}] = \delta_{\abs{\mathbf{y}_1}\text{ even}}\,\delta_{\abs{\mathbf{y}_2}\text{ even}}\left(\frac34\right)^n 3^{-\abs{\mathbf{y}_1}-\abs{\mathbf{y}_2}}\;(1+O(2^{-n/2}))\,.
\end{equation}
Let us first prove two helpful statements which will build towards \Cref{thm:haar}. The first fact is a convenient simplification:

\begin{fact}\label{fact:decoupledBernoullis}
    Assume $p\leq O(n)$. Averaging over $p$ samples from the $\mathrm{P}_{\psi,2}$ cut-specific distribution can be replaced with averaging over the simpler distribution in which the entries are independent Bernoullis, up to a negligible relative correction:
    \begin{align}
    \E_{\mathbf{y}^{(1)},\dots,\mathbf{y}^{(p)}\sim\mathrm{P}_{\psi,2}}\mathrm{P}_{\psi,2}\left[\mathbf{y}^{(1)}\oplus\dots\oplus\mathbf{y}^{(p)}\right] &= \E_{\mathbf{y}^{(1)},\dots,\mathbf{y}^{(p)}\sim \Ber(1/4)^n}\mathrm{P}_{\psi,2}\left[\mathbf{y}^{(1)}\oplus\dots\oplus\mathbf{y}^{(p)}\right]\left(1+O(2^{-n/2})\right)\\
        &=q_p\left(1+O(2^{-n/2})\right),
    \end{align}
    where we define the useful quantity:
    \begin{equation}\label{eq:q_p}
        q_{p}\equiv 4\left(\frac{2+2^{-p}}{4}\right)^n\,.
    \end{equation}
\end{fact}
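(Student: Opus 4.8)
\textbf{Proof plan for \Cref{fact:decoupledBernoullis}.}

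The plan is to establish the two claimed equalities separately. For the first equality, I would compare sampling from $\mathrm{P}_{\psi,2}$ versus sampling from $\Ber(1/4)^n$ (i.e., the unconditioned distribution before the parity-rejection step). By \eqref{eq:TwoCopyHaarOutcomeDistribution2}, the distribution $\mathrm{P}_{\psi,2}$ is exactly the $\Ber(1/4)^n$ distribution conditioned on the event $E_C$ that the weight is even on both sides of the cut $C$, up to the overall $(1+O(2^{-n/2}))$ Haar-typicality factor. The key observation is that $\Pr_{\mathbf{y}\sim\Ber(1/4)^n}[E_C] = \left(\tfrac{1+(1/2)^{|C|}}{2}\right)\left(\tfrac{1+(1/2)^{|\overline C|}}{2}\right)$, which for $|C|=|\overline C|=n/2$ equals $\tfrac14(1+O(2^{-n/2}))$. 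So $\mathrm{P}_{\psi,2}[\mathbf{z}] = 4\,\Ber(1/4)^n[\mathbf{z}]\cdot \mathbf 1[\mathbf{z}\in H_C^\perp]\cdot(1+O(2^{-n/2}))$. Now I would rewrite the left-hand average: each $\mathbf{y}^{(j)}$ drawn from $\mathrm{P}_{\psi,2}$ lies in $H_C^\perp$ with probability one, hence so does the XOR $\mathbf{y}^{(1)}\oplus\cdots\oplus\mathbf{y}^{(p)}$, so the indicator $\mathbf 1[\cdot\in H_C^\perp]$ inside $\mathrm{P}_{\psi,2}[\,\cdot\,]$ is automatically satisfied. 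Pulling out the conditioning factor for each of the $p$ samples introduces a factor $(1+O(2^{-n/2}))^p = 1+O(2^{-n/2})$ since $p\le O(n)$; this lets me replace each $\mathbf{y}^{(j)}\sim\mathrm{P}_{\psi,2}$ in the outer expectation by $\mathbf{y}^{(j)}\sim\Ber(1/4)^n$ (the indicator restricting to $H_C^\perp$ is still enforced but now it is only inside the $\mathrm{P}_{\psi,2}[\cdot]$ term being averaged, which again forces the XOR into $H_C^\perp$; one must check the bookkeeping of which indicator lives where, but it comes out consistently because the events are nested). This yields the first equality.

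For the second equality, I would compute $\E_{\mathbf{y}^{(1)},\dots,\mathbf{y}^{(p)}\sim\Ber(1/4)^n}\mathrm{P}_{\psi,2}[\mathbf{y}^{(1)}\oplus\cdots\oplus\mathbf{y}^{(p)}]$ directly and show it equals $q_p(1+O(2^{-n/2}))$. Using $\mathrm{P}_{\psi,2}[\mathbf{z}] = 4\,\Ber(1/4)^n[\mathbf{z}]\cdot\mathbf 1[\mathbf{z}\in H_C^\perp]\cdot(1+O(2^{-n/2}))$ and that each bit is independent across coordinates, the whole expectation factorizes over the $n$ coordinate positions. For a single coordinate $i$, let $y^{(1)}_i,\dots,y^{(p)}_i\overset{\text{iid}}{\sim}\Ber(1/4)$ and $z_i = y^{(1)}_i\oplus\cdots\oplus y^{(p)}_i$; I need $\E[\,3^{-z_i}\cdot(\text{contribution of the }4\text{ and }(3/4)^n\text{ factors})]$. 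The per-coordinate quantity is $\E\big[(3/4)\cdot 3^{-z_i}\big]$ times the appropriate normalization, and $\Pr[z_i=1] = \tfrac{1-(1/2)^p}{2}$, $\Pr[z_i=0]=\tfrac{1+(1/2)^p}{2}$ by the standard XOR-of-biased-bits formula (since $1-2\cdot\tfrac14 = \tfrac12$, the bias of a $\Ber(1/4)$ bit in $\pm1$ language is $1/2$, and it raises to the $p$-th power). Carrying the constants: $\E[(3/4)^n 3^{-z_1-\cdots-z_n}] = \prod_{i=1}^n \tfrac34\E[3^{-z_i}] = \left(\tfrac34\left(\tfrac13\cdot\tfrac{1-2^{-p}}{2} + \tfrac{1+2^{-p}}{2}\right)\right)^n$, and simplifying the inner bracket gives $\tfrac{2+2^{-p}}{4}$ after multiplying by $\tfrac34$ appropriately; together with the overall factor $4$ from the normalization this is exactly $q_p = 4\left(\tfrac{2+2^{-p}}{4}\right)^n$. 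The parity-indicator $\mathbf 1[\mathbf{z}\in H_C^\perp]$ does not need separate treatment here because on the $\Ber(1/4)^n$ side the XOR $\mathbf z$ need not lie in $H_C^\perp$ — but this is precisely where the first equality's bookkeeping matters: after that reduction the indicator is gone, or more carefully, the discrepancy between "conditioning all $p$ samples" and "conditioning the XOR only" is again $O(2^{-n/2})$ and gets absorbed.

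The main obstacle I anticipate is getting the indicator/conditioning bookkeeping exactly right — precisely tracking, in the chain of substitutions, which expectations carry the $H_C^\perp$ restriction on the individual samples versus on their XOR, and verifying that every time a conditioning event is dropped or moved the incurred multiplicative error is genuinely $O(2^{-n/2})$ (using $|C|=|\overline C|=n/2$ so that $\Pr[E_C]$ is bounded away from both $0$ and from its limit $1/4$ by only an exponentially small amount, and using $p\le O(n)$ to control $(1+O(2^{-n/2}))^p$). The per-coordinate algebra leading to $q_p$ is routine once the XOR-of-Bernoulli probability $\Pr[z_i=0]=\tfrac{1+2^{-p}}{2}$ is in hand, so I do not expect difficulty there; the care is entirely in the error accounting of the reduction to independent coordinates.
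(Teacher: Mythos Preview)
Your plan tracks the paper's proof closely: both arguments (i) replace each sample $\mathbf{y}^{(j)}\sim\mathrm{P}_{\psi,2}$ by $\mathbf{y}^{(j)}\sim\Ber(1/4)^n$ one at a time, accumulating $p$ multiplicative corrections, and then (ii) evaluate the resulting Bernoulli expectation by factorizing over coordinates using $\Pr[z_i=0]=\tfrac{1+2^{-p}}{2}$. Your second half is correct and essentially identical to the paper's derivation of $q_p$.

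The gap is in your first equality. You correctly observe that $\mathrm{P}_{\psi,2}[\mathbf{z}]=4\cdot\Ber(1/4)^n[\mathbf{z}]\cdot\mathbf{1}[\mathbf{z}\in H_C^\perp]\cdot(1+O(2^{-n/2}))$, and that $\Pr_{\Ber}[E_C]=\tfrac14(1+O(2^{-n/2}))$. But ``pulling out the conditioning factor'' only controls the \emph{normalization}: it gives
\[
\E_{\mathbf{y}\sim\mathrm{P}_{\psi,2}}\bigl[3^{-|\mathbf{y}\oplus\mathbf{z}|}\bigr]
=\frac{1+O(2^{-n/2})}{\Pr_{\Ber}[E_C]}\,\E_{\Ber}\bigl[3^{-|\mathbf{y}\oplus\mathbf{z}|}\mathbf{1}[\mathbf{y}\in H_C^\perp]\bigr],
\]
and to get $\E_{\Ber}[3^{-|\mathbf{y}\oplus\mathbf{z}|}]$ on the right you still need $\E_{\Ber}[3^{-|\mathbf{y}\oplus\mathbf{z}|}\mathbf{1}_{E_C}]\approx\Pr[E_C]\cdot\E_{\Ber}[3^{-|\mathbf{y}\oplus\mathbf{z}|}]$, i.e.\ that the function $3^{-|\mathbf{y}\oplus\mathbf{z}|}$ is approximately uncorrelated with the parity event. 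Neither ``the events are nested'' nor the $(1+O(2^{-n/2}))^p$ observation establishes this --- nesting only says $\{\text{all }\mathbf{y}^{(j)}\in H_C^\perp\}\Rightarrow\{\oplus\in H_C^\perp\}$, which goes the wrong way for what you need. The paper fills exactly this gap by an explicit combinatorial computation of $\E_{\mathbf{y}\sim\mathrm{P}_{\psi,2}}3^{-|\mathbf{y}\oplus\mathbf{z}|}$ for fixed $\mathbf{z}$, showing it matches the Bernoulli value up to $(1+O(2^{-n/2}))$ uniformly in $\mathbf{z}$; that uniformity is what licenses the iterative replacement. You could achieve the same thing more cheaply by writing $\mathbf{1}[|\mathbf{y}_1|\text{ even}]=\tfrac12(1+\prod_i(-1)^{y_{1,i}})$ and noting that the cross term factorizes coordinatewise with a factor $\E_{\Ber(1/4)}[3^{y-1}(-1)^y]=0$ at every position where $z_{1,i}=1$ --- but some such computation is required, and your proposal as written skips it.
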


\begin{proof}
    Taking a fixed $\mathbf{z}=\mathbf{z}_1^C\mathbf{z}_2^{\overline{C}}\in\Z_2^n$ of even weight on both sides of the cut (i.e.\  $\abs{\mathbf{z}_1},\abs{\mathbf{z}_2}$ even), consider the quantity:
    \begin{align}
        \E_{\mathbf{y}\sim\mathrm{P}_{\psi,2}} 3^{-\abs{\mathbf{y}\oplus \mathbf{z}}} &=\sum_{0\leq d_1,d_2\leq n/2}\sum_{\substack{0\leq b\leq \abs{\mathbf{z}_1}\\0\leq b\leq \abs{\mathbf{z}_2}}}\Pr_{\mathbf{y}_1^C\mathbf{y}_2^{\overline{C}}}\left[\abs{\mathbf{y}_1}=2b_1+d_1-\abs{\mathbf{z}_1}\;\cap\;\abs{\mathbf{y}_2}=2b_2+d_2-\abs{\mathbf{z}_2}\right]\;3^{-d_1-d_2},
    \end{align}
    where we define $d_1=\abs{\mathbf{y}_1\oplus \mathbf{z}_1}$ and $b_1=\abs{\mathbf{y}_1\cap\mathbf{z}_1}=\abs{\{i\in[n/2]\,:\,y_{1,i}=z_{1,i}=1\}}$ (and similarly for $d_2$, $b_2$). Notice from \eqref{eq:TwoCopyHaarOutcomeDistribution2} that the distribution over $\mathbf{y}$ completely factorizes over the two sides of the cut $\mathbf{y}_1$ and $\mathbf{y}_2$, and so does the above sum, so it is enough to study only one side of the cut. We can therefore express explicitly, using \eqref{eq:TwoCopyHaarOutcomeDistribution2}:
    \begin{align}
        \sum_{\substack{0\leq d_1\leq n/2\\0\leq b\leq \abs{\mathbf{z}_1}}}\Pr_{\mathbf{y}_1}\left[\abs{\mathbf{y}_1}=2b_1+d_1-\abs{\mathbf{z}_1}\right]\;3^{-d_1} &= 2\frac{3^{n/2-\abs{\mathbf{z}_1}}}{2^{n/2}\left(2^{n/2}+1\right)}\sum_{\substack{0\leq d_1\leq n/2\;\text{even}\\0\leq b_1\leq \abs{\mathbf{z}_1}}}\binom{\abs{\mathbf{z}_1}}{b_1}\binom{\frac{n}{2} - \abs{\mathbf{z}_1}}{d_1-\abs{\mathbf{z}_1}+b_1}3^{-2b_1}.
    \end{align}
    There are two observations. First, since $\abs{\mathbf{y}_1}=2b_1+d_1-\abs{\mathbf{z}_1}$ has to be even, and since we assume $\abs{\mathbf{z}_1}$ is also even, then this restricts $d_1$ to only take even values. Second, the $d_1$ in the exponent cancels out, and the sum over $d_1$ becomes easy, since it is exactly half of a sum over a full set of binomial coefficients, leading to:
    \begin{align}
        \sum_{\substack{0\leq d_1\leq n/2\\0\leq b\leq \abs{\mathbf{z}_1}}}\Pr_{\mathbf{y}_1}\left[\abs{\mathbf{y}_1}=2b_1+d_1-\abs{\mathbf{z}_1}\right]\;3^{-d_1} &= \frac{3^{n/2-\abs{\mathbf{z}_1}}2^{-\abs{\mathbf{z}_1}}}{2^{n/2}\left(2^{n/2}+1\right)}\sum_{0\leq b_1\leq \abs{\mathbf{z}_1}}\binom{\abs{\mathbf{z}_1}}{b_1}3^{-2b_1}\\
        &=\frac{3^{n/2}}{2^{n/2}\left(2^{n/2}+1\right)}\left(\frac{5}{27}\right)^{\abs{\mathbf{z}_1}}.
    \end{align}
    It is helpful to notice that this is negligibly close to the result obtained when $\mathbf{y}$ is sampled not from the $\mathrm{P}_{\mathrm{StateHSP}_{\psi,2}}$ distribution, but from the much simpler distribution in which each entry is i.i.d. sampled from a $\Ber(1/4)$ distribution, i.e.\  removing the parity constraints. In that case, the result is:
    \begin{align}
        \sum_{\substack{0\leq d_1\leq n/2\\0\leq b\leq \abs{\mathbf{z}_1}}}\Pr_{\mathbf{y}_1\sim\Ber(1/4)^{n/2}}\left[\abs{\mathbf{y}_1}=2b_1+d_1-\abs{\mathbf{z}_1}\right]\;3^{-d_1}
        &=\left(\frac{3}{4}\right)^{n/2}\left(\frac{5}{27}\right)^{\abs{\mathbf{z}_1}}.
    \end{align} 
    This means that:
    \begin{align}
         \E_{\mathbf{y}\sim\mathrm{P}_{\mathrm{StateHSP}_{\psi,2}}} 3^{-\abs{\mathbf{y}\oplus \mathbf{z}}} = \E_{\mathbf{y}\sim\Ber(1/4)^n} 3^{-\abs{\mathbf{y}\oplus \mathbf{z}}}\,\left(1+O(2^{-n/2})\right).
    \end{align}
    We can apply this result to estimate the more useful quantity:
    \begin{align}
        \E_{\mathbf{y}^{(1)},\dots,\mathbf{y}^{(p)}\sim \mathrm{P}_{\mathrm{StateHSP}_{\psi,2}}}3^{-\abs{\mathbf{y}^{(1)}\oplus\dots\oplus\mathbf{y}^{(p)}}}.
    \end{align}
    Replacing the $\mathrm{P}_{\mathrm{StateHSP}_{\psi,2}}$ distribution with independent entrywise Bernoullis simplifies the calculation significantly, since:
    \begin{equation}
        \abs{\mathbf{y}^{(1)}\oplus\dots\oplus\mathbf{y}^{(p)}} = \sum_{i=1}^n \mathds{1}\left[y^{(1)}_i+\dots+y^{(p)}_i=1\;\mathrm{mod}\;2\right].
    \end{equation}
    This means that, with the unrestricted Bernoulli distribution, the average from above can be factored across the indices $\{1,\dots,n\}$, leading to the closed-form expression:
    \begin{align}
        \E_{\mathbf{y}^{(1)},\dots,\mathbf{y}^{(p)}\sim\Ber(1/4)^n}3^{-\abs{\mathbf{y}^{(1)}\oplus\dots\oplus\mathbf{y}^{(p)}}} &= \left(\E_{\mathbf{u}\sim\Ber(1/4)^p}3^{-\left(\mathbf{u}\cdot1^{p}\;\mathrm{mod}\;2\right)}\right)^n \\
        &=\left(\frac34\right)^{pn}\left(\sum_{\substack{0\leq u \leq p\\u\text{ even}}}\binom{p}{u}3^{-u}+\sum_{\substack{0\leq u \leq p\\u\text{ odd}}}\binom{p}{u}3^{-u-1}\right)^n\\
        &=\left(\frac{2+2^{-p}}{3}\right)^n
    \end{align}
    Equivalently, let us define:
    \begin{align}
        \E_{\mathbf{y}^{(1)},\dots,\mathbf{y}^{(p)}\sim\mathrm{P}_{\mathrm{StateHSP}_{\psi,2}}}\mathrm{P}_{\mathrm{StateHSP}_{\psi,2}}\left[\mathbf{y}^{(1)}\oplus\dots\oplus\mathbf{y}^{(p)}\right] &= 4\left(\frac{2+2^{-p}}{4}\right)^n\left(1+O(2^{-n/2})\right)\\
        &= q_p\left(1+O(2^{-n/2})\right)\,.
    \end{align}
\end{proof}

Second, let us prove a lower bound on the relevant quantity we are aiming to estimate:

\begin{fact}\label{fact:ConstantProbabilityLinearIndependent}
    Let us define the probability that $k$ samples are linearly independent as vectors over $\Z_2^n$:
    \begin{align}
        \pi_k\equiv \Pr_{\mathbf{y}^{(1)},\dots,\mathbf{y}^{(k)}\sim\mathrm{P}_{\psi,2}}\left[(\mathbf{y}^{(j)})_{j\in[k]}\,\text{lin. indep.}\right],
    \end{align}
    where $\{\mathbf{y}^{(1)},\dots,\mathbf{y}^{(k)}\}\subset H_C^\perp$ are understood to be $k$ independent samples from the $\mathrm{P}_{\psi,2}$ distribution \eqref{eq:TwoCopyHaarOutcomeDistribution2}. Then we have the lower bound:
    \begin{equation}
        \pi_k \geq 1 - \sum_{p=1}^k \binom{k}{p}q_{p-1}\left(1+O(2^{-n/2})\right)\,.
    \end{equation}
    In particular, it follows that that $\pi_{n-3}\geq 1/2 + O(\negl(n))$.
\end{fact}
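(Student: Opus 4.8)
The plan is to combine a union bound over $\mathbb{F}_2$-linear dependencies with the moment estimate of \Cref{fact:decoupledBernoullis}. Over $\Z_2^n$, the samples $\mathbf{y}^{(1)},\dots,\mathbf{y}^{(k)}$ fail to be linearly independent exactly when $\bigoplus_{j\in T}\mathbf{y}^{(j)}=\mathbf{0}$ for some nonempty $T\subseteq[k]$, so, grouping the $\binom kp$ subsets of each size $p$ and using that they are i.i.d.,
\begin{equation*}
    1-\pi_k \;\le\; \sum_{p=1}^k \binom kp \,\Pr_{\mathbf{y}^{(1)},\dots,\mathbf{y}^{(p)}\sim \mathrm{P}_{\psi,2}}\!\Big[\,\bigoplus_{j=1}^p \mathbf{y}^{(j)}=\mathbf{0}\,\Big].
\end{equation*}
For fixed $p\ge 2$, condition on $\mathbf{y}^{(1)},\dots,\mathbf{y}^{(p-1)}$: the event becomes $\mathbf{y}^{(p)}=\bigoplus_{j\le p-1}\mathbf{y}^{(j)}$, which occurs with probability $\mathrm{P}_{\psi,2}\!\big[\bigoplus_{j\le p-1}\mathbf{y}^{(j)}\big]$, so the $p$-th summand equals $\E_{\mathbf{y}^{(1)},\dots,\mathbf{y}^{(p-1)}}\mathrm{P}_{\psi,2}\!\big[\bigoplus_{j\le p-1}\mathbf{y}^{(j)}\big]=q_{p-1}(1+O(2^{-n/2}))$ by \Cref{fact:decoupledBernoullis}. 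The $p=1$ term is just $\mathrm{P}_{\psi,2}[\mathbf{0}]=(3/4)^n(1+O(2^{-n/2}))\le q_0(1+O(2^{-n/2}))$ directly from \eqref{eq:TwoCopyHaarOutcomeDistribution2}. This gives the first claimed inequality $\pi_k\ge 1-\sum_{p=1}^k\binom kp q_{p-1}(1+O(2^{-n/2}))$.

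For the second statement I would set $k=n-3$ and estimate $\sum_{p=1}^{n-3}\binom{n-3}{p}q_{p-1}$ via the rewriting $q_{p-1}=4\cdot 2^{-n}(1+2^{-p})^n$, which follows from $\tfrac{2+2^{-(p-1)}}{4}=\tfrac12+2^{-p-1}$. Split the sum at a threshold $P$ with $\log n\ll P\ll n/\log n$, say $P=\lfloor\sqrt n\rfloor$. For $p>P$ one has $(1+2^{-p})^n\le e^{n2^{-P}}=1+\negl(n)$, hence $q_{p-1}\le 4\cdot 2^{-n}(1+\negl(n))$ and
\begin{equation*}
\sum_{P<p\le n-3}\binom{n-3}{p}q_{p-1}\;\le\; 4\cdot 2^{-n}(1+\negl(n))\,2^{n-3}\;=\;\tfrac12+\negl(n).
\end{equation*}
For $1\le p\le P$, use monotonicity $q_{p-1}\le q_0=4(3/4)^n$ and $\binom{n-3}{p}\le n^{\sqrt n}$, so this part is at most $\sqrt n\cdot n^{\sqrt n}\cdot 4(3/4)^n=\negl(n)$ because $\sqrt n\ln n=o(n)$. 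Thus $\sum_{p=1}^{n-3}\binom{n-3}{p}q_{p-1}\le \tfrac12+\negl(n)$, and substituting into the first inequality (absorbing the uniform $1+O(2^{-n/2})$ factor) gives $\pi_{n-3}\ge \tfrac12-\negl(n)$, i.e. $\tfrac12+O(\negl(n))$, as claimed.

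The routine part is the union bound and the conditioning step, which reduce everything to the already-computed $q_{p-1}$. The only real care is the tail of $\sum_p\binom{n-3}{p}q_{p-1}$: the small-$p$ terms carry the exponentially large factor $(1+2^{-p})^n$ (up to $(3/2)^n$ at $p=1$), and one must verify this is killed by the $2^{-n}$ prefactor together with the (quasi-polynomial) binomial coefficients — hence the threshold split rather than a single crude bound. It is worth flagging that the choice $k=n-3$ is precisely what makes the dominant contribution $4\cdot 2^{-n}\cdot 2^{k}=\tfrac12<1$; taking $k=n-2$ would give $1$ and a vacuous bound, which is exactly why the argument stops one basis vector short of spanning $H_C^\perp$ and why \Cref{thm:haar} finishes by checking the remaining handful of cut candidates with SWAP tests.
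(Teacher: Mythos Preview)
Your proof is correct and follows essentially the same approach as the paper: a union bound over linear dependencies reduced to the moment estimate $q_{p-1}$ from \Cref{fact:decoupledBernoullis}, followed by a threshold split of $\sum_p\binom{k}{p}q_{p-1}$. The only cosmetic differences are that the paper reaches the same inequality via a one-step recursion $\pi_k\ge\pi_{k-1}-\sum_{p=0}^{k-1}\binom{k-1}{p}q_p$ and then unfolds it (your direct subset union bound is equivalent via the hockey-stick identity), and that the paper splits at $\lfloor\log^2 n\rfloor$ with a Hoeffding-type tail on the small-$p$ binomials whereas you split at $\lfloor\sqrt n\rfloor$ with the cruder $\binom{n-3}{p}\le n^{\sqrt n}$; both choices work.
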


\begin{proof}  Explicitly write the probability $\pi_k$ as the sum over linear independent combinations of vectors:
    \begin{align}
        \pi_k &= \sum_{\substack{\mathbf{y}^{(1)},\dots,\mathbf{y}^{(k)}\in\Z_2^n\\\mathbf{y}^{(1)},\dots,\mathbf{y}^{(k)}\,\text{ lin.indep.}}} \mathrm{P}_{\psi,2}\left[\mathbf{y}^{(1)}\right]\dots \mathrm{P}_{\psi,2}\left[\mathbf{y}^{(k)}\right] \\
        & = \sum_{\substack{\mathbf{y}^{(1)},\dots,\mathbf{y}^{(k-1)}\in\Z_2^n\\\mathbf{y}^{(1)},\dots,\mathbf{y}^{(k-1)}\,\text{ lin.indep.}}} \mathrm{P}_{\psi,2}\left[\mathbf{y}^{(1)}\right]\dots \mathrm{P}_{\psi,2}\left[\mathbf{y}^{(k-1)}\right]\sum_{\substack{\mathbf{y}^{(k)}\in\Z_2^n\\\mathbf{y}^{(k)}\notin\mathrm{span}\left\{\mathbf{y}^{(j)}\right\}_{j=1}^{k-1}}}\mathrm{P}_{\psi,2}\left[\mathbf{y}^{(k)}\right]\\
        &= \sum_{\substack{\mathbf{y}^{(1)},\dots,\mathbf{y}^{(k-1)}\in\Z_2^n\\\mathbf{y}^{(1)},\dots,\mathbf{y}^{(k-1)}\,\text{ lin.indep.}}} \mathrm{P}_{\psi,2}\left[\mathbf{y}^{(1)}\right]\dots \mathrm{P}_{\psi,2}\left[\mathbf{y}^{(k-1)}\right]\left(1-\sum_{a\in\B^{k-1}}\mathrm{P}_{\psi,2}\left[a_1\mathbf{y}^{(1)}\oplus\dots\oplus a_{k-1}\mathbf{y}^{(k-1)}\right]\right)\\
        &=\pi_{k-1} - \sum_{p=0}^{k-1}\binom{k-1}{p}\sum_{\substack{\mathbf{y}^{(1)},\dots,\mathbf{y}^{(k-1)}\in\Z_2^n\\\mathbf{y}^{(1)},\dots,\mathbf{y}^{(k-1)}\,\text{ lin.indep.}}}\mathrm{P}_{\psi,2}\left[\mathbf{y}^{(1)}\right]\dots\mathrm{P}_{\psi,2}\left[\mathbf{y}^{(k-1)}\right]\mathrm{P}_{\psi,2}\left[\mathbf{y}^{(1)}\oplus\dots\oplus\mathbf{y}^{(p)}\right]\\
        &\geq \pi_{k-1} - \sum_{p=0}^{k-1}\binom{k-1}{p}q_p\left(1+O(2^{-n/2})\right)\,.\label{eq:pi_k_lowerbound}
    \end{align}
    The fourth line comes from symmetry, and the final line comes from bounding:
    \begin{align}
        & \sum_{\substack{\mathbf{y}^{(1)},\dots,\mathbf{y}^{(k-1)}\in\Z_2^n\\\mathbf{y}^{(1)},\dots,\mathbf{y}^{(k-1)}\,\text{ lin.indep.}}}\mathrm{P}_{\psi,2}\left[\mathbf{y}^{(1)}\right]\dots\mathrm{P}_{\psi,2}\left[\mathbf{y}^{(k-1)}\right]\mathrm{P}_{\psi,2}\left[\mathbf{y}^{(1)}\oplus\dots\oplus\mathbf{y}^{(p)}\right] &\\
        &\leq \sum_{\mathbf{y}^{(1)},\dots,\mathbf{y}^{(k-1)}\in\Z_2^n}\mathrm{P}_{\psi,2}\left[\mathbf{y}^{(1)}\right]\dots\mathrm{P}_{\psi,2}\left[\mathbf{y}^{(k-1)}\right]\mathrm{P}_{\psi,2}\left[\mathbf{y}^{(1)}\oplus\dots\oplus\mathbf{y}^{(p)}\right]&\\
        &= \E_{\mathbf{y}^{(1)},\dots,\mathbf{y}^{(p)}\sim\mathrm{P}_{\psi,2}}\mathrm{P}_{\psi,2}\left[\mathbf{y}^{(1)}\oplus\dots\oplus\mathbf{y}^{(p)}\right]&\\
        &= q_p \left(1+O(2^{-n/2})\right). &\text{(using \Cref{fact:decoupledBernoullis})}
    \end{align}
    Therefore, unfolding the recursion \eqref{eq:pi_k_lowerbound} down to the base $\pi_0= 1$, we get the desired lower bound:
    \begin{align}
        \pi_k &\geq 1 - \sum_{\l=0}^{k-1}\sum_{p=0}^{\l}\binom{\l}{p}q_{p}\left(1+O(2^{-n/2})\right) \\
        &= 1 - \sum_{p=0}^{k-1} \binom{k}{p+1}q_{p}\left(1+O(2^{-n/2})\right).\label{eq:pi_klowerbound}
    \end{align}
    Given the explicit form for the $q_p$ quantities derived previously in \eqref{eq:q_p}, we have that the choice of $k = \Theta(n)$ means the above lower bound for $\pi_k$ is well-approximated by taking $q_p\approx 2^{2-n}$. Specifically, assume $k=\Theta(n)$, and upper bound the relevant sum as:
    \begin{align}
        \sum_{p=0}^{k-1} \binom{k}{p+1}q_{p} &= 2^{2-n}\sum_{p=1}^{k} \binom{k}{p}\left(1+2^{-p}\right)^n\\
        &\leq 2^{2-n}\left(\frac{3}{2}\right)^n\sum_{p=0}^{\lfloor\log^2(n)\rfloor} \binom{k}{p} + 2^{2-n}\left(1+\frac{1}{2^{\lfloor\log^2(n)\rfloor}}\right)^n\sum_{p=\lfloor\log^2(n)\rfloor+1}^k \binom{k}{p},
    \end{align}
    where we have simply split the sum at an appropriate term of order polylogarithmic in $n$. The first contribution can be upper bounded by a typical H\"{o}ffding tail bound of the binomial distribution, and the second contribution can be bounded by a standard binomial sum. This results in:
    \begin{align}
        \sum_{p=0}^{k-1} \binom{k}{p+1}q_{p} &\leq 2^{2+k-n}\left[\exp(-\frac{k}{2}+n\ln\frac{3}{2}+O(\polylog(n))) + \left(1+\frac{1}{2^{\lfloor\log^2(n)\rfloor}}\right)^n\right]\,.
    \end{align}
    The second term above is negligibly close to one. Also, when $k=n-b$ for a constant $b>0$, the first term is exponentially decaying in $n$, since $\frac12 - \ln\frac32\approx 0.0945\,>\,0$. This means that in this case, the lower bound derived above \eqref{eq:pi_klowerbound} becomes:
    \begin{align}
        \pi_{n-b}\geq 1 - 2^{2-b}\left(1 + \negl(n))\right),
    \end{align}
    which in particular means that $\pi_{n-3} \geq 1/2$ up to a negligible correction.
\end{proof}

We have the necessary ingredients to assemble the proof of \Cref{thm:haar}:

{\bf\em Proof of \Cref{thm:haar}.} In the main text, we have shown in \Cref{fact:HaarTwoCopyOutcomeDistr} that, with high probability over the Haar-random factor states, purity Fourier sampling with two copies of the input state with a hidden cut yields samples from the $\mathrm{P}_{\psi,2}$ probability distribution \eqref{eq:TwoCopyHaarOutcomeDistribution2}. This distribution is supported (non-uniformly) inside the $(n-2)$-dimensional hidden cut subspace $H_C^\perp$, which is defined in \eqref{eq:Sigma_C_subspace_def} as the subspace orthogonal to the equivalent cut strings $1^C0^{\overline{C}}$ and $0^{C}1^{\overline{C}}$:
\begin{equation}
    H_C^\perp = \left\{\mathbf{y}\in\Z_2^n\,:\,\mathbf{y}\cdot 1^C0^{\overline{C}} = \mathbf{y}\cdot0^{C}1^{\overline{C}}=0\;\mathrm{mod}\;2\right\}.
\end{equation}
\Cref{fact:ConstantProbabilityLinearIndependent} proven above gives us that, with probability at least one half (up to negligible corrections), $n-3$ i.i.d. samples $\mathbf{y}^{(1)},\dots,\mathbf{y}^{(n-3)}$ from the $\mathrm{P}_{\psi,2}$ distribution will be linearly independent as vectors in $\Z_2^n$. Assembling these samples as the rows of a Boolean matrix $Y=(\mathbf{y}^{(1)},\dots,\mathbf{y}^{(n-3)})^T \in \Z_2^{(n-3)\times n}$, we have that with probability at least one half the nullspace of $Y$ is the three-dimensional Boolean subspace:
\begin{equation}
    \mathrm{nullspace}\; Y = \mathrm{span}\left\{1^C0^{\overline{C}},\,1^n,\,\mathbf{b}\right\}\,,
\end{equation}
spanned by the two equivalent cut strings and an additional arbitrary vector $\mathbf{b}\in\Z_2^n$. Given the samples, the nullspace can be determined in polynomial time by simple Boolean linear algebra methods. Excluding the trivial vectors $0^n$ and $1^n$, finding the correct nullspace therefore is equivalent to narrowing down the possible hidden cuts to six non-trivial candidates. Each of these candidates can be individually tested up to a constant confidence interval with a constant number of copies of the input state by standard single-cut SWAP tests. In other words, a constant number of copies is required at the end to find the final vector which completes the basis for the cut subspace. This suffices to show that the correct hidden cut can be found with $O(n)$ total number of copies with constant probability, with only two copies at a time consumed either by the purity Fourier sampling or by the final standard SWAP tests.
\hfill\qed\\

\begin{remark}
    Numerical evidence strongly indicates that a complete basis for the cut subspace $H_C^\perp$ can be obtained with constant probability directly  from $n-2$ i.i.d. samples from the $\mathrm{P}_{\psi,2}$ distribution \eqref{eq:TwoCopyHaarOutcomeDistribution2}. This means that, in practice, the standard Simon's protocol (for example, as used in \Cref{thm:main}) will still return a correct answer when applied to this non-uniform distribution. This does not change the $O(n)$ requirement in terms of number of copies, but would remove the need for additional SWAP tests. However, proving this version of the result would likely require a more involved Boolean random matrix analysis, which we leave to future work.
\end{remark}
\end{document}